\documentclass{lmcs} %%% last changed 2014-08-20
\pdfoutput=1

% LMCS Layouting Macros
\usepackage{lastpage}
\lmcsdoi{16}{4}{17}
\lmcsheading{}{\pageref{LastPage}}{}{}%
{Apr.~14,~2020}{Dec.~15,~2020}{}

%% mandatory lists of keywords 
\keywords{Dependent Type Theory, Guarded Recursion, Coinductive Types, Denotational Semantics,
Modal Types}

%% read in additional TeX-packages or personal macros here:
%% e.g. \usepackage{tikz}

%%\input{myMacros.tex}
%% define non-standard environments BEYOND the ones already supplied 
%% here, for example
\theoremstyle{plain} %\crefname{satz}{Satz}{S\"atze}
%% Do NOT replace the proclamation environments lready provided by
%% your own.

%\date{\today}

%\usepackage{microtype}%if unwanted, comment out or use option "draft"
\usepackage{amsmath,amsfonts}
\usepackage{amssymb} % provides \mathbb
\usepackage{ifxetex}
\usepackage[T1]{fontenc}
\usepackage[utf8]{inputenc}
\usepackage[cal=boondox]{mathalfa}
\usepackage{stmaryrd}
\usepackage{url}
\usepackage{tikz-cd}
\usepackage{amsthm}
\usepackage{mathpartir}

\newenvironment{diagram}{\begin{tikzcd}[sep=large]}{\end{tikzcd}}
\newenvironment{proofof}[1]
{\begin{proof}[Proof of {#1}]}
{\end{proof}}

% fonts
\newcommand\sym[1]{\mathsf{#1}}

% ct

\newcommand{\cat}[1]{\mathcal{#1}}

\newcommand{\id}{\mathsf{id}}

\newcommand{\set}{\mathsf{Set}}

\newcommand{\inv}[1]{#1^{-1}}

\newcommand{\catT}{\mathbb{T}}

\newcommand{\R}{\mathsf{R}}
\newcommand{\RTy}{\mathsf{R_{Fam}}}
\newcommand{\RTm}{\mathsf{R_{El}}}
\newcommand{\transp}[1]{\overline{#1}}

\newcommand{\tick}{\mathsf{tick}}
\newcommand{\cminus}[2]{#1^{[#2-]}}

% groupoids

\newcommand{\opcat}[1]{{{#1}^{\mathrm{op}}}}

\newcommand{\triple}[3]{(#1;#2;#3)}
\newcommand{\timeobj}[2]{(#1;#2)}

\newcommand{\tickmap}[1]{\mathrm{tick}^{#1}}

\newcommand{\uniquemap}{!}

% Macros for elements of Tcat
\newcommand{\FSA}{\mathcal{E}}

\renewcommand{\vartheta}{\delta}

% Macros for triples 
\newcommand{\TobjA}{\triple{\FSA}\vartheta \lambda}
\newcommand{\TobjB}{\triple{\FSA'}{\vartheta'} {\lambda'}}
\newcommand{\TobjAminus}{\triple{\FSA}{\vartheta[\lambda-]}\lambda}
\newcommand{\TobjBminus}{\triple{\FSA'}{\vartheta'[\lambda'-]} {\lambda'}}

% names of categories
\newcommand{\grtotal}{\mathsf{GR}}
\newcommand{\gr}[1]{\mathsf{GR}[#1]}

\newcommand{\Set}{\mathsf{Set}}

\newcommand{\grstar}{\grtotal_\star}

\newcommand{\TStar}{\mathbb{T}_\star}
\newcommand{\TStarZ}{\mathbb{T}_\star^\ints}

% CwFs

\newcommand{\cwffont}[1]{\mathrm{#1}}

\newcommand{\typ}{\mathsf{Fam}}
\newcommand{\trm}{\mathsf{El}}
\newcommand{\cwftm}[3]{#1 \vdash #2 : #3}
\newcommand{\cwfty}[2]{#1 \vdash #2}

\newcommand{\p}[0]{\mathsf{p}}
\newcommand{\q}[0]{\mathsf{q}}
\newcommand{\compr}[2]{#1.#2}
\newcommand{\cpair}[2]{\la #1, #2\ra}

\newcommand{\cwfev}[2]{\mathsf{ev}(#1, #2)}

\newcommand{\idfam}[3]{\mathsf{Id}_{#1}(#2, #3)}

% Semantic universes

\newcommand{\Elsem}[1]{\mathcal{E}l^{#1}}
\newcommand{\Usem}[1]{\mathcal{U}^{#1}}

\newcommand{\code}[1]{\ulcorner #1 \urcorner}%{\ensuremath{\overline{#1}}}

\newcommand{\Usemin}[3]{\cwffont{in}_{#2, #3}^{#1}}

% Special families and elements
\newcommand{\clkfam}[1][\Gamma]{\clk_{#1}}

% Metavariable for elements on Clk
\newcommand{\clkel}{\chi}
\newcommand{\clkset}{\chi}
\newcommand{\clkmap}{\langle\chi\rangle}
\newcommand{\clkmapB}{\langle\chi'\rangle}

% Element for next

\newcommand{\dfixe}{\mathsf{dfix}}
\newcommand{\dfixebar}{\widehat{\mathsf{dfix}}}

% Next as a natural transformation
\newcommand{\nxt}{\mathsf{next}}

%Since all 
%\newcommand{\emp}{()}
%\newcommand{\ty}[1]{\mathsf{Ty}\left(#1\right)}
%\newcommand{\sty}[1]{\mathsf{sTy}\left(#1\right)}
%\newcommand{\tm}[2]{\mathsf{Tm}_{#1}{#2}}
%\newcommand{\ls}{\mathsf{L}}
%\newcommand{\rs}{\mathsf{R}}
%\newcommand{\rt}{\mathsf{r}}
%\newcommand{\lt}{\mathsf{l}}
%\newcommand{\il}[1]{\left(#1\right)} % σ : Δ → Γ   &  A ∈ \ty{Γ}  ⊢  \il{σ} : Δ.Aσ → Γ.A
%\newcommand{\ir}[1]{\left[#1\right]} % a ∈ \tm{Γ}{A}              ⊢  \ir{σ} : Γ    → Γ.A
%                          % \ir{·} : \tm{Γ}{A} ≃ \slice{\hom}{Γ}(Γ,Γ.A)

% type and term formers

\newcommand{\pair}[2]{\left(#1,#2\right)}

\newcommand{\Nat}[0]{\nats}

\newcommand{\la}[0]{\langle}
\newcommand{\ra}[0]{\rangle}

\newcommand{\dprod}[3]{\ensuremath{{\textstyle\prod\left(#1 : #2\right) . #3}}}

\newcommand{\subst}[2]{[#2/#1]}

% universes (syntax)

\newcommand{\univ}[1]{\ensuremath{\operatorname{U}_{#1}}}
\newcommand{\elems}[1]{\ensuremath{\operatorname{El}_{#1}}}
% universe inclusions
\newcommand{\univin}[2]{\mathsf{in}_{#1,#2}}%{\ensuremath{\operatorname{in}_{#1,#2}}}

%{\ensuremath{{\code{\textstyle\prod_{#1}}\left(#2 : #3\right) . #4}}}

\newcommand{\forallcode}[1]{\ensuremath{\overline{\forall}}}

\newcommand{\latbindcode}[3]{\overline{\triangleright}\, (#1:#2) . #3}
\newcommand{\latbindcodeAnn}[4]{\overline{\triangleright}_{#1}\, (#2:#3) . #4}

\newcommand{\timescode}{\overline{\times}}
\newcommand{\gStrcode}{\overline\gStr}
\newcommand{\natscode}{\overline\nats}

\newcommand{\syncode}[1]{\overline{#1}}

\newcommand{\predliftcode}[2][\kappa]{\overline{\predliftbare[#1]{#2}}}

% Note there are now two macros for id types
\newcommand{\idty}[3]{#2 =_{#1} #3}
 
% Annotated syntax for interpretation
\newcommand{\PiApp}[5]{\mathsf{App}_{[#1:#2]#3}(#4,#5)}
\newcommand{\ForallApp}[4]{\mathsf{App}^\forall_{[#1]#2}(#3,#4)}
\newcommand{\ForallLam}[3]{\mathsf{Lam}^\forall_{[#1]#2} [#1]#3}
\newcommand{\TickApp}[5]{\mathsf{App}^\later_{[#1:#2]#3}(#4,#5)}
\newcommand{\TickLam}[4]{\mathsf{Lam}^\later_{[#1:#2]#3}[#1]#4}
\newcommand{\TickcApp}[5]{\mathsf{App}^\tickc_{[#2][#1]#3}([#2]#4,#5)}

% Simultaneous substitution of diamond and clock
\newcommand{\tickcsub}[3]{[(\tickc\! :\!  #3)/(#1 \! :\! #2)]}
% And one more to be used in subscripts
\newcommand{\tickcsubsm}[3]{[(\tickc :  #3)/(#1  : #2)]}

\newcommand{\fc}[1]{\mathsf{fc}(#1)}

% hott

%guarded

\newcommand{\later}{\triangleright}
\DeclareMathOperator{\tearlier}{\blacktriangleleft}
\DeclareMathOperator{\tlater}{\blacktriangleright}
\newcommand{\tlaterTy}[1][]{\tlater^{#1}_{\mathsf{Fam}}}
\newcommand{\tlaterTm}[1][]{\tlater^{#1}_{\mathsf{El}}}
\newcommand{\cv}{\mathrm{CV}}
\newcommand{\clk}{\mathrm{Clk}}

\newcommand{\dfix}{\mathsf{dfix}}
\newcommand{\fix}{\mathsf{fix}}
\newcommand{\pret}[1]{{\llbracket #1 \rrbracket}}
\newcommand{\den}[1]{{\llbracket #1 \rrbracket}}

\newcommand\tabs[2]{\lambda (#1 : #2).}
\newcommand\tickc{\diamond}
\newcommand\tapp[2][\tickA]{#2\,[#1] }
\newcommand\tappc[1]{\tapp[\tickc]{#1}}
\newcommand{\tickA}{\alpha}
\newcommand{\tickB}{\beta}
\newcommand\latbind[2]{{\triangleright}\, (#1:#2) .}
\newcommand\toksubst[3][\kappa]{\left[#2/#3\right]}

\newcommand{\clocktype}{\mathsf{clock}}

\newcommand\Str{\sym{Str}}
\newcommand\gStr[1][\kappa]{\sym{Str}^{#1}}

\newcommand{\cons}[2]{\mathrel{#1::#2}}

\newcommand{\predlift}[3][\kappa]{\gStr[#1]_{#2}(#3)}
\newcommand{\predliftbare}[2][\kappa]{\gStr[#1]_{#2}}

% Extension of substitutions
\newcommand{\basicsub}[2]{[#1\mapsto #2]}
\newcommand{\subex}[3]{#1\basicsub{#2}{#3}}

% Weakening substitution
\newcommand{\wksubst}[2]{\mathsf{wk}_{#1;#2}}

% CloTT axioms
\newcommand{\cirr}[1][\kappa]{\sym{cirr}^{#1}}
\newcommand{\tirr}[1][\kappa]{\sym{tirr}^{#1}}
\newcommand{\pfix}[1][\kappa]{\sym{pfix}^{#1}}

%math
\newcommand{\nats}{\mathbb{N}}
\newcommand{\ints}{\mathbb{Z}}

\newcommand{\restrict}[2]{{#1}|_{#2}}

%inference
% Inference rules

\newcommand{\ru}[2]{\dfrac{\begin{array}[b]{@{}c@{}} #1 \end{array}}{#2}}

\newcommand{\of}{{:}} % I only like it in bindings

% Judgements
\newcommand\hastype[4][]{
#2 \vdash_{#1} #3: #4
}
\newcommand\hasnotype[4][]{
#2 \vdash_{#1} #3
}
\newcommand\wfcxt[2][]{#2 \vdash_{#1}}
\newcommand\istype[3][]{
%#2 \vdash_{#1} #3: \type
\ensuremath{#2 \vdash_{#1} #3 \, \operatorname{type}}
}

\newcommand\jeqjud[5][]{
#2 \vdash_{#1} #3 \jeq #4: #5
}

% Notions of equality
\newcommand{\jeq}{\equiv}
\newcommand{\jeqty}{\equiv}

% Definition 
\newcommand{\defeq}{\mathbin{\overset{\textsf{def}}{=}}}
%\newcommand{\defeq}{\mathbin{\overset{\triangle}{=}}}

% Names of calculi 
\newcommand{\clott}{CloTT}%{\textsf{CloTT}}
\newcommand{\gdtt}{GDTT}%{\textsf{GDTT}}

% Comments
\newcommand{\rasmus}[1]{}%{{\color{blue}\textbf{REM: }#1}}
\newcommand{\bassel}[1]{}%{{\color{red}\textbf{BM: }#1}}
\newcommand{\niccolo}[1]{}%{{\color{green}\textbf{NV: }#1}}

\begin{document}%
% For some reason \begin{document} restores the original definition of
% \L, so I overwrite this here again.
\renewcommand{\L}{\mathsf{L}}

\title[Ticking Clocks as Dependent Right Adjoints]{Ticking Clocks as Dependent Right Adjoints \\ 
   Denotational Semantics for Clocked Type Theory}
%\titlecomment{{\lsuper*}This paper is a journal version of the conference publication~\cite{conferenceversion}.}

\author[B.~Mannaa]{Bassel Mannaa}	%required
\address{eToroX Labs, Denmark}	%required
\email{basselma@etorox.com}  %optional
\thanks{This work was supported by DFF-Research Project 1 Grant no.
4002-00442, from The Danish Council for Independent Research for the Natural Sciences (FNU) and by a research grant (13156) 
from VILLUM FONDEN. Niccolò Veltri was also supported
by the ESF funded Estonian IT Academy research measure 
(project 2014-2020.4.05.19-0001).}	%optional

\author[R.E.~M{\o}gelberg]{Rasmus Ejlers M{\o}gelberg}	%optional
\address{Department of Computer Science, IT University of Copenhagen, Copenhagen, Denmark}	%optional
\email{mogel@itu.dk}  %optional
%\thanks{thanks 2, optional.}	%optional

\author[N.~Veltri]{Niccolò Veltri}	%optional
\address{Department of Software Science, Tallinn University of Technology, Tallinn, Estonia}	%optional
\email{niccolo@cs.ioc.ee}  %optional
%\thanks{thanks 3, optional.}	%optional
%This work was supported by DFF-Research Project 1 Grant no.
%4002-00442, from The Danish Council for Independent Research for the Natural Sciences (FNU) and by a research grant (13156) from VILLUM FONDEN. Niccolò Veltri was also supported by the research measure of the Estonian IT Academy programme.

\begin{abstract}
Clocked Type Theory (\clott) is a type theory for guarded recursion useful for programming with
coinductive types, allowing productivity to be encoded in types, and for reasoning about advanced 
programming language features using an abstract form of step-indexing. \clott\ has previously
been shown to enjoy a number of 
syntactic properties including strong normalisation, canonicity and decidability of the equational theory.
In this paper we present a denotational semantics for \clott\ useful, e.g., for studying future extensions 
of \clott\ with constructions such as path types. 

The main challenge for constructing this model is to model the notion of ticks on a clock used in \clott\ for coinductive
reasoning about coinductive types. We build on a category previously used to model guarded recursion
with multiple clocks. In this category there is an object of clocks but no object of ticks, and so 
tick-assumptions in a context can not be modelled using standard tools.
Instead we model ticks using dependent right adjoint functors, a generalisation of the category
theoretic notion of adjunction to the setting of categories with families. Dependent right adjoints are known to model 
Fitch-style modal types, but in the case of \clott, the modal operators constitute a family indexed internally in the type
theory by clocks. We model this family using a dependent
right adjoint on the slice category over the object of clocks. 
Finally, we show how to model the tick constant of \clott\ using a semantic substitution. 

This work improves on a previous model by two of the authors which not only had a flaw but was also
considerably more complicated. 
\end{abstract}

\maketitle

\section*{Introduction}

In recent years a number of extensions of Martin-L{\"o}f type theory (MLTT)~\cite{MartinLof:84} have been proposed to 
enhance the expressiveness or usability of the type theory. The most famous of these is Homotopy Type Theory~\cite{hottbook},
but other directions include the related Cubical Type Theory~\cite{CTT}, 
FreshMLTT~\cite{FreshMLTT}, a type theory with name abstraction
based on nominal sets, and Type Theory in Color~\cite{bernardy2015presheaf} 
for internalising relational parametricity in type theory. Many
of these extensions use denotational semantics to argue for consistency and to inspire constructions in the language.

This paper is part of a project to extend type theory with guarded recursion~\cite{Nakano:Modality}, a variant of 
recursion that uses a modal type operator $\later$ (pronounced `later') 
to preserve consistency of the logical reading of type theory.
The type $\later A$ should be read as classifying data of type $A$ available one time step from now, and comes with
a map $\nxt : A \to \later A$ and a fixed point operator mapping a function $f : \later A \to A$ to a fixed point for 
$f \circ \nxt$. This, in combination with \emph{guarded recursive types}, i.e., types where the recursion variable
is guarded by a $\later$, e.g., $\gStr[\sym{g}] \jeqty \nats \times \later \gStr[\sym{g}]$ gives a powerful type theory in which 
operational models of combinations of advanced programming language features such as higher-order
store~\cite{Birkedal-et-al:topos-of-trees} and nondeterminism~\cite{Bizjak-et-al:countable-nondet-internal} can be modelled
using an abstract form of step-indexing~\cite{Appel:M01}. 
Combining guarded recursion with a notion of clocks, indexing the $\later$ operator with
clock names, and universal quantification over clocks, one can encode coinduction using guarded recursion,
allowing productivity~\cite{coquand1993infinite} of coinductive definitions to be encoded in types~\cite{atkey13icfp}. 
For example, if $\gStr[\kappa]$ is a type of streams guarded on the clock $\kappa$, i.e., satisfying the equation
$\gStr[\kappa] \jeqty \nats \times \later^\kappa \gStr[\kappa]$, then the type $\Str \defeq \forall\kappa . \gStr[\kappa]$ 
obtained by universally
quantifying the clock $\kappa$ is a coinductive type of streams satisfying the more standard type isomorphism
$\Str\cong \nats \times \Str$.

The most advanced type theory with all the above mentioned features is Clocked Type Theory (\clott)~\cite{bahr2017clocks}, 
which introduces the notion of ticks 
on a clock. Ticks are evidence that time has passed and can be used to unpack elements of type $\later^\kappa A$ to elements
of $A$. In fact, in \clott\ $\later^\kappa A$ is generalised to a special form of dependent function type 
$\latbind\tickA\kappa A$ from ticks to $A$. 
The introduction rule abstracts assumptions of the form $\tickA : \kappa$ from the context, and the elimination applies a 
term $t : \latbind\tickA\kappa A$ to a tick $\tickB : \kappa$ to give an element of $A \subst\tickA\tickB$. 
Special typing rules ensure that a term is never applied twice to the same tick. The combination of ticks and clocks
in \clott\ can be used for coinductive reasoning about coinductive types, by encoding the \emph{delayed substitutions} 
of~\cite{GDTT}. 

Bahr et al~\cite{bahr2017clocks} have shown that \clott\ can be given a reduction semantics satisfying 
strong normalisation, confluence and
canonicity. This establishes that productivity can indeed be encoded in types: For a closed term $t$ of stream type, 
the $n$'th element can be computed in finite time. These syntactic results also imply soundness of the type theory. 
However, these results have only been established for a core type theory without, e.g., identity types, and
the arguments can be difficult to extend to larger calculi. In particular, we are interested in extending \clott\ with path
types as in Guarded Cubical Type Theory~\cite{GCTT} in future work. 
Therefore a denotational model of \clott\  can be useful, and this paper presents such a model. 

The work presented here builds on a number of existing models for guarded recursion. The most basic such, modelling the
single clock case, is the topos of trees model~\cite{Birkedal-et-al:topos-of-trees}, 
in which a closed type is modelled as a family of sets 
$X_n$ indexed by natural numbers $n$, together with restriction maps of the form $X_{n+1} \to X_n$ for every $n$. In other
words, a type is a presheaf over the ordered natural numbers. In this model $\later$ is modelled as $(\later X)_0 = 1$ and 
$(\later X)_{n+1} = X_n$ and guarded recursion reduces to natural number recursion. The guarded recursive 
type $\gStr[\sym{g}]$ mentioned above can be modelled in the topos of trees as 
$\gStr[\sym{g}]_n = \nats^{n+1}\times 1$. 

Bizjak and M{\o}gelberg~\cite{GDTTmodel} recently extended this model to the case of many clocks, using a category
$\set^\catT$ of covariant presheaves over a category $\catT$ of time objects. An object of $\catT$ is a pair 
of a finite set $\FSA$ and a map 
$\vartheta : \FSA \to \nats$, and a morphism from $(\FSA, \vartheta)$ to $(\FSA', \vartheta')$ is a map 
$\tau : \FSA \to \FSA'$ such that $\vartheta' \tau \leq \vartheta$ in the pointwise order. Intuitively, $\FSA$
indicates the set of clocks in play at any time in a computation, and $\vartheta$ indicates the number of ticks
left on each clock. The use of the inequality in the maps allows for time to pass, similarly to the passing from
a larger number to a smaller number in the topos of trees model. 

%In this model, universal quantification over clocks is modelled by constructing an object in the topos of 
%trees and taking the limit of that. For example, taking the limits over the object $\gStr[]$ gives the usual coinductive type 
%of streams over natural numbers. %The model presented in this paper uses the same category of presheaves. 
%
The main challenge when adapting the model of~\cite{GDTTmodel} to \clott\ is to model ticks, which were not present
in the language modelled in~\cite{GDTTmodel}. In particular, how does one model tick assumptions of the form
$\tickA : \kappa$ in a context, when there appears to be no object of ticks in the model to be used as the
denotation of the clock $\kappa$? In this paper we observe that these assumptions can be modelled using a left adjoint
$\tearlier^\kappa$ to the functor $\tlater^\kappa$ used in~\cite{GDTTmodel} to model the delay modality $\later^\kappa$
associated to the clock $\kappa$. Precisely we model context extension as $\pret{\Gamma, \tickA : \kappa} = 
\tearlier^\kappa\pret\Gamma$. The modality $\latbind\tickA\kappa A$ is then modelled as a \emph{dependent right adjoint}
to $\tearlier^\kappa$, 
a notion studied in detail in~\cite{drat}: If $\cat C$ is a category with family (CwF)~\cite{dybjer1996} (a standard notion of model 
for dependent type theory) and $L$ an endofunctor on (the underlying category of) $\cat C$, a dependent right 
adjoint to $L$ is an operation mapping a family $A$ over $L\Gamma$ to a family $RA$ over $\Gamma$ with a bijective
correspondence between elements of $A$ and elements of $RA$ natural in $\Gamma$. Dependent right adjoints model 
Fitch-style modal operators in type theory, a general pattern seen also in the model of fresh name abstraction of 
FreshMLTT~\cite{FreshMLTT} and dependent path types in cubical type theory~\cite{CTT}. 
In \clott\ the type operator $\later$ is indexed by clocks, and since the model has an object of clocks
this can be understood as an internally indexed family of Fitch-style modal operators. We show how to model this as a dependent
right adjoint on the slice category over the object of clocks.

Finally we show how to model the special tick constant $\tickc$ used in \clott\ to eliminate $\later^\kappa$ in special 
situations. Again, since there is no object of ticks in which $\tickc$ can be an element, standard tools can not be used to model this.
Still, we shall see that there exists a semantic substitution of $\tickc$ for a tick variable
that can be used to model application of terms to $\tickc$. 

\subsection*{Overview}

Before introducing Clocked Type Theory in full 
we focus on a fragment called the \emph{tick calculus} capturing just the interaction of ticks with dependent types. 
Section~\ref{section:tick:calc} introduces this and shows how ticks can be used to program with and reason about
modal types. Then we introduce the notion of dependent right adjoint and show how to use this to model the tick
calculus. Section~\ref{sec:clott} introduces \clott\ as an extension of the tick calculus to multiple clocks 
and with guarded recursion. In the original presentation of \clott~\cite{bahr2017clocks} judgements had a separate
context of clock variables. Here we use a single context, and this simplifies not only the syntax but also the semantics
considerably. Section~\ref{sec:syntax:universes} extends basic \clott\ with universes following the approach of 
Guarded Dependent Type Theory~\cite{GDTTmodel}. For universes to be consistent with the \emph{clock irrelevance axiom}
of \clott\ these are indexed by sets of clocks that may appear freely in the elements of the universe. Inclusions between
sets of clocks induce inclusions between universes and all type constructors commute on the nose with these. 

Section~\ref{sec:basic:model} introduces the presheaf category $\grtotal$ 
forming the model of \clott\ and defines the object of clocks in this. 
This is the same category as used by Bizjak and M{\o}gelberg~\cite{GDTTmodel} to model the related Guarded Dependent Type Theory,
and it was also discovered independently by Harper and Sterling~\cite{sterling2018guarded} as a model of
Guarded Computational Type Theory. Section~\ref{sec:dradjoint} constructs a dependent right adjoint on the slice category 
over the object of clocks, and Section~\ref{sec:modelling:ticks} lifts these results to an internally indexed family of dependent
right adjoints on $\grtotal$. Sections~\ref{sec:guarded:rec} and~\ref{sec:tickc} describe the semantic structure required
to model the guarded fixed point operator and the tick constant $\tickc$, respectively. Section~\ref{sec:semantic:universes} recalls the
semantic universes of~\cite{GDTTmodel} and shows how to model the modal types of \clott\ in these. 

Section~\ref{sec:interp:syntax} defines the interpretation of syntax into the model and proves soundness. For this we
follow the approach of Hofmann~\cite{Hofmann1997} for modelling dependent type theories: First the interpretation
of syntax is defined as a partial function, then it is proved that the interpretation is defined for all judgements that have
a derivation. The latter proof is done by a simultaneous induction with proofs of soundness and a substitution lemma. 
As is standard, the syntax interpreted into the model is an annotated variant of the syntax presented in Section~\ref{sec:clott}.
Apart from the standard annotations e.g. of application terms with the $\Pi$-type of the function, in \clott\ the term 
for application to the tick constant $\tickc$ must be changed by replacing a substitution by an explicit substitution. Moreover, 
special lemmas for weakening substitutions must be proved to accommodate tick-weakening in \clott. The paper ends
with conclusions and future work in Section~\ref{sec:conclusion}. 

\subsection*{Related work}

The two first named authors have previously published a conference publication~\cite{conferenceversion} 
describing a model of \clott. That paper contained an error in the description of 
the left adjoint $\tearlier^\kappa$, which had consequences for a number of other results in the paper. 
Apart from correcting this mistake the present paper also presents a greatly simplified model construction. 
The previous model used the original syntax of \clott\ in which judgements had a separate context 
of clock variables $\Delta$, and modelled this using a diagram of categories $\gr\Delta$ indexed by
clock contexts. These categories were equivalent to slice categories of the category $\grtotal$ used in this
paper, and are also used in Section~\ref{sec:semantic:universes} to construct the semantic universes. The 
clock contexts $\Delta$ allowed the modal operators to be externally indexed. In particular, each $\kappa \in \Delta$
induced a dependent right adjoint on $\gr\Delta$. Unfortunately, the morphisms of the diagram induced
by clock substitutions did not commute with the left adjoints of these dependent adjointions causing great complications of
the model construction. The present paper avoids these problems by using an internal indexing of the 
dependent adjunctions. 
%
%As mentioned, the notion of dependent right adjoint as a model of Fitch-style calculus has been studied in detail in~\cite{drat}.
%In fact, that work grew out of the work on~\cite{conferenceversion} as well as Clouston's work on Fitch-style
%modal calculi~\cite{clouston2018fitch}. This has since been extended to allow also modal operators that 
%semantically are not required to be endofunctors, but can map between different CwFs, as syntactically captured by 
%Multimodal Dependent Type Theory~\cite{gratzer2020multimodal}. Currently, the collections of modes and modalities 
%in Multimodal Dependent Type Theory must be given externally. It would be interesting to see if there 
%in which modal operators change the `mode' of the type theory

As described above, one of the motivations for \clott\ is the encoding of coinductive types capturing
the notion of productivity in types. There exist other solutions to this problem, in particular the combination of 
single clock guarded recursion with an `always' modality $\Box$~\cite{birkedal2017guarded,gratzer2020multimodal}
and sized types~\cite{HughesPS96,Abel:Wellfounded,Abel:NBE:sized:types,Sacchini13}. We refer to~\cite{GDTTmodel}
for a discussion of the relationship between these approaches.

%\section{Delayed substitutions and ticks}
\section{A tick calculus}
\label{section:tick:calc}

Before introducing \clott\ we focus on a fragment to explain the notion of ticks and how to model these. To motivate 
ticks, consider the notion of applicative functor from functional programming~\cite{mcbride2008applicative}: 
a type former $\later$ with maps
$A \to \later A$ and $\later(A \to B) \to \later A \to \later B$ satisfying a number of equations that we shall not recall. 
These maps can be used 
for programming with the constructor $\later$, but for reasoning in a dependent type theory, one needs an extension of these
to dependent function types. 
For example, in guarded recursion one can prove a theorem $X$ by constructing a map $\later X \to X$ and taking its fixed
point in $X$. If the theorem is that a property holds for all elements in a type of guarded streams satisfying 
$\gStr[] \jeqty \nats \times \later \gStr[]$, then $X$ will be of the form $\dprod{xs}{\gStr[]}P$. To apply the 
(essentially coinductive) assumption of type $\later\dprod {xs}{\gStr[]}P$ to the tail of a stream, which has type
$\later\gStr[]$ we need an extension of the applicative functor action. 

What should the type of such an extension be? Given $a: \later A$ and $f: \later (\dprod xAB)$ 
the application of $f$ to $a$ should be something of the form $\later B\subst x{??}$. If we think of $\later$ as a delay,
intuitively $a$ is a value of type $A$ delayed by one time, and the $??$ should be the value delivered
by $a$ one time step from now. One way of referring to that value is by changing the 
target type of the dependent applicative functor action to a $\mathsf{let}$-expression. 
Here we describe a more direct approach based on ticks. Ticks should be though of as 
evidence that time has passed which can be used to unpack elements of modal type.

%In \clott\ ticks 
%carry an identity because they control unfoldings of fixed points and are so crucial to the operational semantics. 
 
The \emph{tick calculus} is the extension of dependent type theory with the following four rules
\begin{mathpar}
\inferrule*{\wfcxt[]{\Gamma}\and \tickA \notin \Gamma}{\wfcxt[]{\Gamma, \tickA : \tick}} \and
\inferrule*{\istype[]{\Gamma, \tickA : \tick}{A}}{\istype[]{\Gamma}{\latbind\tickA\tick A}} \\
\inferrule*{\hastype[]{\Gamma, \tickA : \tick}t{A}}{\hastype[]{\Gamma}{\tabs\tickA\tick t}{\latbind\tickA\tick A}} 
\and 
\inferrule*{\hastype[]{\Gamma}t{\latbind\tickA\tick A}\and  \and \wfcxt[]{\Gamma,\tickB : \tick, \Gamma'}}{\hastype[]{\Gamma,\tickB : \tick, \Gamma'}{\tapp[\tickB]t}{A[\tickB/\tickA]}} 
\end{mathpar}
An assumption of the form $\tickA : \tick$ in a context is an assumption that one time step has passed, and $\tickA$ 
is the evidence of this. Variables on the right-hand side of such an assumption should be thought of as arriving one time step later than those on the left. Ticks can be abstracted in terms and types, so that the type constructor $\later$ now comes
with evidence that time has passed that can be used in its scope. The type $\latbind\tickA\tick A$ can be thought
of as a form of dependent function type over ticks, which we abbreviate to $\later A$ if $\tickA$ does not occur
free in $A$. %, but note that $\tick$ is not a type.  
The elimination rule states that if a term $t$ can be typed as $\latbind\tickA\tick A$ before the arrival 
of tick $\tickB$, $t$ can be opened using $\tickB$ to give an element of type $A[\tickB/\tickA]$. Note that the 
causality restriction in the typing rule prevents a term like 
$\lambda x .  \tabs\tickA\tick \tapp{\tapp x} : \later\later A \to \later A$ being well typed; a tick can only 
be used to unpack the same term once. The context $\Gamma'$ in the elimination rule ensures that typing rules are
closed under weakening, also for ticks. %Note that $\Gamma'$ can contain ticks. 
Note that the clock object $\tick$ is not a type. The variable introduction rule is unchanged: 
$\hastype[]{\Gamma, x: A, \Gamma'}xA$ even if there are ticks in $\Gamma'$. Intuitively, this means that data
is kept past time steps.

The equality theory is likewise extended with the usual $\beta$ and $\eta$ rules: 
\begin{align*}
 (\tabs\tickA\tick \tapp[\tickB]{t)} & = t[\tickB/\tickA] & \tabs\tickA\tick (\tapp t) & = t
\end{align*}
As stated, the tick calculus should be understood as an extension of standard dependent type theory. In particular one 
can add dependent sums and function types with standard rules. 
% Variables can be introduced from anywhere in the context, also past ticks. 

We can now type the dependent applicative structure as 
\begin{align*}
 \lambda( x \of A). \tabs\tickA\tick x & \,\of\, A \to \later A \\
 \lambda f  .\lambda y.  \tabs\tickA\tick  \tapp f(\tapp y) &\,\of\, 
 \later\left(\dprod xAB\right) \to \dprod y{\later A}{\latbind\tickA\tick {B\subst x{\tapp y}}}
% \lambda(f \of \later\dprod xAB)  \lambda (y\of \later A)  \lambda(\tickA \of\tick)  f\,\tickA(y \, \tickA) &\,\of\, 
% \later\left(\dprod xAB\right) \to \dprod y{\later A}{\later(\tickA \of \tick) . B\subst x{y\,\tickA}}
\end{align*}

\begin{exa} \label{ex:streams}
For a small example on how ticks in combination with the fixed point operator 
$\dfix : (\later X \to X) \to \later X$ 
can be used to reason about guarded recursive data, let 
$\gStr[] \jeqty \nats \times \later \gStr[]$ be the type of guarded recursive streams mentioned above, and
suppose $x\of \Nat \vdash P(x)$ is a family to be thought of as a predicate on $\Nat$. A lifting of $P$ to streams
would be another guarded recursive type $y\of \Str \vdash \predlift[]{P}y$ satisfying 
$\predlift[]{P}{\cons x{xs}} \jeqty P(x) \times \latbind\tickA\tick {\predlift[]{P}{xs\,[\alpha]}}$ (where 
$\cons x{xs}$ is the pairing of $x$ and $xs$). If $p: \dprod x\Nat {P(x)}$ is a proof of $P$ 
we would expect that also $\dprod y \Str {\predlift[]{P}y}$ can be proved, and indeed this can be done as follows. 
Consider first 
\begin{align*}
 f & : \later (\dprod y \Str {\predlift[]{P}{y}}) \to \dprod y\Str{\predlift[]{P}y} \\
 f\, q \, (\cons x{xs}) & \defeq \pair{p(x)}{\tabs\tickA \tick \tapp[\tickA]q(\tapp[\tickA]{xs})}
\end{align*}
Then $f(\dfix(f))$ has the desired type.  
\end{exa}

More generally, ticks can be used to encode \cite{bahr2017clocks} the \emph{delayed substitutions} of \cite{GDTT}, 
which have been used to reason coinductively about coinductive data. For more examples of 
reasoning using these see~\cite{GDTT}. 
The tick calculus is an example of a Fitch-style modal 
calculus~\cite{clouston2018fitch,Fitch:Symbolic}. Most of these use a presentation 
where ticks are simply markers in the context, rather than carry names as here. However, 
names of ticks play a crucial role in the normalisation proof for \clott~\cite{bahr2017clocks},
and we therefore also use names here. 

%
%For example, the type of streams of natural numbers can be defined as $\Str := \Nat \times \later(\alpha \of \tick) \Str$. One can lift a family over natural numbers $x\of \Nat \vdash P(x)$ to a family over streams $y\of \Str \vdash \hat{P}(y)$ defined by $\hat{P}(x:xs) :=P(x) \times \later (\alpha \of \tick) \hat{P}(xs\,[\alpha])$. One can then observe that the type $\Pi(x\of\Nat) P$ is inhabited if and only if $\Pi(y\of \Str) \hat{P}$ is inhabited.
%%For reasons of space we will not recall these constructions but refer the reader to the references for details. 
%\rasmus{Stream example, lifting predicates to streams}
%\bassel{done}
%
%Note that the clock object $\tick$ is not a type. It is similar in nature to the path object in cubical type theory\cite{CTT} and the colors object in Type Theory in Color~\cite{bernardy2015presheaf}. The similarity between the tick calculus and the aforementioned calculi lead one to think that they are all instances of the same general construction. This will be detailed in a separate article. Here we will give a brief description of this construction which we call CwF with adjunction and show that it is sufficient to model the above fragment of \clott.

\subsection{Modelling ticks using adjunctions}
\label{externalcwf}

%\rasmus{Eliminate all mention of drat}

We now describe a notion of model for the tick calculus. It is based on the notion of category with family (CwF) \cite{dybjer1996},
which is a standard notion of model of dependent type theory. 
\begin{defi}
 A CwF comprises
 \begin{itemize}
\item A category $\cat C$ with a distinguished terminal object
\item For each object $\Gamma$ of $\cat C$ a set $\typ(\Gamma)$ of \emph{families} over $\Gamma$. 
We write $\cwfty\Gamma A$ to mean $A\in \typ(\Gamma)$.
\item For each $\Gamma$ in $\cat C$ and each family $A$ in $\typ(\Gamma)$ a set $\trm(A)$
of \emph{elements} of $A$. We write $\cwftm\Gamma tA$ to mean $t\in \trm(A)$.
\item For each morphism $\gamma : \Delta \to \Gamma$ in $\cat C$ \emph{reindexing} operations
mapping $\cwfty\Gamma A$ to $\cwfty\Delta{A[\gamma]}$ and $\cwftm\Gamma tA$ to $\cwftm\Delta{t[\gamma]}{A[\gamma]}$. 
These must
satisfy the equations $A[\id] = A$, $t[\id] = t$,  
$A[\gamma\circ \delta] = A[\gamma][\delta]$ and $t[\gamma\circ \delta] = t[\gamma][\delta]$
for all morphisms $\delta$ with codomain $\Delta$.  
%\item For each morphism $\gamma : \Delta \to \Gamma$ in $\cat C$ and family $A$ over $\Gamma$ a 
%map $\typ(\gamma): \trm(A) \to \trm(A[\gamma])$ mapping $t$ to $t[\gamma]$ such that 
%$t[\gamma\circ \delta] = t[\gamma][\delta]$
\item A \emph{comprehension} operation associating to each family $\cwfty{\Gamma}A$ 
the following: An object $\compr \Gamma A$ in $\cat C$, a morphism 
$\p_A : \compr \Gamma A \to \Gamma$ and an element $\cwftm{\compr\Gamma A}{\q_A}{A[\p_A]}$, such that for every
$\gamma : \Delta \to \Gamma$, and $\cwftm\Delta t{A[\gamma]}$ there exists a unique morphism 
$\cpair \gamma t : \Delta \to \compr \Gamma A$ such that  $\p_A \circ\cpair \gamma t = \gamma$ and 
$\q_A[\cpair \gamma t] = t$. 
\end{itemize}
\end{defi}
% We often denote a CwF simply by its underlying category, leaving the remaining structure implicit. 

The requirements on reindexing of families and elements mean that they can be described more 
concisely as a functor from $\opcat{\cat C}$ to the category of families on sets. This is in fact Dybjer's original
definition. Awodey's \emph{natural models of type theory}~\cite{awodey2018natural} 
are an elegant abstract formulation of the notion of CwF. 

To model the tick calculus we need an operation $\L$ modelling the extension of a context with a tick, plus an 
operation $\R$ modelling
$\later$. In the simply typed setting, $\R$ would be a right adjoint to context extension modelling the bijective
correspondence between terms $\hastype[]{\Gamma, \tickA : \tick}{t}A$ and terms 
$\hastype[]{\Gamma}{s}{\latbind\tickA\tick A}$. For dependent types this is 
not quite so, since these operations work on different objects (contexts and types respectively). Instead, we need a
dependent adjunction as in the following definition, which generalises that of~\cite{drat} by allowing also
dependent adjunctions between different categories (not just endoadjunctions). 
\begin{defi} \label{def:dep:adj}
 Let $\cat C$ and $\cat D$ be CwFs and let $\L : \cat C \to \cat D$ be a functor between 
 the underlying categories. A \emph{dependent right adjoint} to $\L$ consists of an operation associating 
 to each family $\cwfty{\L\Gamma}A$ 
 in $\cat D$ a family $\cwfty{\Gamma}{\R A}$ in $\cat C$ and a bijective map of elements mapping  
 $\cwftm{\L \Gamma} tA$ to $\cwftm{\Gamma} {\transp{t}}{\R A}$ such that 
 $(\R A)[\gamma] = \R (A[\L\gamma])$ and $\transp t[\gamma] = \transp{t[\L \gamma]}$.
\end{defi}
We write $\transp{(-)}$ also for the inverse direction of the bijection on terms so that $\transp{\transp{t}} = t$. It easily 
follows~\cite{drat}
that also the inverse direction commutes with substitution, i.e., that for $\cwftm{\Gamma}{s}{\R A}$ 
also $\transp{s[\gamma]} = \transp s[\L\gamma]$. 

The dependent adjunctions in this paper arise from adjunctions on the underlying categories with 
liftings of the right adjoint to families and elements as in the following definition. 
%In other words, the dependent 
%right adjoints considered here, also exist on the level of contexts.
\begin{defi}
\label{def:cwfa}
 Let $\cat{C}$ and $\cat{D}$ be CwFs and let $\R : \cat C \to \cat D$ be a functor. An \emph{extension of $\R$ to families 
 and elements}
 is a pair of operations presented here in the form of rules
 \begin{gather*}
\ru{\cwfty\Gamma A}{\cwfty{\R \Gamma}{\RTy A}} \qquad \ru{\cwftm\Gamma tA}{\cwftm{\R \Gamma}{\RTm t}{\RTy A}} 
\end{gather*}
commuting with reindexing in the sense that $(\RTy A)[\R \gamma] = \RTy (A[\gamma])$ and $(\RTm t)[\R \gamma] 
= \RTm (t[\gamma])$
hold for all substitutions $\gamma$, and commuting with comprehension in the sense that 
$\cpair{\R \p_A}{\RTm \q_A} : \R(\Gamma.A) \to \R\Gamma.\RTy A $ is an isomorphism.
%there exists an operation
%associating to each $\cwfty\Gamma A$ a morphism $\isocomp{\Gamma}{A}:\R\Gamma.\RTy A \rightarrow \R(\Gamma.A)$ 
%in $\cat D$ inverse to $\cpair{\R \p_A}{\RTm \q_A}$.
%By \emph{an adjunction of CwFs} we shall mean an adjunction of the underlying categories, such that the right adjoint lifts to types 
%and terms. If the two CwF structures are the same, we shall simply refer to the situation as a \emph{CwF with adjunction}.
\end{defi}

\begin{lem}
\label{lem:bijectivecorresp}
Let $\cat C$ and $\cat D$ be CwFs and let $\L : \cat{C} \to \cat{D} : \R$ be an adjunction of the underlying categories, such that $\R$ extends to families and elements. Let $\eta$ be the unit
and let $\epsilon$ be the counit of the adjunction. 
The operation mapping $\cwfty{\L\Gamma}{A}$ to $\cwfty{\Gamma}{\R A}$ defined as $\R A \defeq (\RTy A)[\eta]$ defines a dependent right adjoint to $\L$. The required bijection 
on elements maps $\cwftm{\L\Gamma}aA$ to $(\RTm a)[\eta]$ and 
$\cwftm{\Gamma}b{\R A}$ to 
$\q_A[\epsilon \circ \L (\inv{\cpair{\R \p_A}{\RTm \q_A}}\circ \la \eta,b\ra)]$.
\end{lem}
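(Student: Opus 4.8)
The plan is to check directly that the two operations on elements described in the statement are mutually inverse and commute with substitution, using only the CwF axioms, the triangle identities of the adjunction, and the two coherence conditions from Definition~\ref{def:cwfa}. This is in essence the argument of~\cite{drat}; since that proof never appeals to the two CwFs being the same, it carries over with no change, so the work is purely in spelling out the bookkeeping. Throughout I would write $\phi(a) \defeq (\RTm a)[\eta]$ for the forward map and $\psi(b) \defeq \q_A[\epsilon \circ \L(\inv{\cpair{\R \p_A}{\RTm \q_A}}\circ \cpair{\eta}{b})]$ for the backward map, and abbreviate $\cpair{\R \p_A}{\RTm \q_A}$ by $\zeta$, which is an isomorphism by Definition~\ref{def:cwfa}.

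First I would check that both maps are well typed. For $\phi$ this is immediate: $\RTm a$ is an element of $\RTy A$ over $\R\L\Gamma$, and $\eta \colon \Gamma \to \R\L\Gamma$, so $\phi(a)$ is an element of $(\RTy A)[\eta] = \R A$ over $\Gamma$. For $\psi$ the only nontrivial point is that $h \defeq \epsilon \circ \L(\inv{\zeta}\circ \cpair{\eta}{b}) \colon \L\Gamma \to \compr{\L\Gamma}{A}$ is a section of $\p_A$. Using naturality of $\epsilon$ at $\p_A$ one rewrites $\p_A \circ h = \epsilon_{\L\Gamma}\circ \L(\R\p_A \circ \inv{\zeta}\circ \cpair{\eta}{b})$; the defining equations of the comprehension pairing give $\R\p_A \circ \inv\zeta = \p_{\RTy A}$ and $\p_{\RTy A}\circ \cpair{\eta}{b} = \eta$, so this equals $\epsilon_{\L\Gamma}\circ \L\eta = \id_{\L\Gamma}$ by the triangle identity. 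Hence $\q_A[h]$ has type $A[\p_A \circ h] = A$ over $\L\Gamma$, as required.

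Next I would prove the two round-trip identities. For $\psi\circ\phi = \id$, set $s_a \defeq \cpair{\id}{a} \colon \L\Gamma \to \compr{\L\Gamma}{A}$; a short computation with the universal property of comprehension and the coherence of $\RTm$ with reindexing shows $\zeta\circ \R s_a = \cpair{\id}{\RTm a}$, hence $\R s_a = \inv\zeta \circ \cpair{\id}{\RTm a}$, and since $\cpair{\eta}{\phi(a)} = \cpair{\id}{\RTm a}\circ\eta$ one gets $\inv\zeta\circ\cpair{\eta}{\phi(a)} = \R s_a\circ\eta$. Then naturality of $\epsilon$ and the triangle identity give $\epsilon\circ\L(\R s_a\circ\eta) = s_a\circ\epsilon_{\L\Gamma}\circ\L\eta = s_a$, so $\psi(\phi(a)) = \q_A[s_a] = a$. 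For $\phi\circ\psi = \id$, with $g \defeq \inv\zeta\circ\cpair{\eta}{b}$ and $h \defeq \epsilon\circ\L g$ as above, the coherence of $\RTm$ with reindexing and the identity $\R h\circ\eta = g$ (the transpose of the transpose is the original map) give $\phi(\psi(b)) = (\RTm(\q_A[h]))[\eta] = (\RTm\q_A)[\R h\circ\eta] = (\RTm\q_A)[g]$; since $\q_{\RTy A}[\zeta] = \RTm\q_A$ we have $(\RTm\q_A)[\inv\zeta] = \q_{\RTy A}$, so $(\RTm\q_A)[g] = \q_{\RTy A}[\cpair{\eta}{b}] = b$. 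Finally, the two equations of Definition~\ref{def:dep:adj} follow from naturality of $\eta$, i.e. $\eta_\Gamma\circ\gamma = \R\L\gamma\circ\eta_\Delta$, together with the coherence conditions $(\RTy A)[\R\delta] = \RTy(A[\delta])$ and $(\RTm t)[\R\delta] = \RTm(t[\delta])$ instantiated at $\delta = \L\gamma$: e.g. $(\R A)[\gamma] = (\RTy A)[\eta_\Gamma\circ\gamma] = (\RTy A)[\R\L\gamma\circ\eta_\Delta] = \RTy(A[\L\gamma])[\eta_\Delta] = \R(A[\L\gamma])$, and symmetrically $\transp a[\gamma] = \transp{a[\L\gamma]}$.

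I expect the only real obstacle to be organisational: keeping track of which CwF each object, morphism, family, and element lives in, and threading the isomorphism $\zeta$ through the comprehension equations without slips. A cleaner equivalent packaging, which I might use instead, is to observe that $\compr{\Gamma}{\R A}$ is, via $\zeta$, the pullback of $\R\p_A \colon \R(\compr{\L\Gamma}{A}) \to \R\L\Gamma$ along $\eta$; thus sections of $\p_{\R A}$ correspond to morphisms $\Gamma \to \R(\compr{\L\Gamma}{A})$ lying over $\eta$, which by the adjunction $\L \adj \R$ correspond to sections of $\p_A$, i.e. to elements of $A$ over $\L\Gamma$, and naturality then falls out of naturality of $\eta$ and of the adjunction bijection.
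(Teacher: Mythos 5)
Your proof is correct and is exactly the argument the paper intends; the paper itself gives no proof beyond remarking that the lemma is a straightforward generalisation of Lemma~17 of the cited paper on dependent right adjoints, and your calculation is precisely that generalisation, carried out with the substitution $\delta = \L\gamma$ in the coherence conditions at the points where the two CwFs must be distinguished. The round-trip checks and the naturality equations are all handled cleanly, and the observation that $\compr{\Gamma}{\R A}$ is the pullback of $\R\p_A$ along $\eta$ is a nice, equivalent way to see the bijection.
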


Lemma~\ref{lem:bijectivecorresp} is a straight-forward generalisation of~\cite[Lemma~17]{drat}. 
Note the notational convention: In the setting 
of the lemma we overload $\R$ for both the functor on the underlying category and the dependent right adjoint, and use the
more verbose $\RTy$ and $\RTm$ for the extension of $\R$ to families and elements. This differs from the notation
used in~\cite{drat}, but is chosen here for notational convenience. 

%
%Let $\L \Gamma\vdash A$ we write $\R_\Gamma A$ for $(\R A)[\eta]$. Given $\L\Gamma \vdash t:A$ we write $\transp{t}$ for $(\R t)[\eta]$. 
%
%We have the following derived rules
%
%\begin{gather*}
%\ru{\L \Gamma \vdash A}{\Gamma \vdash \R_\Gamma A}
%\qquad
%\ru{\L\Gamma\vdash A \qquad \gamma:\Delta\rightarrow \Gamma}{(\R_\Gamma A)[\gamma]=R_\Delta (A[\L \gamma])}\qquad
%\ru{\L\Gamma \vdash a:A}{\Gamma\vdash \transp{a}:\R_\Gamma A}
%\qquad
%\ru{\L \Gamma \vdash a:A \qquad \gamma:\Delta\rightarrow\Gamma}{\transp{a[\L \gamma]} = \transp{a}[\gamma]}
%\end{gather*}
%
%On the other hand given $\Gamma\vdash t: \R_\Gamma A$ we define $\transp{t} = \q_A[\varepsilon \circ \L(\nu\circ \la \eta,\pret{t}\ra)]$ and we have the following derived rule
%
%\begin{gather*}
%\ru{\Gamma \vdash b: \R_\Gamma A}{\L \Gamma \vdash \transp{b}:A}
%\qquad
%%\ru{\Gamma\vdash b:\R_\Gamma A \qquad \gamma:\Delta\rightarrow \Gamma}{\transp{b[\gamma]} = \transp{b}[L\gamma]}
%%\qquad
%\ru{\L\Gamma\vdash a:A}{\transp{\transp{a}} = a}
%\qquad
%\ru{\Gamma\vdash b:\R_\Gamma A}{\transp{\transp{b}} = b}
%\end{gather*}
%
%\begin{lem}
%If $\Gamma \vdash b:\R_\Gamma A$ and $\gamma:\Delta\rightarrow\Gamma$, then $\transp{b[\gamma]} = \transp{b}[\L\gamma]$
%\end{lem}

\subsection{Interpretation}
\label{sec:tick:calc:interp}

The tick calculus can be modelled in a CwF equipped with an endofunctor $\L$ with a dependent right adjoint and a natural
transformation $\p_\L : \L \to \id_{\cat C}$. The latter is needed to interpret tick weakening. Defining 
\[
\pret{\wfcxt[]{\Gamma,\alpha : \tick}} = \L \pret{\wfcxt[]{\Gamma}} 
\]
$\p_\L$ allows us to define a context projection $\p_{\Gamma'} : \pret{\Gamma, \Gamma'\vdash} \to \pret{\Gamma\vdash}$ 
by induction on $\Gamma'$ using $\p_\L$ in the case of tick variables. We can then define the rest of the interpretation as
\begin{align*}
 \pret{\hasnotype[]{\Gamma, x : A, \Gamma'}xA} & = \q_A[\p_{\Gamma'}] & 
 \pret{\istype[]{\Gamma}{\latbind\tickA\tick A}} & = \R\pret{A} \\
\pret{\hasnotype[]{\Gamma}{\tabs\tickA\tick t}{\latbind\tickA\tick A}} & =  \transp{\pret{t}} & 
\pret{\hasnotype[]{\Gamma,\tickB : \tick,\Gamma'}{\tapp[\tickB]t}{A\subst\tickA\tickB}} & =  \transp{\pret{t}}[ \p_{\Gamma'}]
\end{align*}

\begin{prop}\label{prop:tick:soundness}
 The above interpretation of the tick calculus into a CwF with adjunction and tick weakening $\p_\L$ is sound. 
\end{prop}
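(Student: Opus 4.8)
The plan is to extend the standard Hofmann-style soundness argument for dependent type theory --- an induction on derivations carried out simultaneously with a \emph{weakening lemma} and a \emph{substitution lemma} --- with cases for the four new rules and the two new equations. Before the main induction I would record the behaviour of the context projections $\p_{\Gamma'} : \pret{\Gamma,\Gamma'\vdash}\to\pret{\Gamma\vdash}$, which are defined by recursion on $\Gamma'$: $\p_{\emptyset}$ is the identity, $\p_{\Gamma',x:A} = \p_{\Gamma'}\circ\p_{\pret{A}}$, and in the tick case $\p_{\Gamma',\alpha:\tick} = \p_{\Gamma'}\circ\p_\L$, using $\pret{\Gamma,\Gamma',\alpha:\tick\vdash} = \L\pret{\Gamma,\Gamma'\vdash}$ and the component of $\p_\L : \L\to\id_{\cat C}$ at $\pret{\Gamma,\Gamma'\vdash}$. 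From this one reads off that projections compose, $\p_{\Gamma'}\circ\p_{\Gamma''} = \p_{\Gamma'\Gamma''}$, and --- crucially using naturality of $\p_\L$, i.e.\ $\p_\L\circ\L\gamma = \gamma\circ\p_\L$ --- that a weakening inserted immediately before a block of ticks is modelled by applying $\L$ to the corresponding projection. It is this naturality that makes the projections behave like genuine weakening substitutions once ticks are present.

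The inductive steps for the three new typing rules are then short. For the formation rule, $\R$ sends a family over $\L\pret\Gamma$ to a family over $\pret\Gamma$ by Definition~\ref{def:dep:adj}, so $\R\pret{A}$ is well formed, and its stability under weakening follows from the reindexing law $(\R A)[\gamma] = \R(A[\L\gamma])$ and the induction hypothesis. For the abstraction rule, the transpose $\transp{(-)}$ is by the same definition a bijection from elements of $A$ over $\L\pret\Gamma$ onto elements of $\R A$ over $\pret\Gamma$, and stability under weakening follows from $\transp{t}[\gamma] = \transp{t[\L\gamma]}$. For the elimination rule, interpreted as $\transp{\pret t}[\p_{\Gamma'}]$, the key point --- which is also the content of the substitution lemma --- is that a tick renaming is semantically invisible: since the interpretation refers to tick variables only positionally and $\pret{\Gamma,\tickA:\tick\vdash} = \L\pret{\Gamma\vdash} = \pret{\Gamma,\tickB:\tick\vdash}$, a straightforward induction gives $\pret{\Gamma,\tickB:\tick\vdash A\subst\tickA\tickB} = \pret{\Gamma,\tickA:\tick\vdash A}$ (and likewise for terms), so $\transp{\pret t}[\p_{\Gamma'}]$ is indeed an element of $\pret{\Gamma,\tickB:\tick,\Gamma'\vdash A\subst\tickA\tickB} = \pret{\Gamma,\tickA:\tick\vdash A}[\p_{\Gamma'}]$. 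Stability of this rule under weakening combines composition of projections, the inverse reindexing law $\transp{s[\gamma]} = \transp{s}[\L\gamma]$, and again naturality of $\p_\L$.

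Finally the two equations follow from the identity $\transp{\transp{t}} = t$ noted just after Definition~\ref{def:dep:adj}. For $\beta$, $\pret{\tapp[\tickB]{(\tabs\tickA\tick t)}} = \transp{\pret{\tabs\tickA\tick t}}[\p_{\Gamma'}] = \transp{\transp{\pret t}}[\p_{\Gamma'}] = \pret{t}[\p_{\Gamma'}] = \pret{t\subst\tickA\tickB}$, the last step by the substitution lemma; for $\eta$, with $\Gamma'$ empty, $\pret{\tabs\tickA\tick(\tapp[\tickA]{t})} = \transp{\transp{\pret t}} = \pret{t}$. I expect the only real obstacle to be the bookkeeping in the weakening lemma for the elimination rule: one must verify that the composite reindexing produced by a weakening inserted before the tick $\tickB$ agrees, via naturality of $\p_\L$ and the dependent-right-adjoint reindexing laws, with reindexing along a single context projection. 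A secondary, routine point is that the interpretation must be shown independent of the chosen derivation; as for the full calculus in Section~\ref{sec:interp:syntax}, this is handled by reading the calculus as lightly annotated so that derivations are essentially unique, or equivalently by defining the interpretation as a partial function and proving it defined on derivable judgements.
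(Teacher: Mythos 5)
Your proposal is correct and follows the approach the paper itself points to: the paper's proof of this proposition is a one-line deferral to the tools of~\cite{drat}, and your Hofmann-style simultaneous induction over derivations together with a weakening/projection lemma and a tick-renaming substitution lemma is exactly the method spelled out there and later deployed in full for \clott\ in Section~\ref{sec:interp:syntax}. The key observations you isolate --- that $\p_\L$ must be natural for weakening to commute with tick extension, that $\pret{\Gamma,\tickA:\tick} = \L\pret{\Gamma} = \pret{\Gamma,\tickB:\tick}$ renders tick renaming semantically invisible, and that the two judgemental equalities reduce to $\transp{\transp{t}} = t$ together with the reindexing laws of a dependent right adjoint --- are precisely the ingredients needed, and your remark that the calculus should be read as lightly annotated to make the interpretation a well-defined partial function mirrors the treatment of the full calculus in the paper.
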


Proposition~\ref{prop:tick:soundness} can be proved using the tools of~\cite{drat}. 

\subsection{Adding basic type constructors}

The model of the tick calculus can be extended with basic type constructors like natural numbers, 
$\Pi$- and $\Sigma$-types as well as identity types. Here we just recall what it means for a 
CwF to have extensional identity types, referring the reader to Hofmann~\cite{Hofmann1997} for details on other
constructors.

\begin{defi}
 A CwF $\cat C$ has \emph{extensional identity types} if for each pair of elements 
 $\cwftm\Gamma tA$ and $\cwftm\Gamma uA$ of the same family $\cwfty\Gamma A$
 there is a family $\cwfty\Gamma{\idfam Atu}$ with at most one element such that 
 $t$ and $u$ are equal if and only if there is an element of $\cwfty\Gamma{\idfam Atu}$,
 and such that $(\idfam Atu)[\gamma] = \idfam{A[\gamma]}{t[\gamma]}{u[\gamma]}$.
\end{defi}

%
%Assume we have a CwFA $(\cat{C},T)$ satisfying moreover 
%
%\begin{gather*}
%\ru{\Gamma\vdash}{\p_\L:\L\Gamma\rightarrow \Gamma}\qquad 
%\ru{\gamma:\Delta\rightarrow \Gamma}{\gamma \circ \p_\L = \p_\L \circ \L\gamma}
%\end{gather*}
%
%We give an interpretation of the CloTT fragment described above in $(\cat{C},T)$.
%
%\begin{gather*}
%\pret{\Gamma,\alpha:\tick\vdash} = \L \pret{\Gamma} \qquad 
%\pret{\Gamma\vdash \later(\alpha\of \tick) A} = \pret{\Gamma}\vdash \R_\pret{\Gamma}\pret{A} \\
%\pret{\Gamma\vdash \lambda(\alpha\of \tick) t} = \pret{\Gamma}\vdash \transp{\pret{t}} 
%\qquad 
%\pret{\Gamma,\alpha'\of\tick,\Gamma'\vdash t\;\alpha'} = \L\pret{\Gamma}.\pret{\Gamma'}\vdash \transp{\pret{t}}[ \p_{\pret{\Gamma'}}]
%\end{gather*}
%Where $\p_{\pret{\Gamma'}}$ is interpreted by induction on $\Gamma'$
%
%\begin{itemize}
%\item If $\Gamma'$ is empty then $\p_\pret{\Gamma'} = \id$
%\item If $\Gamma' = \Gamma'',x\of A$ then $\p_\pret{\Gamma'} = \p_\pret{\Gamma''}\circ \p_\pret{A}$.
%\item If $\Gamma' = \Gamma'',\beta\of \tick$ then $\p_\pret{\Gamma'} = \p_\pret{\Gamma''}\circ \p_\L$
%\end{itemize}

\section{Clocked Type Theory}
\label{sec:clott}

Clocked Type Theory (\clott) is an extension of the tick calculus with guarded recursion and multiple clocks. Rather than
having a global notion of time as in the tick calculus, ticks are associated with clocks and clocks can be assumed 
and universally quantified. In the original presentation of \clott~\cite{bahr2017clocks} judgements had a separate 
context for clock variables, i.e., assumptions of the form $\kappa : \clocktype$. In this paper, clock variables are simply assumed in the 
context as if they were ordinary variables. This simplifies both the syntax and semantics of the language. There are
no operations for forming clocks, only clock variables. It is often convenient to have a single clock constant $\kappa_0$ and this
can be achieved by a precompilation adding $\kappa_0$ as a fresh variable to the contexts.

\begin{figure*}[tbp]
\begin{center}
\textbf{Context formation rules} 
\begin{mathpar}
\inferrule*
{}
{\wfcxt{\cdot}}
\and
\inferrule*
{\istype{\Gamma}A \and x\notin \Gamma}
{\wfcxt{\Gamma, x : A}}
\and
\inferrule*
{\wfcxt{\Gamma} \and \kappa\notin \Gamma}
{\wfcxt{\Gamma, \kappa : \clocktype}}
\and
\inferrule*
{\hastype{\Gamma}\kappa\clocktype \and \tickA\notin \Gamma}
{\wfcxt{\Gamma, \tickA : \kappa}}
\end{mathpar}
\textbf{Type formation rules} 
\begin{mathpar}
  \inferrule*
  {\istype{\Gamma,\tickA:\kappa}{A}}
  {\istype{\Gamma}{\latbind{\tickA}{\kappa} A}}
  \and
  \inferrule*
  {\istype{\Gamma,\kappa : \clocktype}{A}}
  {\istype{\Gamma}{\forall \kappa . A}}
  \and
  \inferrule*
  {\wfcxt{\Gamma}}
  {\istype{\Gamma}{\nats}}
%  \and
%  \inferrule*
%  {\wfcxt{\Gamma} \and \kappa \in \Delta}
%  {\istype{\Gamma}{\gStr}}
%  \and
%  \inferrule*
%  {\istype[\Delta]{\Gamma, x : \nats}{P}\\
%    \hastype{\Gamma}{t}{\gStr}}
%  {\istype{\Gamma}{\predlift Pt}}
\end{mathpar}
\textbf{Typing rules}
\begin{mathpar}
%  \inferrule*
%  {\hastype{\Gamma}{t}{A}\\A \redequiv B\\ \istype\Gamma B}
%  {\hastype{\Gamma}{t}{B}}
%  \and
%    \inferrule*
%  {\hastype{\Gamma, x : A}{t}{B}}
%  {\hastype{\Gamma}{\lambda (x : A) . t}{\Pi (x : A).\, B}}
%  \and
%    \inferrule*
%  {\hastype{\Gamma}{t}{\Pi (x : A).\, B} \\
%  \hastype{\Gamma}uA
%  }
%  {\hastype{\Gamma}{t\,u}{B\subst ux}}
%  \and
  \inferrule*
  {\hastype{\Gamma,\kappa : \clocktype}{t}{A}}
  {\hastype{\Gamma}{\Lambda\kappa. t}{\forall \kappa . A}}
  \and
  \inferrule*
  {\hastype{\Gamma}{t}{\forall \kappa . A}\\
    \hastype\Gamma{\kappa'}\clocktype}
  {\hastype{\Gamma}{t [\kappa']}{A \subst{\kappa}{\kappa'}}}
  \and
  \inferrule*
  {\hastype{\Gamma,\tickA:\kappa}{t}{A}}
  {\hastype{\Gamma}{\tabs{\tickA}{\kappa} t}{\latbind{\tickA}{\kappa} A}}
  \and
  \inferrule*
  {\hastype{\Gamma}{t}{\latbind{\tickA}{\kappa} A}\\\wfcxt{\Gamma,\tickA':\kappa,\Gamma'}}
  {\hastype{\Gamma,\tickA': \kappa,\Gamma'}{\tapp[\tickA'] t}{A\toksubst{\tickA'}{\tickA}}}
  \and
  \inferrule*
  {\hastype{\Gamma,\kappa : \clocktype}{t}{\latbind{\tickA}{\kappa} A}\\ \hastype\Gamma{\kappa'}{\clocktype}}
  {\hastype{\Gamma}{\tappc{(t\,\subst{\kappa}{\kappa'})}}{A\subst{\kappa}{\kappa'}\toksubst{\tickc}{\tickA}}} 
  \and
  \inferrule*
  {\hastype{\Gamma}{t}{\later^\kappa A \to A}}
  {\hastype{\Gamma}{\dfix^\kappa\,t}{\later^\kappa A}}
  \and
  \inferrule*
  {\hastype{\Gamma}{t}{A} \\ A \jeq B}
  {\hastype{\Gamma}{t}B}
  \and
  \inferrule*
  {%\wfcxt{\Gamma} \and 
  \kappa : \clocktype \in \Gamma}
  {\hastype{\Gamma}\kappa\clocktype}
  \and
  \inferrule*
  {%\wfcxt{\Gamma} \and 
   x : A \in \Gamma}
  {\hastype{\Gamma}xA}
%  \and
%  \inferrule*
%    {\hastype{\Gamma}{F\,(\dfix\, F)\, u}{\U}\\
% \hastype\Gamma t{\El{\tapp{(\dfix\, F)}\,u}}}
%   {\hastype{\Gamma}{\unfold\,t}{\El{F\,(\dfix\, F)\, u}}}
%  \and
%  \inferrule*
%  {\hastype{\Gamma}{(\tapp{(\dfix\, F)})\,u}{\U} \\ 
%    \hastype{\Gamma}{t}{\El{F\,(\dfix\, F)\, u}}}
%  {\hastype{\Gamma}{\fold\,t}{\El{\tapp{(\dfix\, F)}\, u}}}
%  \and
%    \inferrule*
%  {\hastype{\Gamma, x : \El A}{B}{\U}}
%  {\hastype{\Gamma}{\hat\Pi x : A.\, B}{\U}}
%  \and
%  \inferrule*
%  {\hastype{\Gamma,\tickA : \kappa}{A}{\U} \\ \kappa \in \Delta}
%  {\hastype{\Gamma}{\clatbind{\tickA}{\kappa} A}{\U}}
%  \and
%  \inferrule*
%  {\hastype[\Delta,\kappa]{\Gamma}{A}{\U}\\\wfcxt{\Gamma}}
%  {\hastype{\Gamma}{\hat\forall\kappa. A}{\U}}
%  \and
%  \inferrule*
%  {\hastype{\Gamma, x : \El A}{B}{\U}}
%  {\hastype{\Gamma}{\hat\Sigma x : A.\, B}{\U}}
%    \and
%  \inferrule*
%  {\wfcxt{\Gamma}}
%  {\hastype{\Gamma}{\hat\Unit}{\U}}
%  \and
%  \inferrule*
%  {\wfcxt{\Gamma}}
%  {\hastype{\Gamma}{\hat\Bool}{\U}}
%  \and
%  \inferrule*
%  {\wfcxt{\Gamma}}
%  {\hastype{\Gamma}{\hat\Nat}{\U}}  
\end{mathpar}
\textbf{Judgemental equality}
%\begin{align*}
%    \gStr & \jeq \nats \times \gStr &
%    \predlift P{\cons t{u}} & \jeq P\subst xt \times \latbind\tickA\kappa{\predlift{P}{\tapp u}} 
%\end{align*}
  \begin{align*}
%    (\lambda x : A.t)s&\jeq  t \subst s x &
    (\Lambda\kappa.t)[\kappa']&\jeq  t \subst\kappa{\kappa'} &
    \Lambda \kappa. (t [\kappa]) &\jeq t & %&\text{if } \kappa \nin \fc{t} \\ % \label{eq:clockEta}\\
   \tapp[\tickB]{(\tabs{\tickA}{\kappa} t)} &\jeq t\toksubst{\tickB}{\tickA} \\ % \label{eq:backNext}\\
    \tabs{\tickA}{\kappa} (\tapp t) &\jeq t & % && \text{if } \tickA \nin \fv{t} \\% \label{eq:nextBack}\\
 %    \pi_i\pair{t_1}{t_2} &\jeq t_i\\
%    \fold (\unfold\, t) &\jeq t \\ %\label{eq:fold-unfold}\\
%    \foldc t &\jeq t & % \label{eq:fold}\\
%    \unfoldc t &\jeq t \\ % \label{eq:unfold}\\
%    \ite\, \tt \, t_1 \, t_2 &\jeq t_1 &
%    \ite\, \ff \, t_1 \, t_2 &\jeq t_2 \\
%    \primrec\, (\suc\, t_1) \, t_2\, t_3 &\jeq t_3\, t_1\, (\primrec\,
%                                          t_1\, t_2\, t_3) & 
%    \primrec\, 0\, t\, s &\jeq t  \\
    \tappc{(\dfix^\kappa \, t)} &\jeq t\,(\dfix^\kappa\,t) &
    \tappc{(\tabs{\tickA}{\kappa} t)} &\jeq t\toksubst{\tickc}{\tickA} 
  \end{align*}
\caption{Selected typing and judgemental equality rules of Clocked Type Theory. The two $\eta$ rules are subject
to the standard conditions of $\kappa$ and $\tickA$, respectively, not appearing in the term $t$.}
\label{fig:clott:typing}
\end{center}
\end{figure*}

The rules for typing judgements and judgemental equality are given in Figure~\ref{fig:clott:typing}. These should 
be seen as an extension of a dependent type theory with $\Pi$- and $\Sigma$-types, as well as 
extensional identity types. The rules for these are completely standard, and thus are omitted from
the figure. We write $\jeq$ for judgemental equality and $\idty Atu$ for identity types. The model will also model the 
\emph{identity reflection} rule
\[
\inferrule*{\hastype{\Gamma}{p}{\idty Atu}}
{\jeqjud{\Gamma}tuA}
\]
of extensional type theory. 

%In this paper we do not model general guarded recursive types, but treat just two for the purpose of illustration. The first is a
%type of guarded streams $\gStr$ of natural numbers and the second is a construction for lifting predicates on natural 
%numbers to predicates on guarded streams. These are as in Example~\ref{ex:streams} except that the delay now is 
%associated with a clock variable $\kappa$. 

The type of the guarded fixed point operator $\dfix$ uses the abbreviation $\later^\kappa A$
for $\latbind{\tickA}{\kappa} A$ where $\tickA$ does not occur free in $A$. This operator is 
useful in combination with guarded recursive types such as 
a type of guarded streams $\gStr$ satisfying $\gStr\jeq \nats \times \later^\kappa \gStr$. This type is similar to the one from 
Example~\ref{ex:streams} except that the delay now is 
associated with a clock variable $\kappa$. We will see how to define
such guarded recursive types in the next section. Given $\gStr$ we can
use $\dfix$ for recursive programming with guarded streams, e.g., when defining a constant stream of zeros as
$\dfix^\kappa (\lambda x. \cons 0x)$. The type of $\dfix$ ensures that only productive recursive definitions are typeable, e.g.,
$\dfix^\kappa (\lambda x . x)$ is not. % well typed.

The tick constant $\tickc$ gives a way to execute a delayed computation $t$ of type $\later^\kappa A$ to compute a value of
type $A$. In particular, if $t$ is a fixed point, application to the tick constant unfolds the fixed point once. This explains the need
to name ticks in \clott: Substitution of $\tickc$ for a tick variable $\tickA$ in a term allows for all fixed points applied to $\tickA$ 
in the term to be unfolded. In particular, the names of ticks are crucial for the strong normalisation result for \clott\ in \cite{bahr2017clocks}. 

Intuitively $\tickc$ is a constant of type $\kappa$ for any clock $\kappa$. However, since 
clocks are not types, $\tickc$ can only be introduced by applying it to a term of type 
$\latbind \tickA\kappa A$, and such applications must moreover be restricted to ensure
productivity. In particular a term such as 
$\lambda x : \later^\kappa A . \tappc x$ should not be well typed, as this would give a way 
of inhabiting all types using $\dfix$. 
The typing rule for application to the tick constant 
ensures this by assuming that the clock $\kappa$ associated to the delay does not occur freely 
in the type of any other variable in the context of $t$. For example,
the rule 
\[
  \inferrule*
  {\hastype{\Gamma, \kappa : \clocktype}{t}{\latbind{\tickA}{\kappa} A}}
  {\hastype{\Gamma, \kappa : \clocktype}{\tappc{t}}{A\toksubst{\tickc}{\tickA}}} 
\]
is admissible, which can be proved using a weakening lemma. This rule, however, is not closed
under variable substitution, which is the motivation for the more general rule of Figure~\ref{fig:clott:typing}. The typing
rule is a bit unusual, in that it involves substitution in the term in the conclusion. In the elaborated syntax for \clott\ to be interpreted
in the model in Section~\ref{sec:interp:syntax}, this substitution is replaced by an explicit substitution binding $\kappa$ in $t$ rather 
than substituting it away. 
%
%
%We shall see in Section~\ref{sec:interp:syntax}
%that this causes extra proof obligations for welldefinedness of the denotational semantics. \rasmus{Do we actually model this?}

Universal quantification over clocks allows for coinductive types to be encoded using guarded recursive types~\cite{atkey13icfp}. 
For example
$\Str \defeq \forall\kappa . \gStr$ is a coinductive type of streams. The head and tail maps $\sym{hd} : \Str \to \nats$ and 
$\sym{tl} : \Str \to \Str$ can be defined as
\begin{align*}
 \sym{hd}(xs) & \defeq \pi_1(xs[\kappa_0]) &
 \sym{tl}(xs) & \defeq  \Lambda \kappa . (\tappc{(\pi_2(xs[\kappa]))})
\end{align*}
using the clock constant $\kappa_0$. It is easily seen that 
$\Str \jeq \forall\kappa . (\nats \times \gStr) \cong \forall\kappa . \nats \times \forall\kappa . \later^\kappa \gStr$. To
prove that $\forall\kappa . \nats \cong \nats$ and $\forall\kappa . \later^\kappa \gStr \cong \Str$ ensuring the isomorphism
expected by a stream type, one needs two irrelevance axioms.

The first of these is the \emph{clock irrelevance axiom}
\begin{equation} \label{eq:cirr}
\inferrule*
{\hastype{\Gamma}{t}{\forall\kappa . A} \\ \kappa \notin \fc{A}}
{\hastype\Gamma{\cirr t}{\forall{\kappa'} . \forall{\kappa''} . \idty{A}{t [\kappa']}{t [\kappa'']}}}
\end{equation}
% crucial for correctness of the encoding of coinductive types~\cite{atkey13icfp}. 
In the second hypothesis for the rule $\fc A$ stands for the free clocks of $A$ defined in the standard way. 
This rule can be used to prove that $\forall\kappa . A$ is isomorphic 
to $A$ if $\kappa$ is not free in $A$, in particular $\forall\kappa . \nats \cong \nats$. 
The second type isomorphism above requires the \emph{tick irrelevance axiom} 
\begin{equation}\label{eq:tirr}
\inferrule*
{\hastype{\Gamma}{t}{\later^\kappa A}}
{\tirr t :  \latbind{\tickA}{\kappa}  \latbind{\tickA'}{\kappa} \idty{A}{\tapp{t}}{\tapp[\tickA'] t}}
\end{equation}
which states that the identity of ticks is irrelevant for the equality theory, despite being crucial for the reduction semantics. 

Finally we mention the fixed point unfolding axiom \cite{GCTT}
\begin{equation} \label{eq:pfix}
\inferrule*
{\hastype[]{\Gamma}{t}{\later^\kappa A \to A} }
{\hastype{\Gamma}{\pfix[\kappa] \, t}{\latbind{\tickA}{\kappa}{\idty{A}{\tapp{(\dfix^{\kappa} t)}}{t(\dfix^{\kappa} t)}}}}
\end{equation}
In an extensional type theory this implies the judgemental fixed point unfolding equality 
$\tappc{(\dfix^\kappa \, t)} \jeq t\,(\dfix^\kappa\,t)$, and so, since the model presented in this paper is extensional, 
it will suffice to model $\pfix$. We write $\fix^\kappa t \defeq t(\dfix^\kappa t) : A$. Note that by extensionality, then 
\begin{align} \label{eq:fix}
t(\tabs\tickA\kappa{\fix^\kappa t})  & \jeq t(\tabs\tickA\kappa{\tapp{(\dfix^\kappa t)}}) 
 \jeq t(\dfix^\kappa t) \jeq \fix^\kappa t
\end{align}

Apart from the clock irrelevance axiom, the rules for universal quantification over 
clocks are exactly those for a $\Pi$-type indexed over $\clocktype$, except that $\clocktype$
is not a type. The latter means that $\clocktype$ can not appear positively in types, e.g.,
$\clocktype\to \clocktype$ is not wellformed.  To see why $\clocktype$ should not be a type,
note that clock irrelevance states that for a closed type $A$, all elements of $\forall\kappa . A$ are 
constant functions from clocks to $A$. Allowing $A = \clocktype$ would force all clocks to be equal.
In the model there will be an object $\clk$ modelling
$\clocktype$ and universal quantification over clocks will be modelled as a $\Pi$-type. 

%Tick irrelevance implies fixed point unfolding
%\[
%\inferrule*
%{\hastype[\Delta, \kappa]{\Gamma}{f}{\later^\kappa A \to A} \\ \wfcxt{\Gamma} \\ \kappa'\in \Delta}
%{\hastype{\Gamma}{\pfix[\kappa'] \, f\subst{\kappa}{\kappa'}}{\latbind{\tickA}{\kappa}{\idty{A}{\tapp{(\dfix^{\kappa'} f\subst{\kappa}{\kappa'})}}{(f(\dfix^{\kappa'} f))\subst{\kappa}{\kappa'}}}}}
%\]

%The type theory \clott\ as defined in \cite{bahr2017clocks} also has guarded recursive types and a universe, but did not have
%identity types. We leave universes for future work. Guarded recursive types are usually defined as fixed points on the 
%universe~\cite{birkedal2013intensional} and are therefore also left for future work. See Section~\ref{sec:conclusion} 
%for a brief discussion. 

\subsection{Universes}
\label{sec:syntax:universes}

In order to maintain consistency with the clock irrelevance axiom, universes in \clott\ are indexed by 
clock contexts. To see why this is necessary, note that 
%note that clock irrelevance states that for a closed type $A$, all elements of $\forall\kappa . A$ are constant functions from clocks
%to $A$. 
naively adding a closed universe $\univ{}$, with a map $\later : \univ{} \to \forall\kappa . \univ{}$,
the clock irrelevance principle would state that the type operation $\later^\kappa (-)$ would
be independent of $\kappa$ on small types.
With the subscripting, the type operation $\later$ can be restricted on $\univ\Delta$ to the $\kappa \in \Delta$ avoiding this problem. 
Note that the formulation of \clott\ used here differs from that presented in \cite{bahr2017clocks}, which for simplicity used a single 
universe but retained consistency since the clock irrelevance axiom was not modelled (although mentioned in the paper). The presentation
of universes used here follows that of \gdtt~\cite{GDTT,GDTTmodel}, but extends it with ticks. 

The typing rules and equalities for universes are presented in Figure~\ref{fig:universes}. The subscript $\Delta$ of a universe is
a \emph{set} of clock variables, meaning in particular, that if $\Delta = \Delta'$ are equal as sets (contain the same elements),
then $\univ\Delta \jeq \univ{\Delta'}$. 
The universes are Tarski style, and we restrict
to a single universe level. The universes enjoy a form of polymorphism in the clock context: Inclusions of clock contexts induce
inclusions of universes, and these commute with the operations on the universe. For simplicity, we just include the rules for universal
quantification over clocks and $\later$. The rules for $\Pi$-, and $\Sigma$- types are the standard ones, indexed by a clock context,
plus a rule stating that these commute with the universe inclusions, see~\cite{GDTTmodel} for details. 
We also assume a code $\natscode$ for natural numbers in each universe.

\begin{figure}[tbp]
\textbf{Formation and typing rules}
\begin{mathpar}
 \inferrule*{\Delta = \{\kappa_1, \dots, \kappa_n\} \and \hastype{\Gamma}{\kappa_i}\clocktype \text{ for }i=1,\dots, n
 }
 {\istype{\Gamma}{\univ{\Delta}}} \and
 \inferrule*{
 \hastype{\Gamma}{t}{\univ{\Delta}}
 }
 {\istype{\Gamma}{\elems{\Delta}(t)}} 
 \and
 \inferrule*{
 \hastype{\Gamma}{t}{\univ{\Delta}} \\  \Delta \subseteq \Delta'
 }{
 \hastype{\Gamma}{\univin{\Delta}{\Delta'}(t)}{\univ{\Delta'}}}
 \and
%  \inferrule*{
% \hastype{\Gamma}{A}{\univ{\Delta'}} \\
% \hastype{\Gamma, x: \elems{\Delta'}(A)}B{\univ{\Delta'}}
% }
% {
%  \hastype{\Gamma}{\depprodcode{\Delta'}{x}{A}B}{\univ{\Delta'}}
% }  \and
% \inferrule*{
% \hastype{\Gamma}{A}{\univ{\Delta'}} \\
% \hastype{\Gamma, x: \elems{\Delta'}(A)}B{\univ{\Delta'}}
% }
% {\hastype{\Gamma}{\depsumcode{\Delta'}{x}{A}B}{\univ{\Delta'}}}  
% \and
  \inferrule*{
 \hastype{\Gamma,\kappa : \clocktype}A{\univ{\Delta,\kappa}}
 }
 {
  \hastype{\Gamma}{\forallcode{\Delta}\kappa . A}{\univ{\Delta}}
 } \and
 \inferrule*{
 \hastype{\Gamma, \tickA : \kappa}{A}{\univ{\Delta}} \and \kappa \in \Delta
 }{
 \hastype{\Gamma}{\latbindcode\tickA\kappa A}{\univ{\Delta}}
 }
\end{mathpar}
\textbf{Equations}
\begin{align*}
%  \elems{\Delta'}(\depprodcode{\Delta'}{x}{A}B) & \jeq \dprod x{\elems{\Delta'} (A)}{\elems{\Delta'} (B)} \\
% \elems{\Delta'}(\depsumcode{\Delta'}{x}{A}B) & \jeq \dsum x{\elems{\Delta'} (A)}{\elems{\Delta'} (B)} \\
 \elems{\Delta}(\forallcode{\Delta'}\kappa . A) & \jeq \forall\kappa . \elems{{\Delta},\kappa} (A) &
 \elems{\Delta}{(\latbindcode \tickA \kappa A)} & \jeq \latbind\tickA\kappa \elems{\Delta}(A) \\
% \depprodcode{\Delta''}x{\univin{\Delta'}{\Delta''} (A)}{\univin{{\Delta'}}{\Delta''} (B)} 
% & \jeq \univin{{\Delta'}}{\Delta''} (\depprodcode{{\Delta'}}{x}{A}B) \\
% \depsumcode{\Delta''}x{\univin{{\Delta'}}{\Delta''} (A)}{\univin{{\Delta'}}{\Delta''} (B)} 
% & \jeq \univin{{\Delta'}}{\Delta''} (\depsumcode{{\Delta'}}{x}{A}B) \\
 \univin{{\Delta}}{\Delta'}{(\forallcode{\Delta'}\kappa . A)} & \jeq \forallcode{\Delta'}\kappa . \univin{({\Delta},\kappa)}{(\Delta',\kappa)}{(A)} &
 \univin{{\Delta}}{\Delta'}{(\latbindcode \tickA \kappa A)} & \jeq \latbindcode \tickA \kappa{\univin{\Delta}{\Delta'}{(A)}} \\
 \elems{\Delta'}(\univin{\Delta}{\Delta'} (t)) & \jeq \elems{\Delta}(t) &
 \univin{\Delta'}{\Delta''}(\univin{\Delta}{\Delta'}(t)) & \jeq \univin{\Delta}{\Delta''}(t) \\
 \univin{\Delta}{\Delta}(t) & \jeq t
\end{align*}
\begin{center}
\caption{Universes in \clott.}
\label{fig:universes}
\end{center}
\end{figure}

As mentioned above, guarded recursive types can be encoded as fixed points on the universe. For example,
$\gStr \defeq \elems\kappa(\gStrcode)$ where $\gStrcode \defeq \fix^\kappa (\lambda X. \natscode \timescode \latbindcode\tickA\kappa{\tapp X})$,
and $\timescode$ is the code for binary products encoded using $\Sigma$-types in the standard way. By (\ref{eq:fix}) this gives
\begin{align*}
 \gStr & \jeq \elems\kappa(\natscode \timescode \latbindcode\tickA\kappa{\gStrcode})\\
 & \jeq \nats \times \elems\kappa(\latbindcode\tickA\kappa{\gStrcode}) \\
 & \jeq \nats \times \latbind\tickA\kappa{\gStr} 
\end{align*}
Similarly, if $\syncode P : \nats \to \univ{\kappa}$ and $P(x) \jeq \elems\kappa{(\syncode P(x))}$ we can construct a lifting
$\predliftbare P$ of $P$ to a predicate on guarded streams as in Example~\ref{ex:streams}.
For this, define $\predlift P{xs} \defeq \elems\kappa(\predliftcode P(xs))$ where 
\[
 \predliftcode P \defeq \fix^\kappa(\lambda X . \lambda(\cons x{xs}) . \syncode P(x) \timescode 
 \latbindcode\tickA\kappa{\tapp X(\tapp{xs})}) : \gStr \to \univ\kappa
\]
Here the type of the variable $X$ is $\later^\kappa (\gStr \to \univ\kappa)$. As above, one can then 
verify that $\predlift P{\cons x{xs}} \jeq P(x) \times \latbind\tickA\kappa{\predlift P{\tapp{xs}}}$.

The presentation of \clott\ in~\cite{bahr2017clocks} 
had guarded recursive types as a primitive type formation rule.
This was because the version of \clott\ used there did not have identity types, and 
so did not have the fixed point unfolding axiom (\ref{eq:pfix}). Fixed points only 
unfolded when applied to $\tickc$. As a consequence (\ref{eq:fix}) did 
not hold, so the encoding of recursive types as fixed points on the universe
was not possible. Note that in an intensional version of \clott, equality (\ref{eq:fix}) 
holds only propositionally, making the guarded recursive types unfold only up 
to equivalence of types as in Guarded Cubical Type Theory~\cite{GCTT}.

The model constructed in this paper models extensional \clott\ including the axioms
(\ref{eq:cirr}), (\ref{eq:tirr}) and (\ref{eq:pfix}).

%\section{Presheaf model of CloTT}
\section{A presheaf category}
\label{sec:basic:model}

The setting for the denotational semantics of \clott\ is a category of covariant presheaves over a category $\catT$ of time objects, 
which we now define. 
This category has previously been used to give a model of \gdtt~\cite{GDTTmodel} and 
a slight variant has been used to model Guarded Computational Type 
Theory~\cite{sterling2018guarded}.

We will assume given a countably infinite set $\cv$ of (semantic) clock variables, for which we use $\lambda, \lambda',\dots$ to range over. 
A \emph{time object} is a pair $\timeobj\FSA\vartheta$ where $\FSA$ is a finite subset of $\cv$ and 
$\vartheta:\FSA\rightarrow \nats$ is a map giving the number of ticks left on each clock in $\FSA$. We will write the finite sets
$\FSA$ as lists writing e.g., $\FSA, \lambda$ for $\FSA \cup\{\lambda\}$ and $\subex\vartheta\lambda n$ for the extension
of $\vartheta$ to $\FSA,\lambda$, or indeed for the update of $\vartheta$, if $\vartheta$ is already defined on $\lambda$. 
The time objects form a category $\catT$ whose morphisms $\timeobj\FSA\vartheta \to \timeobj{\FSA'}{\vartheta'}$ are functions 
$\tau:\FSA \rightarrow \FSA'$ such that 
$\vartheta'\tau\leq \vartheta$ in the pointwise order. The inequality allows for time to pass in a morphism, but morphisms can
also synchronise clocks in $\FSA$ by mapping them to the same clock in $\FSA'$, or introduce new clocks if $\tau$ is not
surjective. Define $\grtotal$ to be the category $\set^{\catT}$ of \emph{covariant} presheaves
on $\catT$. The topos of trees~\cite{Birkedal-et-al:topos-of-trees} 
can be seen as a restriction of this where time objects always have a single clock. 

If $\Gamma$ is a presheaf, $\gamma\in \Gamma\timeobj\FSA\vartheta$ and 
$\sigma : \timeobj\FSA\vartheta \to \timeobj{\FSA'}{\vartheta'}$, we will write $\sigma \cdot \gamma$ for $\Gamma(\sigma)(\gamma)$, 
the functorial action of $\Gamma$ applied to $\gamma$. With this notation, a presheaf is simply an indexed family of sets with actions
satisfying 
\begin{align}\label{eq:cat:action}
 \sigma\cdot (\tau \cdot \gamma) & = (\sigma\circ \tau)\cdot \gamma & \id \cdot \gamma & = \gamma
\end{align}
As for any presheaf category, $\grtotal$ caries a natural CwF structure in which a family over a presheaf $\Gamma$
is a presheaf over the category $\int \Gamma$ of elements of $\Gamma$. Recall that this has as objects pairs 
$\pair{\timeobj\FSA\vartheta}{\gamma}$ such that $\gamma \in \Gamma\timeobj\FSA\vartheta$, and morphisms
from $\pair{\timeobj\FSA\vartheta}{\gamma}$ to $\pair{\timeobj{\FSA'}{\vartheta'}}{\gamma'}$ morphisms 
$\sigma : \timeobj\FSA\vartheta \to \timeobj{\FSA'}{\vartheta'}$ of $\catT$ such that $\sigma \cdot \gamma = \gamma'$. 
Unfolding this definition, a family is a collection of sets $A_{\timeobj\FSA\vartheta}(\gamma)$ and maps
$\sigma : A_{\timeobj\FSA\vartheta}(\gamma) \to A_{\timeobj{\FSA'}{\vartheta'}}(\sigma \cdot\gamma)$ 
satisfying (\ref{eq:cat:action}). An element of $A$ is a family of elements 
$t_{\timeobj\FSA\vartheta}(\gamma) \in A_{\timeobj\FSA\vartheta}(\gamma)$ such that 
$\sigma\cdot (t_{\timeobj\FSA\vartheta}(\gamma)) = t_{\timeobj{\FSA'}{\vartheta'}}(\sigma \cdot \gamma)$. 
We will often omit the subscript and simply write $A(\gamma)$ and $t(\gamma)$. 
Abstractly, an element of $A$ is simply a global element of $A$ considered as a 
covariant presheaf over $\int\Gamma$. If $f : A \to B$ is a morphism of presheaves, and 
$t$ is an element of $A$, we can thus compose $f$ and $t$ to get an element $f \circ t$ of $B$.

Recall the following standard lemma~\cite{Hofmann1997}.

\begin{lem}
 The CwF $\grtotal$ models $\Pi$, $\Sigma$ and extensional identity types. 
\end{lem}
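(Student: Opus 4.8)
The plan is to invoke the standard presheaf semantics of dependent type theory: since $\grtotal = \set^{\catT}$ is an ordinary presheaf category, the three type formers are built exactly as in Hofmann~\cite{Hofmann1997}, the proof consisting of exhibiting them fibrewise over $\int\Gamma$ and checking stability under reindexing. Recall from the preceding discussion that a family $\cwfty\Gamma A$ is a collection of sets $A_{\timeobj\FSA\vartheta}(\gamma)$ with restriction maps satisfying~(\ref{eq:cat:action}), and an element is a pointwise-natural choice of elements of these sets. For $\Sigma$-types, given $\cwfty\Gamma A$ and $\cwfty{\compr\Gamma A}B$, I would define the fibres of the $\Sigma$-family over $\Gamma$ by $\sum_{a \in A_{\timeobj\FSA\vartheta}(\gamma)} B_{\timeobj\FSA\vartheta}(\pair\gamma a)$ with componentwise action; pairing, the two projections, and the $\beta\eta$-laws are then immediate, and reindexing commutes with this on the nose.

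For $\Pi$-types with the same data, an element of the $\Pi$-family over $\Gamma$ at $\pair{\timeobj\FSA\vartheta}\gamma$ is the usual Kripke datum: an assignment, to every morphism $\sigma : \timeobj\FSA\vartheta \to \timeobj{\FSA'}{\vartheta'}$ of $\catT$ and every $a \in A_{\timeobj{\FSA'}{\vartheta'}}(\sigma\cdot\gamma)$, of an element of $B_{\timeobj{\FSA'}{\vartheta'}}(\pair{\sigma\cdot\gamma}a)$, subject to the evident naturality in $\sigma$. Abstraction, application and $\beta\eta$ are standard; the only point needing care is stability under substitution $\gamma : \Delta \to \Gamma$, which follows because a morphism out of a time object postcomposed with the action of $\gamma$ is computed in the same way as the action of $\gamma$ followed by that morphism --- precisely where the naturality clause in the definition is used.

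For extensional identity types, given $\cwftm\Gamma tA$ and $\cwftm\Gamma uA$, I would let the fibre $(\idfam Atu)_{\timeobj\FSA\vartheta}(\gamma)$ be a fixed singleton when $t_{\timeobj\FSA\vartheta}(\gamma) = u_{\timeobj\FSA\vartheta}(\gamma)$ and empty otherwise. This is a well-defined presheaf over $\int\Gamma$ because $t$ and $u$ are natural: $t(\gamma) = u(\gamma)$ forces $t(\sigma\cdot\gamma) = \sigma\cdot t(\gamma) = \sigma\cdot u(\gamma) = u(\sigma\cdot\gamma)$, so the (necessarily unique) restriction maps exist. It has at most one element; since equality of elements in this CwF is pointwise, an element exists iff $t = u$ as elements; and $(\idfam Atu)[\gamma] = \idfam{A[\gamma]}{t[\gamma]}{u[\gamma]}$ holds by construction. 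Reflexivity and the identity-reflection direction are then trivial, yielding extensional $\mathsf{Id}$-types.

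The main obstacle, such as it is, is bookkeeping rather than mathematics: verifying the Beck--Chevalley-style naturality of the $\Pi$-former under reindexing. Since all of this is routine presheaf semantics, carried out for an arbitrary base category in~\cite{Hofmann1997}, it is legitimate simply to cite that reference --- which is what the word ``standard'' attached to the lemma already signals.
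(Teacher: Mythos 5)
Your proof is correct and follows exactly the standard presheaf-semantics construction that the paper itself invokes by citing Hofmann without further argument. The fibrewise $\Sigma$, the Kripke-style $\Pi$, and the singleton-or-empty $\mathsf{Id}$-family are precisely the constructions that citation refers to, so this is the same approach, merely spelled out.
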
 

Modelling $\clocktype$ as the object $\clk$ in $\grtotal$ defined as 
\begin{align*}
\clk\timeobj\FSA\vartheta & = \FSA & \tau\cdot \lambda & = \tau(\lambda)
\end{align*}
universal quantification can be modelled as a $\Pi$-type over $\clk$. 

\section{A dependent right adjoint}
\label{sec:dradjoint}

This section defines the dependent right adjoint to be used for modelling ticks in \clott. To talk about ticks we need a clock
in hand, and the smallest setting this happens in is the syntactic context $\kappa : \clocktype$, 
modelled as $\clk$.
In the CwF contexts extending this small context can be considered 
presheaves over the category $\int\clk$ of elements of $\clk$, 
and so in the following we will construct a dependent right adjoint on this. 
In Section~\ref{sec:modelling:ticks} we will see how to lift this to model ticks in \clott.

We write $\TStar$ for $\int \clk$ and $\grstar\defeq \set^{\TStar}$. Spelling out the definition, 
an object of $\TStar$ is a  triple $\triple\FSA\vartheta\lambda$
where $\lambda \in\FSA$ and a morphism $\triple\FSA\vartheta\lambda$ to $\triple{\FSA'}{\vartheta'}{\lambda'}$
is a morphism $\sigma : \timeobj\FSA\vartheta \to \timeobj{\FSA'}{\vartheta'}$ such that $\sigma(\lambda) = \lambda'$. 

%\subsection{The \cwfa{} structure of $\prcat{\op{\catT}}$}
\subsection{The right adjoint $\tlater$}

%We now define the right adjoint endofunctor $\tlater$ on $\grstar$. 
Recall first that in the topos of trees the functor $\tlater$ is defined as 
$(\tlater F)(n + 1) = Fn$ and $(\tlater F)(0) = \{\ast\}$. 
This generalises in a straightforward way to $\grstar$ by 
\begin{align*}
& (\tlater F) \triple\FSA\vartheta \lambda = 
\begin{cases} 
F \triple{\FSA}{\vartheta[\lambda-]}\lambda &\vartheta(\lambda)>0\\
\{\ast\}& \text{otherwise}
\end{cases}
\end{align*}
where $\vartheta[\lambda-] (\lambda)= \vartheta(\lambda) - 1$ and $\vartheta[\lambda-](\lambda') = \vartheta(\lambda')$ for $\lambda' \neq \lambda$.
%\[\vartheta[f(\kappa)-] (\lambda)= 
%    \begin{cases}
%    \vartheta(f(\kappa)) - 1 & \lambda = f(\kappa) \\
%    \vartheta(\lambda)       & \ow
%    \end{cases}
%\]
The presheaf action of
$\tlater F$ is simply inherited from $F$ by noticing that a map $\sigma : \TobjA \to \TobjB$, 
induces a map
\[
\cminus\sigma\lambda : \TobjAminus \to \TobjBminus \, .
\]

\begin{lem} \label{lem:later:types:terms}
 The functor $\tlater : \grstar \to \grstar$ extends to families and elements.
\end{lem}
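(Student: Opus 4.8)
The plan is to verify the three conditions of Definition~\ref{def:cwfa} for $\R = \tlater$ on $\grstar$. Recall that a family $A$ over $\Gamma\in\grstar$ consists of sets $A_{\TobjA}(\gamma)$ for $\gamma\in\Gamma\TobjA$ together with compatible restriction maps. I would define $\tlaterTy A$ over $\tlater\Gamma$ by the same case split used for $\tlater$ on presheaves: when $\vartheta(\lambda)>0$ an element of $(\tlater\Gamma)\TobjA$ is some $\bar\gamma\in\Gamma\TobjAminus$, and I put
\[
(\tlaterTy A)_{\TobjA}(\bar\gamma) = A_{\TobjAminus}(\bar\gamma),
\]
whereas when $\vartheta(\lambda)=0$ the unique element of $(\tlater\Gamma)\TobjA$ is $\ast$ and I put $(\tlaterTy A)_{\TobjA}(\ast)=\{\ast\}$. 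The restriction maps are inherited from those of $A$ through the operation $\sigma\mapsto\cminus\sigma\lambda$ already introduced above: for $\sigma:\TobjA\to\TobjB$ in $\TStar$ one has $\vartheta'(\lambda')=\vartheta'(\sigma(\lambda))\leq\vartheta(\lambda)$, so $\vartheta(\lambda)=0$ forces $\vartheta'(\lambda')=0$; the only non-trivial case is $\vartheta(\lambda),\vartheta'(\lambda')>0$, where I reindex $A$ along $\cminus\sigma\lambda:\TobjAminus\to\TobjBminus$, and otherwise the codomain is a singleton. Functoriality of $\tlaterTy A$ then reduces to functoriality of $A$ together with $\cminus\id\lambda=\id$ and $\cminus{(\sigma\circ\tau)}\lambda=\cminus\sigma\lambda\circ\cminus\tau\lambda$. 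Dually, for $\cwftm\Gamma tA$ I set $(\tlaterTm t)_{\TobjA}(\bar\gamma)=t_{\TobjAminus}(\bar\gamma)$ when $\vartheta(\lambda)>0$ and $\ast$ otherwise; naturality of $\tlaterTm t$ is immediate from naturality of $t$ by the same case analysis.

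For compatibility with reindexing, given $\gamma:\Delta\to\Gamma$ the presheaf map $\tlater\gamma$ acts as $\gamma$ (at the decremented stage) when $\vartheta(\lambda)>0$ and trivially otherwise, so unfolding $(\tlaterTy A)[\tlater\gamma]$ and $\tlaterTy(A[\gamma])$ stage by stage yields literally the same family, and likewise $(\tlaterTm t)[\tlater\gamma]=\tlaterTm(t[\gamma])$. For compatibility with comprehension, $\cpair{\tlater\p_A}{\tlaterTm\q_A}:\tlater(\Gamma.A)\to\tlater\Gamma.\tlaterTy A$ is by construction a morphism of presheaves, so it is enough to check it is a bijection at each stage: when $\vartheta(\lambda)>0$ both $\tlater(\Gamma.A)\TobjA$ and $(\tlater\Gamma.\tlaterTy A)\TobjA$ unfold to $\{(\bar\gamma,a)\mid\bar\gamma\in\Gamma\TobjAminus,\ a\in A_{\TobjAminus}(\bar\gamma)\}$ with the map acting as the identity, and when $\vartheta(\lambda)=0$ both sides are singletons. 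Hence it is a natural isomorphism and Definition~\ref{def:cwfa} is satisfied.

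The argument is essentially routine; the only point that requires care is the case split on $\vartheta(\lambda)$ and, in particular, the choice of a \emph{singleton} for $\tlaterTy A$ over the $\ast$-stages. This choice is precisely what forces the comprehension square to be an isomorphism rather than merely a map -- once it is fixed, every clause is a mechanical unfolding of definitions.
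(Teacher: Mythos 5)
Your proposal is correct and follows essentially the same route as the paper: define $\tlaterTy A$ (and $\tlaterTm t$) by the same case split on $\vartheta(\lambda)$, inherit the presheaf action from $\cminus\sigma\lambda$, and observe that the comprehension map is stage-wise a bijection (trivially at the $\vartheta(\lambda)=0$ stages, the identity otherwise). You spell out the reindexing and functoriality checks in more detail than the paper, but the key definition and the verification of Definition~\ref{def:cwfa} are the same.
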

\begin{proof}
%We just give the definitions. 
If $A$ is a type over $\Gamma$ and $\gamma \in (\tlater \Gamma) \triple \FSA \vartheta \lambda$  define
\[(\tlaterTy A)_{\triple \FSA \vartheta \lambda}(\gamma) = 
            \begin{cases} 
                      \{\ast\} & \vartheta(\lambda) = 0\\
                      A_{\triple \FSA {\vartheta[\lambda-]} \lambda}(\gamma) & \text{otherwise}
                      \end{cases}
                      \]
%\[
%(\tlater^\kappa \pret{t})_{\la \la \FSA;\vartheta\ra;f\ra} \alpha = 
%\begin{cases}
% \ast & \vartheta(f(\kappa)) = 0\\
% \pret{t}_{\la \la \FSA;\vartheta[f(\kappa)-]\ra;f\ra} \alpha & \text{otherwise}
%\end{cases}
%\]
To see that this commutes with comprehension, note that 
${(\compr{\tlater\Gamma}{\tlaterTy A}){\triple \FSA \vartheta \lambda}}$ equals
$\tlater(\compr\Gamma A) {\triple \FSA \vartheta \lambda}$ when 
$\vartheta(\lambda) > 0$ and $\{\ast\}\times \{\ast\}$ when 
$\vartheta(\lambda) = 0$. The definition for elements is similar. 
%
%The isomorphism $\isocomp{\Gamma}{A}$ is given by
%$
%{\isocomp{\Gamma}{A}}_{\triple \FSA \vartheta f} = 
%\begin{cases}
%\la \ast,\ast\ra \mapsto \ast & \vartheta(f(\kappa)) = 0\\
%\id & \text{otherwise}
%\end{cases}
%$
%\rasmus{Fix notation here}
\end{proof}

\begin{exa} 
%To give some intuition for the model, co the object that will be used to model the 
%guarded stream type of \clott. 
As an example of a model of a type, recall the type of guarded streams satisfying 
$\gStr[\kappa] \jeqty \nats \times \later^\kappa \gStr[\kappa]$ from Section~\ref{sec:clott}.
This type is definable in the clock context $\kappa : \clocktype$, and so will be 
interpreted as a presheaf  in $\grstar$ defined as
$\pret{\Str^\kappa} \triple \FSA \vartheta \lambda =\nats^{\vartheta(\lambda)+1} \times \{\ast\}$.
We will assume that the products in this associate to the right, so that this is the type of tuples of the form
$(n_{\vartheta(\lambda)},(\dots, (n_0, \ast))\dots)$. This is needed to model the equality 
$\gStr[\kappa] \jeqty \nats \times \later^\kappa \gStr[\kappa]$, rather than just an isomorphism of types.
Given a predicate $x :  \Nat\vdash P$ the lifting of $P$ to streams $\predliftbare P$, 
described in Section~\ref{sec:syntax:universes} can be modelled as
\[
\pret{\predliftbare P}_{\triple \FSA \vartheta \lambda}(n_{\vartheta(\lambda)},(\dots, (n_0, \ast))\dots)
= \{(x_{\vartheta(\lambda)},(\dots, (x_0, \ast))\dots) \mid \forall i . x_i \in \pret P_{\triple \FSA \vartheta \lambda}(n_i) \}
%= \left(\prod_{i=0}^{\vartheta(\lambda)} \pret P_{\triple \FSA \vartheta f}(n_i)\right) \times \{\ast\}
\]
It is a simple calculation (using the definitions below) that these interpretations model the type equalities mentioned above. 
\end{exa}

\subsection{The left adjoint $\tearlier$}

In the topos of trees, the functor $\tlater$ defined above has a left adjoint $\tearlier$ defined as $(\tearlier F) n = F (n+1)$. 
At first sight it would seem that one can similarly define a left adjoint $\tearlier$ to $\tlater$ on $\grstar$ by
$(\tearlier F)\triple{\FSA}{\vartheta}{\lambda} = F\triple\FSA{\vartheta[\lambda +]}{\lambda}$, 
where $\vartheta[\lambda +]$ is defined similarly to
$\vartheta[\lambda-]$.
Unfortunately, $\tearlier F$ so described is not a presheaf because it has no well-defined action on maps since a map 
$\tau: \triple{\FSA}{\vartheta}\lambda \rightarrow \triple{\FSA'}{\vartheta'}{\lambda'}$ does not necessarily induce a map 
$\triple{\FSA}{\vartheta[\lambda+]}\lambda \to \triple{\FSA'}{\vartheta'[\lambda'+]}{\lambda'}$: If $\lambda'' \neq \lambda$
satisfies 
$\tau(\lambda'') = \lambda'$ there is no guarantee that 
$\vartheta'[\lambda'+](\tau(\lambda'')) \leq \vartheta[\lambda+](\lambda'')$.

To define the left adjoint, we instead first give an abstract description of $\tlater$.
Let $\TStarZ$ be the category defined as
$\TStar$, except that $\vartheta$ in an object $\triple\FSA\vartheta \lambda$ is a map of type $\FSA \to \ints$, i.e., 
the values can be negative. There is an inclusion $\phi : \TStar \to \TStarZ$ and we say that an object in 
$\TStarZ$ is \emph{negative} if it is not in the image of this inclusion. Note that if $\sigma : \triple\FSA\vartheta\lambda 
\to \triple{\FSA'}{\vartheta'}{\lambda'}$ and $\triple\FSA\vartheta\lambda$ is negative, so is $\triple{\FSA'}{\vartheta'}{\lambda'}$. 
Recall that $\phi$ induces a functor on presheaves $\phi^* : \Set^{\TStarZ} \to \Set^{\TStar}$ by 
$(\phi^*F)\triple\FSA\vartheta\lambda = F(\phi\triple\FSA\vartheta\lambda)$. 
The right adjoint
\[
  \phi_*: \Set^{\TStar} \to \Set^{\TStarZ}
\]
to $\phi^*$ can be defined as $\phi_*(F)\triple\FSA\vartheta\lambda = 1$ if $\triple\FSA\vartheta\lambda$ is negative and 
$F\triple\FSA\vartheta\lambda$ if not. There is a functor $[\star-] : \TStarZ \to \TStarZ$ mapping 
$\triple\FSA\vartheta\lambda$ to $\triple\FSA{\vartheta[\lambda-]}f$, where $\vartheta[\lambda-]$ is defined
as above. The functor
$\tlater$ can now be described as the composition
\[
\begin{diagram}
 \Set^{\TStar} \ar{r}{\phi_*} & \Set^{\TStarZ} \ar{r}{[\star-]^*} & \Set^{\TStarZ} \ar{r}{\phi^*} & \Set^{\TStar}
\end{diagram}
\] 
Since each of the functors in this diagram has a left adjoint, so does $\tlater$. The left adjoint $\tearlier$ 
is the composite (in the opposite order) of the left adjoints of the functors above, i.e.
\[
\begin{diagram}
 \Set^{\TStar} \ar{r}{\phi_!} & \Set^{\TStarZ} \ar{r}{[\star-]_!} & \Set^{\TStarZ} \ar{r}{\phi^*} & \Set^{\TStar}
\end{diagram}
\]
%
%
%To define the left adjoint, we must instead take a colimit over all pasts in which the number of steps on clock $\kappa$ is 
%strictly higher. Here, time is thought of as moving in the direction of the arrows of $\catT$. To make this precise, consider the
%full subcategory $\catTslice{\kappa}{\triple \FSA\vartheta f}$ of the slice category over $\triple \FSA\vartheta f$ on morphisms
%$\sigma : \triple {\FSA'}{\vartheta'}{f'} \to \triple \FSA\vartheta f$ such that $\vartheta'(f'(\kappa)) > \vartheta(f(\kappa))$. 
%Define $\tearlier^{\kappa}F\triple \FSA\vartheta f$ to be the colimit of the composite functor 
%\[
%F \circ \dom : \catTslice{\kappa}{\triple \FSA\vartheta f} \to \catT \to \set
%\] 
%In elementary terms, this means that a
Unfolding the left Kan extensions used in the definitions of $\phi_!$ and $[\star-]_!$, the left adjoint can be described concretely
as follows. An element of $(\tearlier F)\triple \FSA\vartheta \lambda$ 
is an equivalence class of
a pair of a map
$\sigma : \triple {\FSA'}{\vartheta'}{\lambda'} \to \triple \FSA\vartheta \lambda$ such that 
$\vartheta'(\lambda') > \vartheta(\lambda)$ 
and an element $x \in F\triple {\FSA'}{\vartheta'}{\lambda'}$, 
up to the equivalence relation generated by 
\[
(\sigma\circ \tau, x) \sim (\sigma, \tau \cdot x)
\]
The presheaf action is defined by
$\tau \cdot [\pair \sigma x] = [\pair {\tau\sigma}x]$
and the functorial action of $\tearlier$ is defined as 
$\tearlier(f) [\pair \sigma x] = [\pair \sigma {f(x)}]$.
%both of which are clearly well defined. 
We note the following.

\begin{lem} \label{lem:earlier}
The functor $\tearlier : \grstar \to \grstar$ is left adjoint to $\tlater$.  
\end{lem}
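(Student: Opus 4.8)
The plan is to obtain the adjunction directly from the factorisation of $\tlater$ recorded above. Recall that, as functors $\grstar \to \grstar$, we have $\tlater = \phi^* \circ [\star-]^* \circ \phi_*$; I would first check this identity, which is immediate on objects since $[\star-]$ lowers $\vartheta$ at $\lambda$ by one, so that $\triple\FSA{\vartheta[\lambda-]}\lambda$ is negative exactly when $\vartheta(\lambda)=0$, in which case $\phi_*$ returns the singleton and otherwise returns $F\triple\FSA{\vartheta[\lambda-]}\lambda$ — precisely the case split defining $\tlater$ — with naturality in morphisms of $\TStar$ inherited from the three constituents. Now each of $\phi_*$, $[\star-]^*$ and $\phi^*$ is a right adjoint: for any functor between small categories, restriction along it on presheaf categories has a left adjoint given by left Kan extension (and a right adjoint given by right Kan extension), so $\phi^* \dashv \phi_*$, $[\star-]_! \dashv [\star-]^*$ and $\phi_! \dashv \phi^*$. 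A composite of right adjoints is again a right adjoint, with left adjoint the composite of the left adjoints taken in the reverse order; hence $\tlater$ has a left adjoint, namely $\phi^* \circ [\star-]_! \circ \phi_!$.

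It then remains only to identify this composite with the functor $\tearlier$ as presented concretely before the statement. For this I would unfold the two left Kan extensions as colimits over comma categories — $(\phi_! G)(c)$ is the colimit of $G$ over $\phi \downarrow c$, and similarly for $[\star-]_!$ — and use that $\phi^*$ is mere restriction along $\TStar \hookrightarrow \TStarZ$. Composing the two colimit formulas and cleaning up the resulting data yields exactly the set of equivalence classes of pairs consisting of a map $\sigma : \triple{\FSA'}{\vartheta'}{\lambda'} \to \triple\FSA\vartheta\lambda$ with $\vartheta'(\lambda') > \vartheta(\lambda)$ and an element $x \in F\triple{\FSA'}{\vartheta'}{\lambda'}$, modulo $(\sigma\circ\tau, x) \sim (\sigma, \tau\cdot x)$, together with the stated presheaf and functorial actions; the strict inequality is the one genuinely delicate point, and it appears because $[\star-]$ decrements the clock value before $\phi_!$ discards the negative part. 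Alternatively, and without invoking Kan extensions, one can verify $\tearlier \dashv \tlater$ directly on the concrete description: the unit at $F$ sends $x \in F\triple\FSA\vartheta\lambda$ (with $\vartheta(\lambda)>0$) to $[(\id, x)]$, where $\id$ denotes the identity function regarded as a map $\triple\FSA\vartheta\lambda \to \triple\FSA{\vartheta[\lambda-]}\lambda$, and the counit at $F$ sends $[(\sigma, x)]$ (where necessarily $\vartheta'(\lambda')>\vartheta(\lambda)$, so $x \in (\tlater F)\triple{\FSA'}{\vartheta'}{\lambda'} = F\triple{\FSA'}{\vartheta'[\lambda'-]}{\lambda'}$) to the action of $F$ on $x$ along the morphism $\triple{\FSA'}{\vartheta'[\lambda'-]}{\lambda'} \to \triple\FSA\vartheta\lambda$ whose underlying function is $\sigma$ — legitimate precisely because $\vartheta(\lambda) \leq \vartheta'(\lambda') - 1$; the two triangle identities and naturality are then routine diagram chases.

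I expect the main obstacle to be purely organisational rather than mathematical: matching the colimit formulas for $\phi_!$ and $[\star-]_!$ against the equivalence-class presentation, and in particular pinning down that the side condition is the strict $\vartheta'(\lambda') > \vartheta(\lambda)$ and on the right side of the morphism. Once the factorisation of $\tlater$ is in place the existence of a left adjoint is automatic; the real work of the lemma is this bookkeeping, which is what makes the explicit description available for the constructions in the following sections.
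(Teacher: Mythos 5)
Your proposal is correct and follows exactly the paper's own route: factor $\tlater$ as $\phi^* \circ [\star-]^* \circ \phi_*$, observe each factor is a right adjoint so $\tlater$ has a left adjoint $\phi^* \circ [\star-]_! \circ \phi_!$, and unfold the Kan extensions to match the concrete description of $\tearlier$. The additional direct verification via explicit unit and counit is a correct bonus not spelled out in the paper, and your identification of why the strict inequality $\vartheta'(\lambda') > \vartheta(\lambda)$ arises (the decrement precedes the truncation) and of why the counit morphism is well-typed is precisely the point worth being careful about.
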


Since $\tlater$ extends to families and elements (Lemma~\ref{lem:later:types:terms})
by Lemma~\ref{lem:bijectivecorresp}, 
$\tearlier$ has a dependent right adjoint $\tlater$ defined as 
$\tlater A = (\tlaterTy A)[\eta]$. There is, moreover, a projection
$\p_{\tearlier} : \tearlier \to \id$ mapping an element
$[\pair \sigma x]$ in $\tearlier F \triple\FSA\vartheta\lambda$ to $\sigma \cdot x$.
Thus, by Proposition~\ref{prop:tick:soundness}, 
$\tearlier$ and $\tlater$ provide a model for the tick calculus on $\grstar$.

We note the following, 
which is needed for the soundness of the tick irrelevance axiom (\ref{eq:tirr}).

\begin{lem}
\label{lem:tirr}
The projections $\p_{\tearlier}, \tearlier(\p_{\tearlier}) : 
\tearlier\tearlier \to \tearlier$ are equal. 
%As a consequence, if 
%$\hastype{\Gamma}{t}{\later A}$ then 
%\[
%\pret{\hastype{\Gamma, \tickA : \kappa, \tickA' : \kappa}{\tapp t}{A}} =
%\pret{\hastype{\Gamma, \tickA : \kappa, \tickA' : \kappa}{\tapp[\tickA']t}{A}}\]
%% In particular, the tick irrelevance axiom can be modelled. 
\end{lem}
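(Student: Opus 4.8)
The plan is to verify the equality pointwise, by unwinding the concrete quotient description of $\tearlier$ and evaluating both natural transformations on a representative. Since $\p_{\tearlier}$ is already known to be a natural transformation $\tearlier \to \id$ and $\tearlier$ is a functor, both $\p_{\tearlier}$ and $\tearlier(\p_{\tearlier})$ are genuine maps of presheaves $\tearlier\tearlier F \to \tearlier F$, natural in $F$; so it suffices to fix a presheaf $F$, an object $\triple\FSA\vartheta\lambda$ of $\TStar$, and an element of $(\tearlier\tearlier F)\triple\FSA\vartheta\lambda$, and check that the two maps agree on it.

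First I would spell out such an element: it is an equivalence class $[\pair\sigma y]$ of a morphism $\sigma : \triple{\FSA'}{\vartheta'}{\lambda'} \to \triple\FSA\vartheta\lambda$ with $\vartheta'(\lambda') > \vartheta(\lambda)$ together with $y \in (\tearlier F)\triple{\FSA'}{\vartheta'}{\lambda'}$, where $y$ is in turn an equivalence class $[\pair\rho x]$ of a morphism $\rho : \triple{\FSA''}{\vartheta''}{\lambda''} \to \triple{\FSA'}{\vartheta'}{\lambda'}$ with $\vartheta''(\lambda'') > \vartheta'(\lambda')$ together with $x \in F\triple{\FSA''}{\vartheta''}{\lambda''}$. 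Then I would compute the two sides. On the one hand, $\p_{\tearlier}$ at $\tearlier F$ sends $[\pair\sigma y]$ to $\sigma\cdot y = \sigma\cdot[\pair\rho x] = [\pair{\sigma\circ\rho}x]$, by the definition of the presheaf action of $\tearlier F$. On the other hand, $\tearlier(\p_{\tearlier})$ sends $[\pair\sigma y]$ to $[\pair\sigma{\p_{\tearlier}(y)}] = [\pair\sigma{\rho\cdot x}]$, by the definitions of the functorial action of $\tearlier$ and of $\p_{\tearlier}$. The proof then concludes by observing that $[\pair{\sigma\circ\rho}x] = [\pair\sigma{\rho\cdot x}]$, since this is exactly one of the generating instances $(\sigma\circ\tau,x)\sim(\sigma,\tau\cdot x)$ of the equivalence relation defining $\tearlier F$, taking $\tau := \rho$.

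I do not expect a genuine obstacle: the statement is tailored so that the two sides reduce to the two representatives related by a single application of the defining relation. The only point needing a word of care is well-definedness — one should note that the above calculation is independent of the chosen representative $[\pair\rho x]$ of $y$, which holds because both $\p_{\tearlier}$ (at $F$) and the presheaf action $\sigma\cdot(-)$ on $\tearlier F$ are well-defined on equivalence classes; alternatively one appeals to naturality, as noted above, to reduce to checking the equation on representatives in the first place.
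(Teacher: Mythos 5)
Your proof is correct and follows essentially the same route as the paper: both unfold the quotient description of $\tearlier$, compute $\p_{\tearlier}$ and $\tearlier(\p_{\tearlier})$ on a representative $[\pair\sigma{[\pair\rho x]}]$, and observe that the resulting pairs $\pair{\sigma\circ\rho}x$ and $\pair\sigma{\rho\cdot x}$ are identified by the defining relation. Your extra remark on well-definedness is a harmless elaboration of what the paper leaves implicit.
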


\begin{proof}
  This follows from a simple calculation: 
  \[\p_{\tearlier}([\pair \sigma {[\pair \tau x]}]) = \sigma \cdot [\pair \tau x] = 
  [\pair {\sigma\tau} x]\] and 
  $\tearlier(\p_{\tearlier}) ([\pair \sigma {[\pair \tau x]}])
  = [\pair \sigma{\tau \cdot x}] = [\pair {\sigma\tau}{x}]$. 
\end{proof}

The natural transformation $\p_{\tearlier} : \tearlier F \to F$ corresponds 
to a natural transformation $\nxt = \tlater(\p_{\tearlier}) \circ \eta : F \to \tlater F$,
which can be described as follows

\begin{align} \label{eq:next}
& \nxt_{\triple\FSA\vartheta \lambda}(\gamma) = 
\begin{cases} 
\tickmap{}\cdot \gamma &\vartheta(\lambda)>0\\
\ast & \text{otherwise}
\end{cases}
\end{align}
where
\begin{equation}\label{eq:tickmap}
 \tickmap{} : \triple {\FSA}{\vartheta} \lambda \to 
 \triple {\FSA}{\vartheta[\lambda-]} \lambda
\end{equation}
is the map induced by the identity on $\FSA$. 

\section{Modelling ticks}
\label{sec:modelling:ticks}

We now show how to use the dependent right adjoint structure constructed in the previous section to model ticks in \clott. 
First note that $\clk$ induces a family $\clkfam[\Gamma]$ in any context $\Gamma$ defined
as $(\clkfam[\Gamma])_{\timeobj\FSA\vartheta} (\gamma) = \FSA$, such that $\clkfam[\Gamma'][\sigma] =\clkfam$
for any $\sigma : \Gamma \to \Gamma'$. The rule for extending contexts with ticks assumes a 
clock $\hastype{\Gamma}\kappa\clocktype$, which semantically corresponds to an element of $\clkfam[\Gamma]$. 
The collection of pairs $\pair\Gamma\clkel$ where $\Gamma$ is an object of $\grtotal$ and 
$\clkel$ is an element of $\clkfam$, extends to a category where a morphism from $\pair\Gamma\clkel$ to 
$\pair{\Gamma'}{\clkel'}$
is a morphism $\sigma : \Gamma \to \Gamma'$ in $\grtotal$, such that $\clkel = \clkel'[\sigma]$. This category is 
isomorphic to the slice category of $\grtotal$ over $\clk$, which in turn is wellknown to be equivalent to the category
$\grstar$. 
The equivalence maps a pair $\pair\Gamma\clkel$ to the presheaf defined as
\[
\Phi\pair\Gamma\clkel \triple\FSA\vartheta\lambda = \{\gamma\in \Gamma\timeobj\FSA\vartheta \mid \clkel(\gamma) = \lambda\}
\]
The opposite direction maps $F$ to 
\[
\Psi(F)\timeobj\FSA\vartheta = \coprod_{\lambda\in \FSA}F \triple\FSA\vartheta\lambda
\] 
and element given by the first projection. 

Note, moreover, that there is an isomorphism of categories of elements 
\begin{equation} \label{eq:cat:elements}
 \int \Phi\pair\Gamma\clkel \cong \int \Gamma
\end{equation}
so families and elements over $\Phi\pair\Gamma\clkel$ in $\grstar$ correspond bijectively 
to families and elements over $\Gamma$ in $\grtotal$. Since the isomorphism (\ref{eq:cat:elements}) 
is natural in $\pair\Gamma\clkel$ this bijective correspondence commutes with substitution. 
The dependent right adjoint constructed in the previous section therefore gives the following. 

\begin{thm} \label{thm:dradjoint:gr:total}
 The following structure exists on $\grtotal$.
\begin{enumerate}
\item An operation mapping an object $\Gamma$ and an element $\clkel$ of $\clk$ over $\Gamma$
to an object $\tearlier^\clkel\Gamma$.
\item An operation mapping $\Gamma, \clkel$ as above and $\sigma : \Gamma' \to \Gamma$ to
\[
\tearlier^\clkel(\sigma) : \tearlier^{\clkel[\sigma]}\Gamma' \to \tearlier^\clkel\Gamma
\]
which is functorial, in the sense that $\tearlier^\clkel(\id) = \id$ and 
$\tearlier^\clkel(\sigma\tau) = \tearlier^\clkel(\sigma) \circ \tearlier^{\clkel[\sigma]}(\tau)$.
\item A transformation, i.e., a family of maps $\p_{\tearlier^\clkel} : \tearlier^\clkel\Gamma \to \Gamma$, natural
in the sense that $\p_{\tearlier^\clkel} \circ \tearlier^\clkel\sigma 
= \sigma \circ \p_{\tearlier^{\clkel[\sigma]}}$.
\item \label{item:later:nat} An operation mapping families $A$ over $\tearlier^\clkel\Gamma$ to families $\tlater^\clkel A$
over $\Gamma$ satisfying ${(\tlater^\clkel A)[\sigma]} = \tlater^{\clkel[\sigma]}(A[\tearlier^\clkel(\sigma)])$.
\item\label{item:overline:nat} A bijection between elements of $A$ and elements of $\tlater^\clkel A$, mapping $t$ to
$\overline t$ natural in the sense that $\overline{t[\tearlier^\clkel \sigma]} = \overline t [\sigma]$.
%\item An operation $\eta : \Gamma \to  \tlater^{\clkel[\p_{\tearlier^\clkel}]} \tearlier^\clkel \Gamma$ such that
%\[
%\]
\end{enumerate}
\end{thm}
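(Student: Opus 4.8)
The statement bundles together five pieces of structure on $\grtotal$, and the strategy is to transport, along the equivalence $\Phi \adj \Psi$ between the slice $\grtotal/\clk$ and $\grstar$, the dependent-right-adjoint data $(\tearlier, \tlater)$ already constructed on $\grstar$ in Section~\ref{sec:dradjoint}. Concretely: given $\Gamma$ and an element $\clkel$ of $\clkfam[\Gamma]$, the pair $\pair\Gamma\clkel$ is an object of the slice, $\Phi\pair\Gamma\clkel$ is the corresponding presheaf in $\grstar$, and I would \emph{define} $\tearlier^\clkel\Gamma \defeq \Psi(\tearlier(\Phi\pair\Gamma\clkel))$ together with the element picked out by the first projection, as in the description of $\Psi$. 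Item~(1) is then just this definition.

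**Key steps.** First I would set up the functoriality of $\Phi$ on morphisms of the slice: a map $\sigma : \Gamma' \to \Gamma$ with $\clkel[\sigma] = \clkel'$ is exactly a slice morphism $\pair{\Gamma'}{\clkel'} \to \pair\Gamma\clkel$, so $\Phi$ sends it to a morphism in $\grstar$, and applying the functor $\tearlier$ and then $\Psi$ yields $\tearlier^\clkel(\sigma)$ of item~(2); functoriality is inherited from functoriality of $\Phi$, of $\tearlier$, and of $\Psi$. For item~(3), the natural transformation $\p_{\tearlier} : \tearlier \to \id$ on $\grstar$ (recorded just before Lemma~\ref{lem:tirr}) transports through $\Psi$ to give $\p_{\tearlier^\clkel} : \tearlier^\clkel\Gamma \to \Gamma$; naturality in $\Gamma$ (with the twist by $\clkel[\sigma]$) is naturality of $\p_{\tearlier}$ together with naturality of the equivalence. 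For items~(4) and~(5), the crucial input is the natural isomorphism of categories of elements~(\ref{eq:cat:elements}), $\int\Phi\pair\Gamma\clkel \cong \int\Gamma$, which makes families/elements over $\Gamma$ in $\grtotal$ correspond bijectively to families/elements over $\Phi\pair\Gamma\clkel$ in $\grstar$; under this correspondence I set $\tlater^\clkel A$ to be (the transport of) the dependent right adjoint $\tlater(\Phi A)$ from Section~\ref{sec:dradjoint}, and $\overline t$ to be the transport of $\overline{\Phi t}$. The naturality equations in~(4) and~(5) are then exactly the dependent-right-adjoint naturality equations $(\tlater A)[\gamma] = \tlater(A[\tearlier\gamma])$ and $\overline{t[\tearlier\gamma]} = \overline t[\gamma]$ from Definition~\ref{def:dep:adj}, pushed through the (natural, hence substitution-commuting) equivalence.

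**Main obstacle.** The genuinely delicate point is bookkeeping the \emph{index $\clkel$} correctly under substitution — i.e. checking that $\Phi$ and $\Psi$ and the isomorphism~(\ref{eq:cat:elements}) are natural not merely in $\Gamma$ but in the pair $\pair\Gamma\clkel$, and that a morphism $\sigma : \Gamma' \to \Gamma$ only becomes a slice morphism after one records that it carries $\clkel$ to $\clkel'[\sigma]$. This is what forces the reindexing-by-$\clkel[\sigma]$ in items~(2)--(5), and it is where the earlier flawed model (cited in the Related Work) went wrong. Once naturality in the pair is in hand, every clause of the theorem is a mechanical transport of a statement already proved on $\grstar$; I would therefore spend most of the proof verifying that the equivalence $\grtotal/\clk \simeq \grstar$ and the isomorphism~(\ref{eq:cat:elements}) are natural in $\pair\Gamma\clkel$, and then simply invoke Lemma~\ref{lem:bijectivecorresp}, Lemma~\ref{lem:earlier}, and Proposition~\ref{prop:tick:soundness} clause by clause.
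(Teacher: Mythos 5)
Your proposal matches the paper's construction: the paper likewise defines $\tearlier^\clkel\Gamma$ as the first component of $\Psi(\tearlier\Phi\pair\Gamma\clkel)$, obtains items (2)--(3) by transporting the functor and the natural projection $\p_{\tearlier}$ along the equivalence $\grtotal/\clk\simeq\grstar$, and obtains items (4)--(5) by pushing the dependent right adjoint on $\grstar$ through the isomorphism of categories of elements (\ref{eq:cat:elements}), with naturality in the pair $\pair\Gamma\clkel$ carrying all the substitution equations. Your identification of the index-bookkeeping as the only real work is exactly the point the paper leans on (though a small side-remark: the invocation of Proposition~\ref{prop:tick:soundness} is not needed here, as that concerns soundness of the interpretation rather than the existence of the dependent-right-adjoint structure).
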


In explicit terms, $\tearlier^\clkel\Gamma$ is defined as the first component of 
$\Psi(\tearlier\Phi\pair\Gamma\clkel)$. By (\ref{eq:cat:elements}), then 
\[
 \int \tearlier^\clkel\Gamma \cong  \int \Phi(\Psi(\tearlier(\Phi\pair\Gamma\clkel)) \cong \int \tearlier(\Phi\pair\Gamma\clkel)
\]
The operation on families in item (\ref{item:later:nat}) maps a family over $\tearlier^\clkel\Gamma$
to its corresponding family over $\tearlier(\Phi\pair\Gamma\clkel)$, then applies the dependent 
right adjoint operation on $\grstar$ to get a family over $\Phi\pair\Gamma\clkel$, and finally 
applies (\ref{eq:cat:elements}) to get a family over $\Gamma$. 

The structure of Theorem~\ref{thm:dradjoint:gr:total} is precisely what is required to model ticks and 
tick application in \clott. Note that the operation $\tlater$ extends to families and elements in such a 
way that 
\begin{equation} \label{eq:laterTy:eta}
 \tlater^\clkel A =  (\tlaterTy[\clkel] A)[\eta] 
\end{equation}
where $\eta : \Gamma \to  \tlater^{\clkel[\p_{\tearlier^\clkel}]} \tearlier^\clkel \Gamma$ is the unit of the adjunction. 

\section{Modelling guarded recursion}
\label{sec:guarded:rec}

Guarded fixed points can be modelled essentially as in~\cite{GDTTmodel}.
The aim of this section is to prove the following. 
\begin{lem} \label{lem:fix:points}
% Each $\gr\Delta$ models guarded fixed points in the following sense:
 For each ${\cwftm{\Gamma}{t}{\tlater^\clkel (A[\p_{\tearlier^\clkel}]) \to A}}$ there is a unique 
 $\cwftm{\Gamma}{\dfixe_{\Gamma, A}^\clkel(t)}{\tlater^\clkel (A[\p_{\tearlier^\clkel}])}$ satisfying 
 \[
 \dfixe_{\Gamma, A}^\clkel(t) = \transp{\cwfev t{\dfixe_{\Gamma, A}^\clkel(t)}[\p_{\tearlier^\clkel}]}
 \]
 Moreover, $\dfixe_{\Gamma, A}^\clkel(t)[\gamma] =  \dfixe_{\Gamma', A[\gamma]}^{\clkel[\gamma]}(t[\gamma])$
 for any $\gamma : \Gamma' \to \Gamma$. 
\end{lem}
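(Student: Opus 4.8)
The plan is to reduce the existence and uniqueness of the guarded fixed point to ordinary well-founded recursion on the number of ticks remaining on the clock $\clkel$, exactly as in the topos-of-trees case, transported along the equivalence $\grtotal/\clk \simeq \grstar$ established in Section~\ref{sec:modelling:ticks}. First I would note that, by Theorem~\ref{thm:dradjoint:gr:total} and equation~(\ref{eq:laterTy:eta}), it suffices to work over a context $F = \Phi\pair\Gamma\clkel$ in $\grstar$, where the statement becomes: for $t : \tlater(A[\p_{\tearlier}]) \to A$ there is a unique $\dfixe$ with $\dfixe = \transp{\cwfev t{\dfixe}[\p_{\tearlier}]}$. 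Passing the $\transp{(-)}$ adjunction transpose across (using that $\transp{(-)}$ is a bijection on elements, Lemma~\ref{lem:bijectivecorresp}), and unfolding the concrete description of $\tlater$ from Section~\ref{sec:dradjoint}, an element of $\tlater^\clkel(A[\p_{\tearlier^\clkel}])$ over a time object $\triple\FSA\vartheta\lambda$ is a point of $A$ at $\triple\FSA{\vartheta[\lambda-]}\lambda$ when $\vartheta(\lambda)>0$ and is $\ast$ when $\vartheta(\lambda)=0$; so the defining equation determines $\dfixe$ at $\triple\FSA\vartheta\lambda$ in terms of the value of $\dfixe$ itself at $\triple\FSA{\vartheta[\lambda-]}\lambda$, i.e.\ at a strictly smaller value of $\vartheta(\lambda)$.

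The key steps, in order, are: (1) when $\vartheta(\lambda)=0$, set $\dfixe_{\triple\FSA\vartheta\lambda}(\gamma) = \ast$, which is forced; (2) when $\vartheta(\lambda)=n+1$, define $\dfixe_{\triple\FSA\vartheta\lambda}(\gamma)$ to be $t_{\triple\FSA{\vartheta[\lambda-]}\lambda}$ applied (via $\cwfev$) to $\dfixe_{\triple\FSA{\vartheta[\lambda-]}\lambda}$ reindexed along $\p_{\tearlier}$, which is well-defined by induction on $n$; (3) check naturality of the family so defined, i.e.\ that $\sigma \cdot \dfixe(\gamma) = \dfixe(\sigma\cdot\gamma)$ for every morphism $\sigma$ of $\int F$ — here one uses that $\sigma$ induces $\cminus\sigma\lambda$ on the decremented time objects, that $t$ is an element (hence natural), and that $\p_{\tearlier}$ is natural; (4) observe that the construction in (1)–(2) is exactly the unfolding of the defining equation, so any solution must agree with it at every time object by the same induction on $\vartheta(\lambda)$, giving uniqueness; (5) transport back along the equivalence and the isomorphism~(\ref{eq:cat:elements}) to obtain $\dfixe_{\Gamma,A}^\clkel(t)$ in $\grtotal$, using that these identifications are natural so that the defining equation is preserved.

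For the stability statement $\dfixe_{\Gamma,A}^\clkel(t)[\gamma] = \dfixe_{\Gamma',A[\gamma]}^{\clkel[\gamma]}(t[\gamma])$, I would argue that the right-hand side is a solution of the defining equation for the substituted data: applying $[\gamma]$ to the equation $\dfixe_{\Gamma,A}^\clkel(t) = \transp{\cwfev t{\dfixe_{\Gamma,A}^\clkel(t)}[\p_{\tearlier^\clkel}]}$ and using that $\transp{(-)}$ commutes with substitution (Definition~\ref{def:dep:adj}), that $\tlater^\clkel$ commutes with substitution (Theorem~\ref{thm:dradjoint:gr:total}(\ref{item:later:nat})), that $\cwfev$ commutes with substitution, and that $\p_{\tearlier^\clkel}$ is natural (Theorem~\ref{thm:dradjoint:gr:total}(3)), one sees that $\dfixe_{\Gamma,A}^\clkel(t)[\gamma]$ satisfies the defining equation over $\Gamma'$; then uniqueness forces it to equal $\dfixe_{\Gamma',A[\gamma]}^{\clkel[\gamma]}(t[\gamma])$. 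I expect the main obstacle to be purely bookkeeping: carefully threading the chain of natural isomorphisms $\grtotal/\clk \simeq \grstar$ and~(\ref{eq:cat:elements}) through the adjunction transpose so that "the defining equation on one side corresponds to the defining equation on the other side", and correctly handling the reindexing along $\p_{\tearlier^\clkel}$ (which shifts which clock values are visible) when verifying naturality in step~(3). None of this is conceptually hard, but it is the place where an error could slip in, as it did in the earlier model of~\cite{conferenceversion}.
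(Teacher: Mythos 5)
Your proposal is correct and takes essentially the same route as the paper: transport the statement to $\grstar$, unfold the concrete description of $\tlater(A[\p_{\tearlier}])$ and of the transpose so that the defining equation becomes a recursion that strictly decreases $\vartheta(\lambda)$, define $\dfixe$ by induction on $\vartheta(\lambda)$ (with $\ast$ forced at $\vartheta(\lambda)=0$), and obtain both uniqueness and stability under reindexing from the uniqueness of that inductive solution. Your step~(3), explicitly verifying naturality of the inductively defined family, is a detail the paper leaves implicit; otherwise the arguments coincide.
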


The lemma will be proved by proving the corresponding lemma in $\grstar$. 
First we unfold the definitions of the families involved. 
The next lemma refers to the map $\tickmap{}$ of (\ref{eq:tickmap}). 

\begin{lem}
\begin{enumerate}
 \item  If $A$ is a family over $\Gamma$ in $\grstar$, 
 then the family ${\tlater (A[\p_{\tearlier}])}$ also over 
 $\Gamma$ can be described as
 \[
{\tlater (A[\p_{\tearlier}])}_{\triple{\FSA}\vartheta \lambda}(\gamma) = 
\begin{cases}
 \{\ast\} & \text{ if } \vartheta(\lambda) = 0 \\
 {A}_{\triple {\FSA}{\vartheta[\lambda-]} \lambda} (\tickmap{}\cdot \gamma) 
 & \text{ if } \vartheta(\lambda) = n+1
\end{cases}
\]
 \item If $A$ is a family over $\Gamma$ and $t$ is an element of $A$, then
\[
\transp{t[\p_{\tearlier^\clkel}]}_{\triple{\FSA}\vartheta \lambda}(\gamma) = 
\begin{cases}
 \ast & \text{ if } \vartheta(\lambda) = 0 \\
 t_{\triple {\FSA}{\vartheta[\lambda-]} \lambda} (\tickmap{}\cdot \gamma)
 & \text{ if } \vartheta(\lambda) = n+1
\end{cases}
\]
\end{enumerate}
\end{lem}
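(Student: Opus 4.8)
The plan is to put the abstractly-defined family $\tlater(A[\p_{\tearlier}])$ into explicit pointwise form by reducing it to the formulas for $\tlaterTy$ and $\tlaterTm$ already proved in Lemma~\ref{lem:later:types:terms} together with the formula (\ref{eq:next}) for $\nxt$. The first step is the \emph{conceptual} one: show that $\tlater$ applied to a family reindexed along $\p_{\tearlier}$ collapses to a reindexing along $\nxt$, i.e.
\[
\tlater(A[\p_{\tearlier}]) = (\tlaterTy A)[\nxt] \quad\text{and}\quad \transp{t[\p_{\tearlier}]} = (\tlaterTm t)[\nxt].
\]
These follow by combining three already-available facts: by Lemma~\ref{lem:bijectivecorresp}, $\tlater B = (\tlaterTy B)[\eta]$ and $\transp s = (\tlaterTm s)[\eta]$ for the unit $\eta$ of $\tearlier \dashv \tlater$; by Definition~\ref{def:cwfa}, the extension of $\tlater$ to families and elements commutes with reindexing, so $\tlaterTy(A[\p_{\tearlier}]) = (\tlaterTy A)[\tlater(\p_{\tearlier})]$ and similarly for elements; and $\nxt = \tlater(\p_{\tearlier}) \circ \eta$ by definition of $\nxt$. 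Composing these gives the two displayed identities.

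Then comes a routine pointwise evaluation with a case split on $\vartheta(\lambda)$. Fixing an object $\triple\FSA\vartheta\lambda$ of $\TStar$ and $\gamma \in \Gamma\triple\FSA\vartheta\lambda$, reindexing gives $\big(\tlater(A[\p_{\tearlier}])\big)_{\triple\FSA\vartheta\lambda}(\gamma) = (\tlaterTy A)_{\triple\FSA\vartheta\lambda}\big(\nxt_{\triple\FSA\vartheta\lambda}(\gamma)\big)$. If $\vartheta(\lambda) = 0$, the formula for $\tlaterTy$ from Lemma~\ref{lem:later:types:terms} returns $\{\ast\}$ irrespective of its argument, which is the first clause of item~(1). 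If $\vartheta(\lambda) = n+1$, the same formula returns $A_{\triple\FSA{\vartheta[\lambda-]}\lambda}$ evaluated at $\nxt_{\triple\FSA\vartheta\lambda}(\gamma)$, and by (\ref{eq:next}) this argument equals $\tickmap{}\cdot\gamma \in \Gamma\triple\FSA{\vartheta[\lambda-]}\lambda$, with $\tickmap{}$ the map of (\ref{eq:tickmap}); this is the second clause of item~(1). Item~(2) is the identical computation with $\tlaterTm t$ replacing $\tlaterTy A$: the two clauses become $\ast$ and $t_{\triple\FSA{\vartheta[\lambda-]}\lambda}(\tickmap{}\cdot\gamma)$, respectively. (In the statement of item~(2), $\p_{\tearlier^\clkel}$ should be read as the $\grstar$-level $\p_{\tearlier}$, in the sense of Theorem~\ref{thm:dradjoint:gr:total}.)

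I do not expect a \emph{genuine} obstacle here: this lemma is pure bookkeeping whose role is only to give Lemma~\ref{lem:fix:points} a usable description of the families $\tlater(A[\p_{\tearlier}])$ and of the elements $\transp{t[\p_{\tearlier}]}$. The one point needing care is the first step --- one should resist expanding the unit $\eta$ of $\tearlier \dashv \tlater$ by hand, since that drags in the equivalence classes defining $\tearlier$, and instead obtain everything from the commutation-with-reindexing of the family/element extension together with the known description of $\nxt$. The secondary hazard is index bookkeeping: one must track the shift between $\vartheta$ and $\vartheta[\lambda-]$ and remember which presheaf (equivalently, which category of elements) each term lives over, so that the evaluations of $\tlaterTy A$ and of $\nxt$ at $\triple\FSA\vartheta\lambda$ land in the intended set.
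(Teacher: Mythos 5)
Your proof is correct and matches the intended argument. The paper states this lemma without a proof, treating it as a routine unfolding to be read off from the surrounding material; your derivation is exactly the calculation the text is implicitly invoking. In particular, the key identity $\tlater(A[\p_{\tearlier}]) = (\tlaterTy A)[\nxt]$ is the right organizing step: it is obtained, as you say, from $\tlater B = (\tlaterTy B)[\eta]$ (Lemma~\ref{lem:bijectivecorresp}), the reindexing law $(\tlaterTy A)[\tlater\gamma] = \tlaterTy(A[\gamma])$ of Definition~\ref{def:cwfa}, and the defining equation $\nxt = \tlater(\p_{\tearlier})\circ\eta$ in the text preceding (\ref{eq:next}); after that, the pointwise formula for $\tlaterTy$ from the proof of Lemma~\ref{lem:later:types:terms} together with (\ref{eq:next}) gives both clauses of item~(1), and the element case in item~(2) is the same computation with $\tlaterTm$. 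Your remark that one should not unfold $\eta$ directly into the $\tearlier$-equivalence classes is also well taken --- that would be strictly harder and is not needed --- and your reading of $\p_{\tearlier^\clkel}$ in item~(2) as the $\grstar$-level $\p_{\tearlier}$ is the correct interpretation of what is evidently a minor notational slip in the statement.
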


% This is the element denoted $\nxt^\clkel(t)$ in~\cite{GDTTmodel}. 
%Fixed points can be modelled
%as the unique element of the following lemma.

\begin{proofof}{Lemma~\ref{lem:fix:points}}
 We prove the corresponding statement in $\grstar$. 
 The proof is essentially the same as in~\cite{GDTTmodel}. Unfolding the equation of the lemma gives
\begin{align*}
 \dfixe_{\Gamma, A}^\clkel(t)_{\triple{\FSA}\vartheta \lambda}(\gamma) 
 & = \cwfev t{\dfixe_{\Gamma, A}^\clkel(t)}_{\triple {\FSA}{\vartheta[\lambda-]} \lambda} (\tickmap{}\cdot \gamma) \\
 & = t_{\triple {\FSA}{\vartheta[\lambda-]} \lambda} (\tickmap{}\cdot \gamma)(\id)(\dfixe_{\Gamma, A}^\clkel(t)_{\triple {\FSA}{\vartheta[\lambda-]} \lambda} (\tickmap{}\cdot \gamma))
\end{align*}
in the case where $\vartheta(\lambda)>0$ and $\ast$ else. Thus 
$\dfixe_{\Gamma, A}^\clkel(t)_{\triple{\FSA}\vartheta f}(\gamma)$ can be defined by induction on 
$\vartheta(\lambda)$. The last statement follows from uniqueness. 
\end{proofof}

%\subsection{Modelling universal quantification over clocks}
%
%These are just $\Pi$-types over the object $\clk$. 

\section{Modelling $\tickc$}
\label{sec:tickc}

To model $\tickc$ we will construct a substitution $d$ from the interpretation of any syntactic context of the form
$\Gamma, \kappa : \clocktype$ to the interpretation of $\Gamma, \kappa : \clocktype, \tickA : \kappa$, 
semantically substituting $\tickc$ for $\tickA$. Omitting denotation brackets for $\Gamma$, 
the type of $d$ is precisely
\[
 d : \compr\Gamma{\clkfam} \to \tearlier^\q(\compr\Gamma{\clkfam}) 
\]
Note that $\p_{\tearlier^\q}$ is a map in the opposite direction. We will show the following. 

\begin{prop} \label{prop:p:inverse}
 For any $\Gamma$ in  $\grtotal$, the map $\p_{\tearlier^\q} : 
 \tearlier^\q(\compr\Gamma{\clkfam}) \to \compr\Gamma{\clkfam}$ is an isomorphism. 
\end{prop}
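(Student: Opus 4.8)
The plan is to unwind the definition of $\tearlier^{\q}$ on the specific context $\compr\Gamma{\clkfam}$ and exhibit an explicit inverse to $\p_{\tearlier^{\q}}$. Recall that $\clkfam$ is the family with $(\clkfam)_{\timeobj\FSA\vartheta}(\gamma)=\FSA$, so an object of $\compr\Gamma{\clkfam}$ over $\timeobj\FSA\vartheta$ is a pair $\pair\gamma\lambda$ with $\gamma\in\Gamma\timeobj\FSA\vartheta$ and $\lambda\in\FSA$; and $\q$ is the element of $\clkfam[\compr\Gamma{\clkfam}]$ picking out that chosen $\lambda$. Under the equivalence $\grtotal/\clk\simeq\grstar$ of Section~\ref{sec:modelling:ticks}, the pair $\pair{\compr\Gamma{\clkfam}}{\q}$ corresponds to the presheaf $\Phi$ on $\TStar$ whose value at $\triple\FSA\vartheta\lambda$ is $\{\pair\gamma\mu\mid\mu=\lambda\}\cong\Gamma\timeobj\FSA\vartheta$; that is, $\Phi\pair{\compr\Gamma{\clkfam}}{\q}\triple\FSA\vartheta\lambda\cong\Gamma\timeobj\FSA\vartheta$, \emph{independent of $\lambda$ and of $\vartheta(\lambda)$}. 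So I first reduce the claim, via Theorem~\ref{thm:dradjoint:gr:total} and the explicit description of $\tearlier^{\clkel}\Gamma$ as the first component of $\Psi(\tearlier\,\Phi\pair\Gamma\clkel)$, to showing that $\p_{\tearlier}:\tearlier F\to F$ is an isomorphism whenever $F$ is a presheaf on $\TStar$ of the form $\Phi\pair{\compr\Gamma{\clkfam}}{\q}$, i.e.\ whenever $F\triple\FSA\vartheta\lambda$ does not actually depend on the tick count $\vartheta(\lambda)$ (more precisely, whenever the map $F(\tickmap{}):F\triple\FSA\vartheta\lambda\to F\triple\FSA{\vartheta[\lambda-]}\lambda$ is a bijection, and its inverse is given by the clock-count-raising maps of $F$).

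\textbf{Constructing the inverse.} Recall from Section~\ref{sec:dradjoint} that an element of $(\tearlier F)\triple\FSA\vartheta\lambda$ is an equivalence class $[\pair\sigma x]$ of a map $\sigma:\triple{\FSA'}{\vartheta'}{\lambda'}\to\triple\FSA\vartheta\lambda$ with $\vartheta'(\lambda')>\vartheta(\lambda)$ together with $x\in F\triple{\FSA'}{\vartheta'}{\lambda'}$, and that $\p_{\tearlier}[\pair\sigma x]=\sigma\cdot x$. I would define $\psi_{\triple\FSA\vartheta\lambda}:F\triple\FSA\vartheta\lambda\to(\tearlier F)\triple\FSA\vartheta\lambda$ by $\psi(y)=[\pair{\tickmap{+}}{\,\tickmap{+}{}^{\ast}\text{-lift of }y\,}]$, where $\tickmap{+}:\triple\FSA{\vartheta[\lambda+]}\lambda\to\triple\FSA\vartheta\lambda$ is the identity-on-$\FSA$ map that raises the count on $\lambda$ by one (legal since $\vartheta[\lambda+](\lambda)=\vartheta(\lambda)+1>\vartheta(\lambda)$), and the second component is $F(\tickmap{+}{}')(y)$ for the appropriate structure map of $F$ — using crucially that for $F=\Phi\pair{\compr\Gamma{\clkfam}}{\q}$ this structure map is a bijection, with inverse $F(\tickmap{})$. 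Then $\p_{\tearlier}\circ\psi=\id$ is immediate from $\p_{\tearlier}[\pair{\tickmap{+}}{x}]=\tickmap{+}\cdot x=F(\tickmap{})(x)$ together with $F(\tickmap{})\circ F(\tickmap{+})=\id$ (the composite $\tickmap{}\circ\tickmap{+}$ being the identity on $\triple\FSA\vartheta\lambda$). For $\psi\circ\p_{\tearlier}=\id$, take a representative $[\pair\sigma x]$ with $\sigma:\triple{\FSA'}{\vartheta'}{\lambda'}\to\triple\FSA\vartheta\lambda$, $\vartheta'(\lambda')\ge\vartheta(\lambda)+1$; I would factor $\sigma$ through $\tickmap{+}:\triple\FSA{\vartheta[\lambda+]}\lambda\to\triple\FSA\vartheta\lambda$ as $\sigma=\tickmap{+}\circ\sigma'$ for a suitable $\sigma':\triple{\FSA'}{\vartheta'}{\lambda'}\to\triple\FSA{\vartheta[\lambda+]}\lambda$ (the underlying function of $\sigma'$ is that of $\sigma$; the inequality $\vartheta[\lambda+]\circ\sigma'\le\vartheta'$ is exactly $\vartheta'(\lambda')\ge\vartheta(\lambda)+1$ on $\lambda'$ and follows from $\sigma$ being a morphism elsewhere), and then use the equivalence relation $(\tickmap{+}\circ\sigma',x)\sim(\tickmap{+},\sigma'\cdot x)$ to rewrite $[\pair\sigma x]=[\pair{\tickmap{+}}{\sigma'\cdot x}]$, so it suffices to check $\psi(\p_{\tearlier}[\pair{\tickmap{+}}{z}])=[\pair{\tickmap{+}}{z}]$, which again reduces to $F(\tickmap{+})\circ F(\tickmap{})=\id$ on the relevant fibre — true precisely because $F$ is count-independent on its chosen clock.

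\textbf{Transporting back and the main obstacle.} Having proved $\p_{\tearlier}:\tearlier F\to F$ is iso for such $F$, I transport along the equivalence $\Phi\dashv\Psi$: the explicit description $\tearlier^{\q}(\compr\Gamma{\clkfam})=\text{(first component of) }\Psi(\tearlier\,\Phi\pair{\compr\Gamma{\clkfam}}{\q})$ together with naturality of $\p_{\tearlier}$ (item (3) of Theorem~\ref{thm:dradjoint:gr:total}) and the fact that $\Psi$ preserves isomorphisms gives that $\p_{\tearlier^{\q}}$ is an iso in $\grtotal$, as required. \textbf{The hard part will be} the bookkeeping around the equivalence $\grtotal/\clk\simeq\grstar$: verifying that the presheaf $\Phi\pair{\compr\Gamma{\clkfam}}{\q}$ genuinely has fibres independent of the tick count on the distinguished clock (this is where the specific shape $\compr\Gamma{\clkfam}$ — as opposed to an arbitrary context over $\clk$ — is essential, since a general $\Gamma$ with a chosen clock $\clkel$ need not have this property), and checking that the factorisation-and-quotient argument for $\psi\circ\p_{\tearlier}=\id$ handles \emph{all} representatives uniformly rather than just the canonical $\tickmap{+}$ one. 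Everything else is a routine unfolding of the left-Kan-extension description of $\tearlier$ from Section~\ref{sec:dradjoint}.
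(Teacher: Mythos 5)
Your overall strategy — reduce to showing $\p_{\tearlier}:\tearlier(\Gamma_\star)\to\Gamma_\star$ is an isomorphism for the object $\Gamma_\star$ of $\grstar$ corresponding to $\pair{\compr\Gamma{\clkfam}}{\q}$, and then construct an explicit inverse using the equivalence-class description of $\tearlier$ — is the same as the paper's. But your construction of the inverse rests on a false independence claim, and the map it invokes does not exist.

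You assert that $\Gamma_\star\triple\FSA\vartheta\lambda\cong\Gamma\timeobj\FSA\vartheta$ is \emph{independent of $\vartheta(\lambda)$}, and then build $\psi(y)=[\pair{\tickmap{+}}{F(\tickmap{+}{}')(y)}]$ using a ``count-raising'' structure map $F(\tickmap{+}{}'):F\triple\FSA\vartheta\lambda\to F\triple\FSA{\vartheta[\lambda+]}\lambda$ said to be the inverse of $F(\tickmap{})$. Neither half of this is right. First, $\Gamma\timeobj\FSA\vartheta$ certainly depends on the value $\vartheta(\lambda)$, since $\vartheta$ is the whole map $\FSA\to\nats$; what is true — and what the paper actually exploits — is that the fibre does not depend on \emph{which element $\lambda\in\FSA$ is singled out}, only on the pair $\timeobj\FSA\vartheta$. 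Second, in $\catT$ (and hence in $\TStar$) a morphism $\tau:\timeobj\FSA\vartheta\to\timeobj{\FSA'}{\vartheta'}$ must satisfy $\vartheta'\tau\le\vartheta$: time may only decrease. There is therefore no morphism $\triple\FSA\vartheta\lambda\to\triple\FSA{\vartheta[\lambda+]}\lambda$ to which $F$ could be applied, so $F(\tickmap{+}{}')$ is not a structure map of any presheaf, and $F(\tickmap{})$ is in general not a bijection for an arbitrary $\Gamma$ (its restriction maps need not be injective or surjective). So the definition of $\psi$ does not type-check, and the lemma it purports to reduce to (``$\p_{\tearlier}$ is iso whenever $F$ is count-independent on the distinguished clock'') has a false hypothesis in the intended application.

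The paper's construction sidesteps exactly this obstruction by introducing a \emph{fresh} clock rather than raising the count on $\lambda$. Given $x\in\Gamma_\star\triple\FSA\vartheta\lambda$ with $n=\vartheta(\lambda)$, one picks $\lambda'\notin\FSA$, $m>n$, and the inclusion $\iota:\timeobj\FSA\vartheta\to\timeobj{(\FSA,\lambda')}{\subex\vartheta{\lambda'}m}$ (a genuine $\catT$-morphism, since it leaves $\vartheta$ unchanged on $\FSA$), then sets $d(x)=[\pair{\basicsub{\lambda'}\lambda}{\iota\cdot x}]$. The fresh clock $\lambda'$ is where the higher count $m>n$ lives, so the representative satisfies the required strict inequality, and the collapsing map $\basicsub{\lambda'}\lambda$ is a legal morphism of $\TStar$ onto the target. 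Independence of the choice of $\lambda'$ and $m$, and well-definedness, follow from the equivalence relation. Verifying $d\circ\p_{\tearlier}=\id$ then requires chasing an arbitrary representative $[\pair\sigma x]$ through a commuting square involving a second fresh clock $\lambda''$, not a factorisation through $\tickmap{+}$. If you replace your $\psi$ by this $d$ and rework the second composite accordingly, the rest of your outline (transporting the isomorphism back along the equivalence $\grtotal/\clk\simeq\grstar$ using naturality of $\p_{\tearlier}$) goes through and matches the paper.
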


\begin{proof}
 Under the isomorphism of Section~\ref{sec:modelling:ticks} the pair $(\compr\Gamma{\clkfam}, \q)$ corresponds to 
 the object $\Gamma_\star$ of $\grstar$ defined as $\Gamma_\star\triple\FSA\vartheta\lambda = \Gamma\timeobj\FSA\vartheta$.
 Recall that an element of $\tearlier(\Gamma_\star)\triple\FSA\vartheta\lambda$ is an equivalence class represented by a pair $\pair{\sigma}{x}$ where
 $\sigma : \triple{\FSA'}{\vartheta'}{\lambda'} \to \triple\FSA\vartheta\lambda$ such that $\vartheta'(\lambda') > \vartheta(\lambda)$
 and $x \in \Gamma_\star\triple{\FSA'}{\vartheta'}{\lambda'}$. The equivalence is the smallest equivalence relation relating
 $\pair{\sigma}{\tau\cdot x}$ to $\pair{\sigma\circ\tau}{x}$. Recall also that $\p_{\tearlier}([\pair{\sigma}{x}]) = \sigma \cdot x$.  
 
 To construct an inverse $d$ to $\p_{\tearlier}$, suppose now that $x \in \Gamma_\star\triple\FSA\vartheta\lambda$, and let 
 $n = \vartheta(\lambda)$. Let $\lambda'$ be fresh, and consider the mapping 
 $\iota : \timeobj\FSA\vartheta \to \timeobj{(\FSA, \lambda')}{\subex\vartheta{\lambda'}{m}}$ for some $m>n$ 
 given by the inclusion of $\FSA$ into $(\FSA, \lambda')$. Define 
 $d(x) = [\pair{\basicsub{\lambda'}\lambda}{\iota\cdot x}]$ where 
 \[
 \basicsub{\lambda'}\lambda : \triple{(\FSA, \lambda')}{\subex\vartheta{\lambda'}{m}}{\lambda'}
 \to \triple\FSA\vartheta\lambda
 \]
 maps $\lambda'$ to $\lambda$ and is the identity on all other input, and $\iota\cdot x$ refers to the action of $\iota$ on the
 object $\Gamma$. Clearly, $d$ is independent of the choices of $\lambda'$ and $m$. To show that $d$ is an inverse to $\p_{\tearlier}$, 
 note first that
\begin{align*}
 \p_{\tearlier}(d(x)) & = \p_{\tearlier}([\pair{\basicsub{\lambda'}\lambda}{\iota\cdot x}]) = 
 \basicsub{\lambda'}\lambda\cdot \iota \cdot x = x
\end{align*}
 For the other direction, suppose $[\pair\sigma x] \in \tearlier(\Gamma_\star)\triple\FSA\vartheta\lambda$, where 
 $\sigma : \triple{\FSA'}{\vartheta'}{\lambda'} \to \triple\FSA\vartheta\lambda$. Suppose $\lambda''$ fresh for both
 $\FSA$ and $\FSA'$, and let $m$ be a number strictly greater than both $\vartheta(\lambda)$ and $\vartheta'(\lambda')$.
 Consider the following commutative diagram in $\TStar$
 \[
\begin{diagram}
 \triple{(\FSA', \lambda'')}{\subex{\vartheta'}{\lambda''}{m}}{\lambda''} \ar{r}{\basicsub{\lambda''}{\lambda'}} 
 \ar{d}[swap]{\subex{\sigma}{\lambda''}{\lambda''}} & 
 \triple{\FSA'}{\vartheta'}{\lambda'} \ar{d}{\sigma} \\
 \triple{(\FSA, \lambda'')}{\subex{\vartheta}{\lambda''}{m}}{\lambda''} \ar{r}{\basicsub{\lambda''}{\lambda}} 
 & \triple{\FSA}{\vartheta}{\lambda} 
\end{diagram}
 \]
 and let $\iota' : \timeobj{\FSA'}{\vartheta'} \to \timeobj{(\FSA', \lambda'')}{\subex{\vartheta'}{\lambda''}{m}}$ be given by
 the inclusion. Then 
\begin{align*}
 [\pair\sigma x] & = \left[\pair\sigma{ \basicsub{\lambda''}{\lambda'} \cdot \iota' \cdot x}\right]\\ 
 & = \left[\pair{\sigma\circ \basicsub{\lambda''}{\lambda'}}{\iota'\cdot x}\right] \\
 & = \left[\pair{\basicsub{\lambda''}{\lambda} \circ \subex{\sigma}{\lambda''}{\lambda''}}{\iota'\cdot x}\right]  \\
 & = \left[\pair{\basicsub{\lambda''}{\lambda}}{\subex{\sigma}{\lambda''}{\lambda''} \cdot \iota'\cdot x}\right] \\
 & = \left[\pair{\basicsub{\lambda''}{\lambda}}{\iota\cdot \sigma \cdot x}\right]  \\ 
 & = d(\p_{\tearlier}(\left[\pair \sigma x\right])) \qedhere
\end{align*}
\end{proof}

\section{Universes}
\label{sec:semantic:universes}

The universes $\univ{\Delta}$ of \clott\ can be modelled by the semantic universes 
$\Usem{\Delta}$ constructed by Bizjak and M{\o}gelberg~\cite{GDTTmodel}. This section
recalls these and shows how to model the code $\latbindcode\tickA\kappa A$ which was not 
present in the language modelled by Bizjak and M{\o}gelberg. 

To model the universe $\univ{\Delta}$ we must have a $\Delta$-indexed set of semantic
clocks in hand. Semantically, this assumption can be represented by the object $\clk^\Delta$
of $\grtotal$ defined as $\clk^\Delta\timeobj\FSA\vartheta = \FSA^\Delta$, where the right 
hand side is to be understood as a set-theoretic exponent. We will therefore define $\Usem\Delta$
as a covariant presheaf over the category $\int \clk^\Delta$ 
of elements of $\clk^\Delta$. Recall that the latter
has as objects triples, $\triple\FSA\vartheta f$ where $f : \Delta \to \FSA$, and as morphisms 
$\triple\FSA\vartheta f$ to $\triple{\FSA'}{\vartheta'}{f'}$ morphisms 
$\sigma : \timeobj\FSA\vartheta \to \timeobj{\FSA'}{\vartheta'}$ such that $f' = \sigma \circ f$. 
We will write $\gr\Delta$ for the category of covariant presheaves on $\int \clk^\Delta$. Note that 
$\grtotal \cong \gr\emptyset$ and $\grstar \cong \gr{\{\kappa\}}$. 

As is standard, the construction of the universe $\Usem\Delta$ assumes a set-theoretic universe,
the inhabitants of which will be referred to as \emph{small} sets. This notion lifts to notions of
small families in $\grtotal$ and $\gr\Delta$ by requiring that all components be small.
The definition of the universe $\Usem\Delta$ refers to the notion of
invariance under clock introduction, whose definition we now recall. 

\begin{defi}\label{def:inv:clock:intro}%[\cite{GDTTmodel}]
 A presheaf $F$ in $\gr\Delta$ is \emph{invariant under clock introduction} if, whenever
 $\lambda \notin \FSA$, the mapping $\iota\cdot(-) : F\triple\FSA\vartheta f
 \to F\triple{\FSA, \lambda}{\subex\vartheta\lambda n}{\iota f}$, induced by the inclusion
 of $\FSA$ into $\FSA, \lambda$, is an isomorphism. 
 A family $A$ over a presheaf $\Gamma$ is invariant under clock introduction if 
 the mapping $\iota\cdot(-) : A(\gamma)  \to A(\iota \cdot \gamma)$ is an isomorphism
 for each $\iota$ as above, and all $\gamma$. Note that $A$ can be 
 invariant under clock introduction also if $\Gamma$ is not. 
\end{defi}

For a presheaf $F$ in $\grtotal$ (considered as an object in $\gr{\emptyset}$) to be
invariant under clock introduction is essentially equivalent to the mapping $F \to F^{\clk}$
being an isomorphism~\cite{GDTTmodel}. 
So the condition captures the clock irrelevance axiom semantically. To model this axiom, 
it is therefore necessary that all types are interpreted as families invariant
under clock introduction. It is not necessary that contexts are invariant under clock introduction,
however, and they will not be, since the object $\clk$ is not. 
The standard Hofmann-Streicher universe~\cite{Hofmann-Streicher:lifting} 
in $\grtotal$ is not invariant under 
clock introduction, which is the semantic motivation for indexing universes by clock contexts,
see~\cite{GDTTmodel} for details. 

The universe $\Usem\Delta$ in $\gr\Delta$ is defined as follows. The component
$\Usem\Delta\triple\FSA\vartheta f$ is the set of small families over 
$y\timeobj{f[\Delta]}{\restrict{\vartheta}{f[\Delta]}}$ in $\grtotal$ invariant under clock introduction,
where $y$ is the yoneda embedding. 
In other words, an element of $\Usem\Delta\triple\FSA\vartheta f$ is a family of small sets 
$X_\tau$ indexed 
by morphisms $\tau$ of time objects with domain 
$\timeobj{f[\Delta]}{\restrict{\vartheta}{f[\Delta]}}$ where $f[\Delta]$ is the image of $f$, together
with maps $\sigma \cdot(-) : X_\tau \to X_{\sigma\tau}$ such that $\id \cdot x = x$ and 
$\sigma' \cdot (\sigma \cdot x) = (\sigma' \circ \sigma)\cdot x$ for all $x$, and such 
that $\iota \cdot(-)$ is an isomorphism, 
when $\iota : \timeobj{\FSA'}\vartheta \to \timeobj{\FSA', \lambda}{\subex\vartheta\lambda n}$
is given by the inclusion. The family of elements over $\Usem\Delta$ is defined as 
$\Elsem\Delta(X) = X_i$ where $i: \timeobj{f[\Delta]}{\restrict{\vartheta}{f[\Delta]}} 
\to \timeobj\FSA\vartheta$ is the inclusion. 

%\begin{defi}
%% Let $\Delta'\subseteq \Delta$ and let $i : \Delta'\to\Delta$ be the inclusion. 
%% Define the object $\Usembar{\Delta'}$ of $\grbar\Delta$
%% as $\Usembar{\Delta'}_{\nu} = \nu^* i^*\Usem{\Delta'}$ and 
%% $\mor{\Usembar{\Delta'}}{\mu,\nu}$ the identity. The family $\Elsembar{\Delta'}$
%% over $\Usembar{\Delta'}$ is defined as $\Elsembar{\Delta'}_{\nu} = \nu^* i^*\Elsem{\Delta'}$.
%\end{defi}
%
We now give a partial answer to the question of what the universes $\Usem{\Delta}$
classify. 

\begin{lem}[\cite{GDTTmodel}] \label{lem:univ:class}
 Let $A$ be a small family over an object $\Gamma$ in $\gr\Delta$. If both $\Gamma$
 and $A$ are invariant under clock introduction, there is a unique $\code A : \Gamma \to 
 \Usem\Delta$ such that $A = \Elsem\Delta[\code A]$. 
\end{lem}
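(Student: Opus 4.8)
The plan is to construct $\code A$ explicitly and then read off the universal property. The crucial preliminary observation is that the hypothesis that $\Gamma$ is invariant under clock introduction lets one shrink the data of an element. Given an object $\triple\FSA\vartheta f$ of $\int\clk^\Delta$, factor the clock map as $f = i\circ f_0$ with $f_0\colon \Delta \to f[\Delta]$ the surjective corestriction and $i\colon f[\Delta]\hookrightarrow\FSA$ the inclusion; then $\triple{f[\Delta]}{\restrict\vartheta{f[\Delta]}}{f_0}$ is again an object of $\int\clk^\Delta$ and $i$ is a morphism $\triple{f[\Delta]}{\restrict\vartheta{f[\Delta]}}{f_0} \to \triple\FSA\vartheta f$ which is an iterated clock introduction (it adjoins the clocks $\FSA\setminus f[\Delta]$ and changes nothing on $f[\Delta]$). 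Hence for $\gamma\in\Gamma\triple\FSA\vartheta f$ there is, by invariance of $\Gamma$, a \emph{unique} $\gamma_0 \in \Gamma\triple{f[\Delta]}{\restrict\vartheta{f[\Delta]}}{f_0}$ with $i\cdot\gamma_0 = \gamma$.

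I would then define $\code A\colon\Gamma\to\Usem\Delta$ on components by sending $\gamma\in\Gamma\triple\FSA\vartheta f$ to the family $\code A(\gamma)$ over $y\timeobj{f[\Delta]}{\restrict\vartheta{f[\Delta]}}$ in $\grtotal$ whose value at $\tau\colon \timeobj{f[\Delta]}{\restrict\vartheta{f[\Delta]}}\to\timeobj{\FSA'}{\vartheta'}$ is $\code A(\gamma)_\tau \defeq A_{\triple{\FSA'}{\vartheta'}{\tau\circ f_0}}(\tau\cdot\gamma_0)$, with restriction maps along $\rho$ given by the action of $A$ along the corresponding morphism of $\int\clk^\Delta$. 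This lies in $\Usem\Delta$: it is small because $A$ is, and it is invariant under clock introduction precisely because $A$ is, since for a clock introduction $\iota$ the transition map $\code A(\gamma)_\tau\to\code A(\gamma)_{\iota\circ\tau}$ is $\iota\cdot(-)\colon A(\tau\cdot\gamma_0)\to A(\iota\cdot\tau\cdot\gamma_0)$, an isomorphism by hypothesis on $A$. Naturality of $\code A$ in $\int\clk^\Delta$ reduces, via the uniqueness of $\gamma_0$, to the identity $\sigma\circ i = i'\circ\sigma_0$ relating the inclusions $i,i'$ and the corestriction $\sigma_0$ of a morphism $\sigma$ to the images of the clock maps, together with functoriality of $A$.

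For the equation $A = \Elsem\Delta[\code A]$ one computes directly: since $\Elsem\Delta(X) = X_i$ with $i$ the inclusion $\timeobj{f[\Delta]}{\restrict\vartheta{f[\Delta]}}\to\timeobj\FSA\vartheta$, we get $\Elsem\Delta[\code A](\gamma) = \code A(\gamma)_i = A_{\triple\FSA\vartheta f}(i\cdot\gamma_0) = A(\gamma)$, and compatibility with the presheaf structure of $A$ is routine. For uniqueness, let $c\colon\Gamma\to\Usem\Delta$ satisfy $\Elsem\Delta[c] = A$. Naturality of $c$ along $i$ gives $c(\gamma) = i\cdot c(\gamma_0) = c(\gamma_0)$, because $i$ corestricts to the identity on $f[\Delta]$, so the $\Usem\Delta$-action along it is trivial. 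For any $\tau$ as above, naturality of $c$ along the induced morphism $\triple{f[\Delta]}{\restrict\vartheta{f[\Delta]}}{f_0}\to\triple{\FSA'}{\vartheta'}{\tau\circ f_0}$ gives $c(\tau\cdot\gamma_0) = \tau\cdot c(\gamma_0)$; applying $\Elsem\Delta$ and using that $\Elsem\Delta$ of $\tau\cdot c(\gamma_0)$ picks out exactly the component $c(\gamma_0)_\tau$ (the relevant inclusion composed with the corestriction of $\tau$ equals $\tau$), we obtain $c(\gamma_0)_\tau = \Elsem\Delta[c](\tau\cdot\gamma_0) = A_{\triple{\FSA'}{\vartheta'}{\tau\circ f_0}}(\tau\cdot\gamma_0) = \code A(\gamma)_\tau$. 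Hence $c = \code A$.

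The difficulty is not a single deep step but the bookkeeping: one must consistently track how each morphism of $\int\clk^\Delta$ corestricts to the images of the clock maps, check that all the inclusions of time objects occurring in the definitions of $\Usem\Delta$ and $\Elsem\Delta$ line up, and — the one place where the hypotheses are genuinely essential — verify that $\code A(\gamma)$ really satisfies the clock-introduction invariance condition built into elements of $\Usem\Delta$, which uses invariance of $A$, exactly as invariance of $\Gamma$ is what makes $\gamma_0$ well-defined.
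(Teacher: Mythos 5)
The paper states Lemma~\ref{lem:univ:class} as an imported result from the earlier work of Bizjak and M{\o}gelberg, without reproducing a proof, so there is no in-paper argument to compare against. Your construction is the natural one and is essentially correct: invariance of $\Gamma$ under clock introduction gives the canonical representative $\gamma_0 \in \Gamma\triple{f[\Delta]}{\restrict\vartheta{f[\Delta]}}{f_0}$ with $i\cdot\gamma_0 = \gamma$, and then $\code A(\gamma)_\tau = A_{\triple{\FSA'}{\vartheta'}{\tau f_0}}(\tau\cdot\gamma_0)$ is the only possible definition; your checks of invariance (using invariance of $A$), naturality (via $\sigma\circ i = i'\circ\sigma|$), the equation $\Elsem\Delta[\code A]=A$ at the inclusion, and uniqueness (via triviality of the $\Usem\Delta$-action along the clock introduction $i$) are all sound. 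One place worth a sentence more than you gave it: in the uniqueness argument you show the \emph{sets} $c(\gamma)_\tau$ and $\code A(\gamma)_\tau$ agree for all $\tau$, but an element of $\Usem\Delta$ also carries the restriction maps $\rho\cdot(-)$, and these must be shown to agree too. This does follow from the hypothesis $\Elsem\Delta[c]=A$ (which is an equality of families, hence also of restriction maps) together with the same naturality-plus-$\Elsem$ bookkeeping you already use, but it is not literally covered by the equality of underlying sets.
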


The definition of the type operations on the universes rely on the property that all type operations
preserve invariance under clock introduction. 

\begin{lem}[\cite{GDTTmodel}] \label{lem:closure:iuci}
 The collection of families in $\gr\Delta$ invariant under clock introduction is 
 closed under reindexing and under forming $\Pi$- and $\Sigma$-types
 as well as identity types. The objects
 $\Usem{\Delta}$ and families $\Elsem{\Delta}$ are invariant under clock introduction.
% If an object $\Gamma$ is invariant under clock introduction, so is $\tearlierbar^\kappa \Gamma$. 
\end{lem}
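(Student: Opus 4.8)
The plan is to treat the two assertions separately: first, closure of invariant families under reindexing and under $\Sigma$-, $\Pi$- and identity types; then invariance of $\Usem\Delta$ and $\Elsem\Delta$. Throughout, fix a clock-introduction inclusion $\iota : \timeobj{\FSA}{\vartheta} \to \timeobj{\FSA,\lambda}{\subex{\vartheta}{\lambda}{n}}$ with $\lambda \notin \FSA$, and recall that a family $A$ over $\Gamma$ in $\gr\Delta$ is invariant precisely when every action $\iota\cdot(-) : A(\gamma) \to A(\iota\cdot\gamma)$ induced by such an inclusion is a bijection.

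Three of the four closure properties are short direct computations. For reindexing along $\gamma : \Delta \to \Gamma$, naturality of $\gamma$ gives $\gamma(\iota\cdot\delta) = \iota\cdot\gamma(\delta)$, so the $\iota$-action on $(A[\gamma])(\delta) = A(\gamma(\delta))$ is exactly the $\iota$-action of $A$ at $\gamma(\delta)$, hence a bijection. For $\Sigma$-types, the $\iota$-action on $(\Sigma A\, B)(\gamma) = \coprod_{a\in A(\gamma)} B(\gamma,a)$ sends $(a,b)$ to $(\iota\cdot a,\iota\cdot b)$; this is a bijection because $\iota\cdot(-)$ is bijective on $A(\gamma)$ and, for each $a$, bijective on $B(\gamma,a)$. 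For extensional identity types, the fibre $\idfam{A}{t}{u}(\gamma)$ is a singleton when $t(\gamma)=u(\gamma)$ and empty otherwise; since $t(\iota\cdot\gamma)=\iota\cdot t(\gamma)$, $u(\iota\cdot\gamma)=\iota\cdot u(\gamma)$ and $\iota\cdot(-)$ is injective on $A(\gamma)$, the two fibres have the same emptiness status and the $\iota$-action between them is the unique bijection.

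The \emph{main obstacle} is the $\Pi$-type case. Unfolding the presheaf CwF, an element of $(\Pi A\, B)(\gamma)$ is a family assigning to each $\sigma : \timeobj{\FSA}{\vartheta} \to \timeobj{\FSB}{\vartheta'}$ and each $a \in A(\sigma\cdot\gamma)$ an element of $B(\sigma\cdot\gamma,a)$, natural in $\sigma$, and the $\iota$-action is precomposition with $\iota$; one must show that this restriction of section data along the clock-introduction inclusion is a bijection. The delicate point is that a morphism out of the extended time object $\timeobj{\FSA,\lambda}{\subex{\vartheta}{\lambda}{n}}$ need not factor through $\iota$, since it may send the fresh clock $\lambda$ to a genuinely new clock or merge it with an existing one. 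The argument must invoke invariance of $A$ (and $B$) to see that such uses of $\lambda$ carry no new information, so that a section over $\gamma$ extends uniquely to one over $\iota\cdot\gamma$; concretely I would factor each $\sigma$ as far through $\iota$ as possible, transport the remaining data along the invariance bijections for $A$, and check naturality and that the result inverts restriction. I expect this bookkeeping, rather than any conceptual difficulty, to be the hard part. One might instead hope to exploit the characterisation of invariance via the canonical map into the $\clk$-power together with a Beck--Chevalley argument pushing the exponential through the dependent product, but since contexts are not assumed invariant the direct route seems more robust.

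Finally, $\Usem\Delta$ and $\Elsem\Delta$ are invariant essentially by inspection. A clock-introduction morphism $\triple{\FSA}{\vartheta}{f} \to \triple{\FSA,\lambda}{\subex{\vartheta}{\lambda}{n}}{\iota f}$ leaves the image $f[\Delta]$ and the restriction $\restrict{\vartheta}{f[\Delta]}$ unchanged, so the two components of $\Usem\Delta$ involved are literally the same set and the $\iota$-action is the identity. For $\Elsem\Delta$, the $\iota$-action on $\Elsem\Delta(X) = X_i$ is the transport inside the small family $X$ along the clock-introduction inclusion, which is a bijection precisely because $X$ is itself required to be invariant under clock introduction. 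Together with the four closure properties this yields the lemma.
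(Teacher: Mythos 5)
The paper gives no proof of this lemma — it is cited directly from \cite{GDTTmodel} — so your attempt can only be assessed on its own merits. Your treatment of reindexing, $\Sigma$-types, extensional identity types, $\Usem\Delta$ and $\Elsem\Delta$ is correct: in particular the key observation for $\Usem\Delta$, namely that $(\iota f)[\Delta] = f[\Delta]$ and $\restrict{\subex\vartheta\lambda n}{(\iota f)[\Delta]} = \restrict\vartheta{f[\Delta]}$ so the $\iota$-action on $\Usem\Delta$ is literally the identity, and the reduction of the $\Elsem\Delta$-action to the invariance of the represented small family $X$, are exactly right.

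The $\Pi$-case, which you correctly single out as the only substantive one, is a genuine gap: it is only sketched, and the sketch's two guiding ideas are both misdirected. You would ``factor each $\sigma$ as far through $\iota$ as possible''; but a morphism $\sigma : \timeobj\FSA\vartheta \to \timeobj\FSB{\vartheta'}$ in general has no factorisation through $\iota : \timeobj\FSA\vartheta \to \timeobj{\FSA,\lambda}{\subex\vartheta\lambda n}$ at all, since that would require a target for $\lambda$ in $\FSB$ whose $\vartheta'$-value is at most $n$, which need not exist (e.g.\ $\FSB = \emptyset$, or $n = 0$ with $\vartheta'$ everywhere positive). And you would ``transport the remaining data along the invariance bijections for $A$''; but the family whose invariance drives the construction is the \emph{codomain} $B$, not $A$. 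The actual move is to introduce a matching fresh clock $\lambda'$ in the codomain and extend $\sigma$ to $\tau : \timeobj{\FSA,\lambda}{\subex\vartheta\lambda n} \to \timeobj{\FSB,\lambda'}{\subex{\vartheta'}{\lambda'}{m}}$ with $\tau|_\FSA = \sigma$, $\tau(\lambda) = \lambda'$ and $m \leq n$, so that $\tau\iota = \iota'\sigma$ for the clock introduction $\iota'$ at the codomain; a section $g$ over $\iota\cdot\gamma$ then yields a section $f$ over $\gamma$ via $f(\sigma,a) \defeq (\iota')^{-1}\bigl(g(\tau,\iota'\cdot a)\bigr)$, inverting $\iota'$ by invariance of $B$. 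Showing that $f$ is well defined, natural and inverse to restriction along $\iota$ needs further clock-extension diagram chases (for the restriction check one uses that the relevant $\iota'$ admits a retraction $r$ with $r\cdot(\iota'\cdot x)=x$, so $r\cdot(-)$ coincides with $(\iota')^{-1}$ on $B$). None of that bookkeeping is in the proposal, and the sketch as framed would not lead you to it.
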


To model the universes in \clott, note that assumptions of the type formation rule for universes
semantically corresponds to a presheaf $\Gamma$ and a set $\clkset$ of elements of  
$\clkfam$. Suppose now that we are given some surjection $\Delta \to \clkset$
for some set $\Delta$. The reader could think of $\Delta$ as the syntactic set of clocks in
the syntactic universe $\univ\Delta$, and the surjection as the function mapping a 
clock to its interpretation, but in fact the choice of $\Delta$ does not matter. 
The surjection defines a map $\clkmap : \Gamma \to \clk^\Delta$, and we define 
$\Usem\clkset \defeq\Usem\Delta[\clkmap]$ as a family over $\Gamma$ 
and $\Elsem\clkset \defeq \Elsem{\Delta}[\cpair{\clkmap\circ \p}\q]$ as a family over 
$\compr{\Gamma}{\Usem\clkset}$. 

\begin{lem}[\cite{GDTTmodel}] \label{lem:univ:reindexing}
 The object $\Usem\clkset$ and the family $\Elsem\clkset$ are welldefined in the sense that 
 they are independent of the choice of $\Delta$ and surjection $\Delta \to\clkset$. Moreover,
 if $\rho : \Gamma' \to \Gamma$ then $\Usem{\clkset[\rho]} = \Usem\clkset[\rho]$
where $\{\kappa_1, \dots, \kappa_n\}[\rho] = \{\kappa_1[\rho], \dots, \kappa_n[\rho]\}$, for 
$\rho : \Gamma' \to \Gamma$ and likewise for $\Elsem{}$.
\end{lem}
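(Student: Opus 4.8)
The plan is to prove the well-definedness clause first and then read off the reindexing equation almost for free. For well-definedness, I would fix a surjection $p : \Delta \to \clkset$, form the classifying map $\clkmap : \Gamma \to \clk^\Delta$ (so $\clkmap(\gamma)(d) = (p(d))(\gamma)$), and unfold $\Usem\Delta[\clkmap]$ explicitly as a presheaf on $\int\Gamma$. The key point to extract is that its component at $\gamma \in \Gamma\timeobj\FSA\vartheta$, namely $\Usem\Delta\triple\FSA\vartheta{\clkmap(\gamma)}$, is by definition the set of small families invariant under clock introduction over $y\timeobj{I}{\restrict\vartheta I}$ with $I$ the image of $\clkmap(\gamma)$, and that --- since $p$ is surjective --- this image is exactly $\{\clkel(\gamma)\mid \clkel\in\clkset\}$, which mentions neither $\Delta$ nor $p$. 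I would then observe that the action of a time-object morphism $\sigma$ on $\Usem\Delta$ is likewise computed by restricting $\sigma$ to a map between the relevant image time objects and reindexing the classified family along it, so it too depends only on these images. Concluding, $\Usem\Delta[\clkmap]$ equals the presheaf on $\int\Gamma$ sending $\gamma \in \Gamma\timeobj\FSA\vartheta$ to the small clock-introduction-invariant families over $y\timeobj{J}{\restrict\vartheta J}$ with $J := \{\clkel(\gamma)\mid\clkel\in\clkset\}$ --- a description in terms of $\Gamma$ and $\clkset$ alone. Repeating this for $\Elsem\Delta[\cpair{\clkmap\circ\p}\q]$, whose fibre over $(\gamma, X)$ is $X$ evaluated at the inclusion $\timeobj{J}{\restrict\vartheta J}\to\timeobj\FSA\vartheta$, disposes of $\Elsem\clkset$.

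For the reindexing statement I would, given $\rho : \Gamma' \to \Gamma$, pick any surjection $p : \Delta \to \clkset$ and compose it with the evident surjection $\clkel \mapsto \clkel[\rho]$ of $\clkset$ onto $\clkset[\rho]$; this yields a surjection $p' : \Delta \to \clkset[\rho]$ using the \emph{same} $\Delta$, whose classifying map $\clkmapB : \Gamma' \to \clk^\Delta$ satisfies $\clkmapB = \clkmap\circ\rho$ by a one-line computation. Applying well-definedness to both sides then gives $\Usem{\clkset[\rho]} = \Usem\Delta[\clkmapB] = \Usem\Delta[\clkmap][\rho] = \Usem\clkset[\rho]$. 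With that identification in hand, the lift of $\rho$ to comprehensions is $\cpair{\rho\circ\p}\q$, and the comprehension equations of the CwF give $\cpair{\clkmap\circ\p}\q \circ \cpair{\rho\circ\p}\q = \cpair{\clkmap\circ\rho\circ\p}\q = \cpair{\clkmapB\circ\p}\q$, hence $\Elsem\clkset[\cpair{\rho\circ\p}\q] = \Elsem\Delta[\cpair{\clkmapB\circ\p}\q] = \Elsem{\clkset[\rho]}$, which is the required equality for $\Elsem{}$.

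The main obstacle is the verification underlying well-definedness: making precise that \emph{both} the object assignment and the morphism action of $\Usem\Delta$ factor through the image of the indexing function, so that $\Usem\Delta[\clkmap]$ carries no information about $\Delta$ or $p$ beyond what that image records. Everything after that --- the one-line check $\clkmapB = \clkmap\circ\rho$ and the comprehension-equation manipulations --- is routine. If one prefers to avoid spelling out the explicit image description, an alternative for well-definedness is to compare two surjections $p_i : \Delta_i \to \clkset$ via their fibre product $\Delta_1 \times_\clkset \Delta_2$: its projections are surjective (pullbacks of surjections along arbitrary maps are surjective in $\Set$) and compatible with the $p_i$, and for a surjection $q$ over $\clkset$ one has $\Usem{\Delta_i}[\clk^q] = \Usem{\Delta_j}$ --- again because a surjection does not alter images --- so the two instances of $\Usem\clkset$ agree; but this still hinges on the same image-invariance observation.
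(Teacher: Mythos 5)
Your reindexing argument matches the paper's own remark after the lemma statement almost verbatim: given a surjection $\Delta\to\clkset$, compose with $\kappa_i\mapsto\kappa_i[\rho]$ to obtain a surjection $\Delta\to\clkset[\rho]$ whose classifying map is $\clkmap\circ\rho$, so that $\Usem{\clkset[\rho]}=\Usem{\Delta}[\clkmap\circ\rho]=\Usem\clkset[\rho]$, with the comprehension-pairing computation handling $\Elsem{}$. The well-definedness clause itself the paper cites to~\cite{GDTTmodel} without reproducing a proof, and your image-invariance unfolding is a correct reconstruction of the omitted argument.
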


The second of these statements follows from the first, since given a surjection $\Delta \to \clkset$
the composite $\Delta \to\clkset \to \clkset[\rho]$, where the second map maps $\kappa_i$ to 
$\kappa_i[\rho]$, is also a surjection inducing the map $\clkmap\circ\rho : \Gamma' \to \clk^\Delta$.
Therefore $\Usem{\clkset[\rho]} = \Usem{\Delta}[\clkmap\circ\rho] = \Usem\clkset[\rho]$. 

In the next section we show how to model 
also $\latbindcode\tickA\kappa (-)$, but first we recall the construction for universe inclusion. 

Suppose $\clkset \subseteq \clkset'$ and suppose that we are given a surjection $\Delta' \to \clkset'$. Let 
$\Delta\subseteq \Delta'$ be the preimage of $\clkset$. There is a projection 
$p_{\Delta, \Delta'} : \clk^{\Delta'} \to \clk^\Delta$ and by Lemma~\ref{lem:closure:iuci} the object $\Usem\Delta[p_{\Delta, \Delta'}]$ 
and the family $\Elsem\Delta[\cpair{p_{\Delta, \Delta'}\circ \p}{\q}]$ are invariant under clock introduction. 
Therefore, by Lemma~\ref{lem:univ:class}, 
there is a unique map $\Usemin{}{\Delta}{\Delta'} : \Usem\Delta[p_{\Delta, \Delta'}] \to \Usem{\Delta'}$ 
in $\gr{\Delta'}$ such that 
$\Elsem\Delta[\cpair{p_{\Delta, \Delta'}\circ \p}{\q}] = \Elsem{\Delta'}[\Usemin{}{\Delta}{\Delta'}]$. Now let
$\clkmapB : \Gamma \to \clk^{\Delta'}$ be the map induced by the surjection from $\Delta'$ to $\clkset'$, 
and $\clkmap$ the one induced by the surjection $\Delta \to \clkset$. Then
$\clkmap \defeq p_{\Delta, \Delta'} \circ\clkmapB : \Gamma \to \clk^{\Delta}$.
Reindexing $\Usemin{}{\Delta}{\Delta'}$ along $\clkmapB$ gives a map 
$\Usemin{}{\clkset}{\clkset'} : \Usem{\clkset} \to \Usem{\clkset'}$ in the category of covariant presheaves over 
$\int \Gamma$ such that 
$\Elsem{\clkset'}[\Usemin{}{\clkset}{\clkset'}] = \Elsem{\clkset}$. Moreover, this map
can be proved~\cite{GDTTmodel} independent of the choice of $\Delta'$ and the surjection used to define $\clkmapB$. 
If $A$ is a family over $\Gamma$ and 
$\code A$ is an element of $\Usem{\clkset}$ such that $A = \Elsem{\clkset}[\cpair{\id_\Gamma}{\code A}]$, then
$\Usemin{}{\clkset}{\clkset'} \circ \code A$ is an element of $\Usem{\clkset'}$ and 
$\Elsem{\clkset'}[\cpair{\id_\Gamma}{\Usemin{}{\clkset}{\clkset'}\circ\code A}] = \Elsem{\clkset}[\cpair{\id_\Gamma}{\code A}] = A$.
Universe inclusion can thus be modelled by postcomposition by $\Usemin{}{\clkset}{\clkset'}$.

\subsection{Modelling $\latbindcode\tickA\kappa (-)$}

For this section let $\Delta = \{\kappa_1, \dots, \kappa_n\}$. 
Note first that the definition of the operation $\tlater$ on 
$\grstar \cong{\gr{\{\kappa\}}}$ can be extended to define an operation on $\gr\Delta$ 
for every $\kappa_i$:
\begin{align} \label{eq:later:kappa:i}
& (\tlater^{\kappa_i} A) \triple\FSA\vartheta f = 
\begin{cases} 
A \triple{\FSA}{\vartheta[f(\kappa_i)-]}f &\vartheta(f(\kappa_i))>0\\
\{\ast\}& \text{otherwise}
\end{cases}
\end{align}
and that this extends to families and elements by similar constructions as for the case of $\grstar$.
The functor $\tlater^{\kappa_i}$ actually has a left adjoint defined similarly to $\tearlier$, but we shall 
not need that here.

One way of viewing definition (\ref{eq:later:kappa:i}) is that $\kappa_i$ defines an element 
$\hat{\kappa_i}$ of the family $\clkfam[]$ over $\clk^\Delta$, and 
\begin{equation} \label{eq:later:kappa:i:2}
(\tlater^{\kappa_i} A) = (\tlaterTy[\hat{\kappa_i}] A)[\nxt^{\hat{\kappa_i}}]
\end{equation}
where $\nxt^{\hat{\kappa_i}} : \clk^\Delta \to \tlater^{\hat{\kappa_i}} (\clk^\Delta)$ is the map
of (\ref{eq:next}).
Suppose now that $\clkmap : \Gamma \to \clk^\Delta$ is
induced by some surjection $\Delta \to \clkset$ and let $\clkset_i$ be the image of $\kappa_i$ under this
surjection. If $A$ is an object in $\gr \Delta$, then by
(\ref{eq:later:kappa:i:2})
\begin{equation} \label{eq:later:kappa:i:reindex}
(\tlater^{\kappa_i} A)[\clkmap] = \tlaterTy[\clkset_i](A[\clkmap])[\nxt^{\clkset_i}]
\end{equation}

Suppose now that $B$ is a family over $A$ in $\gr\Delta$. Then $\tlaterTy[\kappa_i]B$ 
defines a family over $\compr{\clk^{\Delta}}{\tlater^{\kappa_i}A}$ in $\grtotal$, and 
direct calculation verifies that
\begin{equation} \label{eq:later:kappa:i:Fam}
  \tlaterTy[\kappa_i]B = 
  (\tlaterTy[\hat{\kappa_i}{[\p]}]B)
  [\inv{\cpair{\tlater^{\hat{\kappa_i}}(\p)}{\tlaterTm[\hat{\kappa_i}](\q)}}\circ \cpair{\nxt \circ \p}{\q}]
\end{equation}
Here the reindexing on the right is along the composite map
\[
\begin{diagram}
 \compr{\clk^{\Delta}}{\tlater^{\kappa_i}A} 
 \ar{r}{\cpair{\nxt \circ \p}{\q}} & \compr{\tlater^{\hat{\kappa_i}}\clk^{\Delta}}{\tlaterTy[\hat{\kappa_i}]A} 
 \ar{rr}{\inv{\cpair{\tlater^{\hat{\kappa_i}}(\p)}{\tlaterTm[\hat{\kappa_i}](\q)}}} 
 && \tlater^{\hat{\kappa_i}[\p]}(\compr{\clk^{\Delta}}{A})
\end{diagram}
\]
the first of which is well-typed by (\ref{eq:later:kappa:i:2}). If $\clkset$ and $\clkset_i$ are as above
then 
%by (\ref{eq:later:kappa:i:reindex}) \rasmus{Not needed, is this where I want to go?}
%\[
% \cpair{\clkmap\circ \p}{\q} : \compr{\Gamma}{((\tlater^{\kappa_i}A)[\clkset])} \to \compr{\clk^{\Delta}}{\tlater^{\kappa_i}A}
%\]
%and 
substituting both sides of (\ref{eq:later:kappa:i:Fam}) along $\cpair{\clkmap\circ \p}{\q}$  gives 
\begin{equation}\label{eq:later:kappa:i:Fam:reindex}
  (\tlaterTy[\kappa_i]B)[\cpair{\clkmap\circ \p}{\q}] = 
  (\tlaterTy[\clkset_i{[\p]}]B[\cpair{\clkmap\circ \p}{\q}])
  [\inv{\cpair{\tlater^{\clkset_i}(\p)}{\tlaterTm[\clkset_i](\q)}}\circ \cpair{\nxt \circ \p}{\q}]
\end{equation}

The next theorem gives the semantic structure needed to model $\latbindcode\tickA\kappa (-)$. 

\begin{thm}\label{thm:tlater:code}
 Let $\Gamma$ be an object of $\grtotal$, let $\clkset$ be a set of elements of  $\clkfam$, and let 
 $\clkset_i \in \clkset$. 
 There is a morphism 
 $\code{\tlater^{\clkset_i}_\clkset}: \tlater^{\clkset_i}(\Usem{\clkset[\p_{\tearlier^{\clkset_i}}]}) \to \Usem\clkset$ 
 in the category of presheaves over the elements of $\Gamma$ such that the following hold
 \begin{enumerate}
   \item \label{item:tlater:code:1} If $A$ is a family over $\tearlier^{\clkset_i}\Gamma$ and 
   $\code A$ is an element of 
   $\Usem{\clkset[\p_{\tearlier^{\clkset_i}}]}$ such that 
 \[\Elsem{\clkset[\p_{\tearlier^{\clkset_i}}]}[\cpair{\id_{\tearlier^{\clkset_i}\Gamma}}{\code A}] = A,\] 
 then
 $\Elsem{\clkset}[\cpair{\id_{\Gamma}}{\code{\tlater^{\clkset_i}_\clkset} \circ \overline{\code A}}] 
 = \tlater^{\clkset_i} A$. 
 \item \label{item:tlater:code:2} If $\sigma : \Gamma' \to \Gamma$ then 
 $\code{\tlater^{\clkset_i}_\clkset}[\sigma] = \code{\tlater^{\clkset_i[\sigma]}_{\clkset[\sigma]}}$.
 \item \label{item:tlater:code:3} If  $\clkset\subseteq \clkset'$ then 
 \[
   \Usemin{}{\clkset}{\clkset'} \circ \code{\tlater^{\clkset_i}_\clkset} \circ \overline{\code A}
   = \code{\tlater^{\clkset_i}_{\clkset'}}\circ 
   \overline{(\Usemin{}{(\clkset[\p_{\tlater^{\clkset_i}}])}{(\clkset'[\p_{\tlater^{\clkset_i}}])} \circ \code A)}
 \]
 \end{enumerate}
\end{thm}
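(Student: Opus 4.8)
The plan is to reduce the whole statement to the ``universal'' situation over $\gr\Delta$, where the base object \emph{is} invariant under clock introduction so that Lemma~\ref{lem:univ:class} applies, and then obtain the general case by reindexing along the maps $\clkmap : \Gamma \to \clk^\Delta$ of Section~\ref{sec:semantic:universes}. Concretely, I would first work entirely in $\gr\Delta$. By (the extension to families of) definition~(\ref{eq:later:kappa:i}), $\tlater^{\kappa_i}\Usem\Delta$ is a presheaf in $\gr\Delta$ and $\tlaterTy[\kappa_i]\Elsem\Delta$ is a small family over it. A short direct computation — checking the two cases of~(\ref{eq:later:kappa:i}), using that updates at distinct clocks commute, together with the fact (Lemma~\ref{lem:closure:iuci}) that $\Usem\Delta$ and $\Elsem\Delta$ are invariant under clock introduction — shows that $\tlater^{\kappa_i}\Usem\Delta$ and $\tlaterTy[\kappa_i]\Elsem\Delta$ are again invariant under clock introduction. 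Hence Lemma~\ref{lem:univ:class} yields a unique morphism $\code{\tlater^{\kappa_i}_\Delta} : \tlater^{\kappa_i}\Usem\Delta \to \Usem\Delta$ in $\gr\Delta$ with $\Elsem\Delta[\code{\tlater^{\kappa_i}_\Delta}] = \tlaterTy[\kappa_i]\Elsem\Delta$.

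Next I would transport this to $\grtotal$. Given $\Gamma$, $\clkset$ and a surjection $\Delta \to \clkset$ inducing $\clkmap : \Gamma \to \clk^\Delta$, I set $\code{\tlater^{\clkset_i}_\clkset} \defeq \code{\tlater^{\kappa_i}_\Delta}[\clkmap]$ and must check that it has the stated domain and codomain. The codomain is $\Usem\Delta[\clkmap] = \Usem\clkset$ by definition; for the domain I would combine~(\ref{eq:later:kappa:i:reindex}), which gives $(\tlater^{\kappa_i}\Usem\Delta)[\clkmap] = \tlaterTy[\clkset_i](\Usem\Delta[\clkmap])[\nxt^{\clkset_i}]$, with~(\ref{eq:laterTy:eta}) and Lemma~\ref{lem:univ:reindexing} (which gives $\Usem\Delta[\clkmap][\p_{\tearlier^{\clkset_i}}] = \Usem{\clkset[\p_{\tearlier^{\clkset_i}}]}$), thereby identifying the reindexed presheaf with $\tlater^{\clkset_i}(\Usem{\clkset[\p_{\tearlier^{\clkset_i}}]})$. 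Item~(\ref{item:tlater:code:1}) is then a computation: reindexing the defining equation of $\code{\tlater^{\kappa_i}_\Delta}$ along $\cpair{\clkmap\circ\p}{\q}$ gives~(\ref{eq:later:kappa:i:Fam:reindex}) with $B \defeq \Elsem\Delta$; unfolding $\Elsem{\clkset}[\cpair{\id_\Gamma}{\code{\tlater^{\clkset_i}_\clkset}\circ\overline{\code A}}]$ through this identity, using naturality of $\overline{(-)}$ (Theorem~\ref{thm:dradjoint:gr:total}(\ref{item:overline:nat})), the hypothesis $\Elsem{\clkset[\p_{\tearlier^{\clkset_i}}]}[\cpair{\id}{\code A}] = A$, and~(\ref{eq:laterTy:eta}), one arrives at $(\tlaterTy[\clkset_i]A)[\eta] = \tlater^{\clkset_i}A$.

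Item~(\ref{item:tlater:code:2}) is immediate from the construction: for $\sigma : \Gamma' \to \Gamma$ the composite $\clkmap\circ\sigma$ is exactly the map induced by the composite surjection $\Delta \to \clkset \to \clkset[\sigma]$ (the observation recorded just before Lemma~\ref{lem:univ:reindexing}), so $\code{\tlater^{\clkset_i}_\clkset}[\sigma] = \code{\tlater^{\kappa_i}_\Delta}[\clkmap][\sigma] = \code{\tlater^{\kappa_i}_\Delta}[\clkmap\circ\sigma] = \code{\tlater^{\clkset_i[\sigma]}_{\clkset[\sigma]}}$, where well-definedness (independence of $\Delta$ and the chosen surjection) is inherited from Lemma~\ref{lem:univ:reindexing}. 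For item~(\ref{item:tlater:code:3}) I would again establish the corresponding identity first in the universal case over $\clk^{\Delta'}$: both $\Usemin{}{\Delta}{\Delta'}\circ\code{\tlater^{\kappa_i}_\Delta}[p_{\Delta,\Delta'}]$ and $\code{\tlater^{\kappa_i}_{\Delta'}}\circ\tlater^{\kappa_i}(\Usemin{}{\Delta}{\Delta'})$ are maps into $\Usem{\Delta'}$ whose $\Elsem{\Delta'}$-pullback equals $\tlaterTy[\kappa_i](\Elsem\Delta[p_{\Delta,\Delta'}])$ — using the defining equations of $\Usemin{}{\Delta}{\Delta'}$ and of the two codes, together with $\tlater^{\kappa_i}(\Usem\Delta[p_{\Delta,\Delta'}]) = (\tlater^{\kappa_i}\Usem\Delta)[p_{\Delta,\Delta'}]$ — so they coincide by the uniqueness clause of Lemma~\ref{lem:univ:class} applied over the invariant base $\tlater^{\kappa_i}\Usem\Delta$. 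Reindexing this equation along $\clkmapB$, composing with $\overline{\code A}$, and using naturality of $\overline{(-)}$ to rewrite $\tlater^{\clkset_i}$ of the reindexed inclusion composed with $\overline{\code A}$ as $\overline{\Usemin{}{\clkset}{\clkset'}\circ\code A}$, yields item~(\ref{item:tlater:code:3}). The main obstacle I expect is purely the bookkeeping of the second paragraph: making the domains and codomains of the reindexed morphisms literally match the types demanded in the theorem requires threading~(\ref{eq:later:kappa:i:2}), (\ref{eq:later:kappa:i:reindex}), (\ref{eq:later:kappa:i:Fam}) and~(\ref{eq:later:kappa:i:Fam:reindex}) through~(\ref{eq:laterTy:eta}) and the dependent-right-adjoint bijection in the correct order, and care is needed that the uniqueness of codes is invoked only over bases (such as $\tlater^{\kappa_i}\Usem\Delta$) that are genuinely invariant under clock introduction, since $\Gamma$ itself need not be. Everything else is routine once the universal-case construction is in place.
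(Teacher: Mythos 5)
Your proposal matches the paper's proof: construct the universal code $\code{\tlater^{\kappa_i}_\Delta}$ in $\gr\Delta$ by showing $\tlater^{\kappa_i}\Usem\Delta$ and $\tlaterTy[\kappa_i]\Elsem\Delta$ are invariant under clock introduction and invoking Lemma~\ref{lem:univ:class}, define $\code{\tlater^{\clkset_i}_\clkset}$ by reindexing along $\clkmap$, verify the domain via~(\ref{eq:later:kappa:i:reindex}) and~(\ref{eq:laterTy:eta}), and prove the three items by the calculation through~(\ref{eq:later:kappa:i:Fam:reindex}), the composite-surjection observation, and the uniqueness of codes in the universal case, respectively. This is exactly the route taken in the paper, and the bookkeeping steps you flag as potential obstacles are handled there in just the order you describe.
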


The first of these items states that if $\code A$ is a code for $A$, then 
$\code{\tlater^{\clkset_i}_\clkset} \circ \overline{\code A}$ is a code for 
$\tlater^{\clkset_i} A$. The second one states that $\code{\tlater^{\clkset_i}_\clkset}$
is preserved by substitutions and the third that it commutes with universe inclusions. 
The second of these will be used in the proof of the substitution lemma 
(Lemma~\ref{lem:substitution}) and the other two in the proof of soundness 
(Theorem~\ref{thm:soundness}).

\begin{proof}
 First note that if $\Gamma$ is an object of $\gr\Delta$ and $A$ is a small family over $\Gamma$
 such that both $\Gamma$ and $A$ are invariant under clock introduction, 
 then also $\tlater^{\kappa_i} \Gamma$
 and the small family $\tlaterTy[\kappa_i] A$
 over $\tlater^{\kappa_i} \Gamma$ are invariant under clock introduction. This is simply
 because the action of an inclusion $\iota : \triple\FSA\vartheta f
 \to\triple{\FSA, \lambda}{\subex\vartheta\lambda n}{\iota f}$ on $\tlater^{\kappa_i} \Gamma$ 
 and $\tlaterTy[\kappa_i] A$
 is given by the action of $\iota : \triple{\FSA}{\vartheta[f({\kappa_i})-]}f
 \to \triple{\FSA, \lambda}{\subex{\vartheta[f({\kappa_i})-]}\lambda n}{\iota f}$ on $\Gamma$ and $A$, in
 the case of $f({\kappa_i}) > 0$ and by the mapping $\{\ast\} \to \{\ast\}$ when $f({\kappa_i}) = 0$. 
% Likewise $\tlater^{\kappa_i}$ preserves the property of being invariant under clock introduction
% for objects in $\gr\Delta$. 
 In particular $\tlater^{{\kappa_i}}(\Usem\Delta)$ and 
 $\tlaterTy[{\kappa_i}](\Elsem\Delta)$ are both invariant under clock introduction, and therefore
 there is a unique morphism $\code{\tlater^{\kappa_i}_\Delta}: \tlater^{{\kappa_i}}(\Usem\Delta) \to \Usem\Delta$
 such that $\tlaterTy[{\kappa_i}](\Elsem\Delta) = \Elsem\Delta[\code{\tlater_\Delta^{{\kappa_i}}}]$. 
 Define $\code{\tlater^{\clkset_i}_\clkset} \defeq \code{\tlater^{{\kappa_i}}_\Delta}[\clkmap]$, which
 can be proved independent of the choice of $\Delta$ and surjection inducing $\clkmap$ using the
 tools of~\cite{GDTTmodel}. To see that this
 has the right domain, note that by (\ref{eq:later:kappa:i:reindex})
\begin{align*}
 (\tlater^{{\kappa_i}}(\Usem\Delta))[\clkmap] & = (\tlaterTy[\clkset_i]\Usem\clkset)[\nxt^{\clkset_i}] \\
 & = (\tlaterTy[\clkset_i]\Usem\clkset)[\tlater^{\clkset_i}(\p_{\clkset_i}) \circ \eta] \\
 & = (\tlaterTy[\clkset_i]\Usem{\clkset[\p_{\clkset_i}]})[\eta] \\
 & = \tlater^{\clkset_i}\Usem{\clkset[\p_{\clkset_i}]}
\end{align*}
using (\ref{eq:laterTy:eta}) in the last line.
Item (\ref{item:tlater:code:1}) can then be proved as follows. First note that
\begin{align*}
 \Elsem{\clkset}[\cpair{\id_{\Gamma}}{\code{\tlater^{\clkset_i}_\clkset} \circ \overline{\code A}}
 & = \Elsem{\Delta}[\cpair{\clkmap\circ \p}{\q}\circ 
 \cpair{\id_{\Gamma}}{\code{\tlater^{\clkset_i}_\clkset}\circ \q}
 \circ \cpair{\id_{\Gamma}}{\overline{\code A}}] \\
 & = \Elsem{\Delta}[\cpair{\clkmap\circ \p}{\q}\circ 
 \cpair{\id_{\Gamma}}{\code{\tlater^{\kappa_i}_\Delta}[\clkmap]\circ \q}
 \circ \cpair{\id_{\Gamma}}{\overline{\code A}}] \\
 & = \Elsem{\Delta}[\cpair{\id_{\Gamma}}{\code{\tlater^{\kappa_i}_\Delta}\circ \q} \circ
 \cpair{\clkmap\circ \p}{\q} \circ \cpair{\id_{\Gamma}}{\overline{\code A}}] \\
 & = \tlaterTy[\kappa_i](\Elsem\Delta)[\cpair{\clkmap\circ \p}{\q} \circ \cpair{\id_{\Gamma}}{\overline{\code A}}] 
\end{align*}
 By (\ref{eq:later:kappa:i:Fam:reindex})
\begin{align*}
 \tlaterTy[\kappa_i](\Elsem\Delta)[\cpair{\clkmap\circ \p}{\q}] & = 
 (\tlaterTy[\clkset_i{[\p]}]\Elsem\Delta[\cpair{\clkmap\circ \p}{\q}]) [\inv{\cpair{\tlater^{\clkset_i}(\p)}{\tlaterTm[\clkset_i](\q)}}\circ \cpair{\nxt \circ \p}{\q}] \\
 & = (\tlaterTy[\clkset_i{[\p]}]\Elsem\clkset) [\inv{\cpair{\tlater^{\clkset_i}(\p)}{\tlaterTm[\clkset_i](\q)}}\circ \cpair{\nxt \circ \p}{\q}]
\end{align*}
and so 
\begin{align*}
 \Elsem{\clkset}[\cpair{\id_{\Gamma}}{\code{\tlater^{\clkset_i}_\clkset} \circ \overline{\code A}}
 & = (\tlaterTy[\clkset_i{[\p]}]\Elsem\clkset)
  [\inv{\cpair{\tlater^{\clkset_i}(\p)}{\tlaterTm[\clkset_i](\q)}}\circ \cpair{\nxt \circ \p}{\q} \circ
  \cpair{\id_{\Gamma}}{\overline{\code A}}] \\ 
 & = (\tlaterTy[\clkset_i{[\p]}]\Elsem\clkset)
  [\inv{\cpair{\tlater^{\clkset_i}(\p)}{\tlaterTm[\clkset_i](\q)}}\circ \cpair{\nxt}{\overline{\code A}}]  
\end{align*}
Since $\nxt = \tlater^{\clkset_i}(\p_{\tearlier^{\clkset_i}}) \circ \eta$ and $\overline{\code A} = \tlaterTm[\clkset_i](\code A)[\eta]$
the above reduces to 
\begin{align*}
 \Elsem{\clkset}[\cpair{\id_{\Gamma}}{\code{\tlater^{\clkset_i}_\clkset} \circ \overline{\code A}}
 & = (\tlaterTy[\clkset_i{[\p]}]\Elsem\clkset)
  [\inv{\cpair{\tlater^{\clkset_i}(\p)}{\tlaterTm[\clkset_i](\q)}}\circ 
  \cpair{\tlater^{\clkset_i}(\p_{\tearlier^{\clkset_i}})}{\tlaterTm[\clkset_i](\code A)}\circ \eta]   \\
 & = (\tlaterTy[\clkset_i{[\p]}]\Elsem\clkset) [\tlater^{\clkset_i}(\cpair{\p_{\tearlier^{\clkset_i}}}{\code A})\circ \eta]  \\
 & = (\tlaterTy[\clkset_i](\Elsem\clkset[\cpair{\p_{\tearlier^{\clkset_i}}}{\code A}]))[\eta]  \\
 & = (\tlaterTy[\clkset_i](\Elsem{\clkset[\p_{\tearlier^{\clkset_i}}]}[\cpair{\id}{\code A}]))[\eta]  \\
 & = (\tlaterTy[\clkset_i](A))[\eta]  \\
 & = \tlater^{\clkset_i}(A) 
\end{align*}

For item (\ref{item:tlater:code:2}) note that given a surjection $\Delta \to \clkset$, the composite map
$\Delta \to \clkset \to \clkset[\sigma]$ where the last map maps $\kappa$ to $\kappa[\sigma]$ is also 
a surjection inducing the map $\clkmap\circ \sigma : \Gamma' \to \clk^\Delta$.
So  $\code{\tlater^{\clkset_i[\sigma]}_{\clkset[\sigma]}}$ can be defined as 
$\code{\tlater^{{\kappa_i}}_\Delta}[\clkmap\circ \sigma]$ which equals $\code{\tlater^{\clkset_i}_\clkset}[\sigma]$. 

For item (\ref{item:tlater:code:3}), first note that by (\ref{eq:later:kappa:i:2})
\begin{align*}
 (\tlater^{\kappa_i}\Usem\Delta)[p_{\Delta, \Delta'}] & = (\tlaterTy[\hat{\kappa_i}]\Usem{p_{\Delta, \Delta'}})[\nxt^{\hat{\kappa_i}}]
  = \tlater^{\kappa_i}(\Usem{p_{\Delta, \Delta'}})
\end{align*}
The diagram
\[
\begin{diagram}
 \tlater^{\kappa_i}(\Usem{p_{\Delta, \Delta'}})  \ar{d}[swap]{\tlater^{\kappa_i}(\Usemin{}{\Delta}{\Delta'})} 
 \ar{rr}{\code{\tlater^{{\kappa_i}}_\Delta}[p_{\Delta, \Delta'}]} & &
  \Usem{p_{\Delta, \Delta'}} \ar{d}{\Usemin{}{\Delta}{\Delta'}}  \\
 \tlater^{\kappa_i}(\Usem{\Delta'}) \ar{rr}{\code{\tlater^{{\kappa_i}}_{\Delta'}}} && \Usem{\Delta'} 
\end{diagram}
\]
commutes by the uniqueness statement of Lemma~\ref{lem:univ:class} because both directions classify the family 
$\tlaterTy[\kappa_i](\Elsem{p_{\Delta, \Delta'}})$. Using this,  item (\ref{item:tlater:code:3}) follows.
\end{proof}

\section{Interpretation of syntax}
\label{sec:interp:syntax}

This section defines the interpretation of the syntax into the model. It is well known that to well define 
an interpretation of judgements (rather than derivations) of dependent type theory
with the conversion rule, the syntax of the type theory must be 
annotated with typing information. For example, Hofmann~\cite{Hofmann1997} annotates the application term for $\Pi$
types with the type of the function being applied, writing $\PiApp xABtu$ rather than the more common
application term $t\,u$, in order to define an interpretation of syntax into a general CwF. Likewise lambda
expressions are annotated not just with the type of the variable being abstracted, but also with the target type
of the abstraction. 

Following Hofmann~\cite{Hofmann1997} we define an annotated syntax of pre-terms, pre-types and pre-contexts 
(allowing also judgements with no derivations), define an interpretation of this into the model
as a partial function by structural induction, and finally prove that the interpretation of judgements
with derivations are well-defined. 
%This process is further complicated by the fact that the interpretation
%must be defined while simultaneously proving weakening and substitution lemmas.

\begin{figure*}[tb]
\begin{center}
%\textbf{Type formation rules} 
%\begin{mathpar}
%  \inferrule*
%  {\istype{\Gamma,\tickA:\kappa}{A}\\ \kappa \in \Delta}
%  {\istype{\Gamma}{\latbind{\tickA}{\kappa} A}}
%  \and
%  \inferrule*
%  {\istype[\Delta, \kappa]{\Gamma}{A}\\
%    \wfcxt{\Gamma}}
%  {\istype{\Gamma}{\forall \kappa . A}}
%\end{mathpar}
\textbf{Typing rules}
\begin{mathpar}
  \inferrule*
  {\hastype[]{\Gamma,\kappa : \clocktype}{t}{A}\\
    \wfcxt{\Gamma}}
  {\hastype{\Gamma}{\ForallLam\kappa At}{\forall \kappa . A}}
  \and
  \inferrule*
  {\hastype{\Gamma}{t}{\forall \kappa . A}\\
    \hastype\Gamma{\kappa'}\clocktype}
  {\hastype{\Gamma}{\ForallApp \kappa At{\kappa'}}{A \subst{\kappa}{\kappa'}}}
  \and
  \inferrule*
  {\hastype{\Gamma,\tickA:\kappa}{t}{A}}
  {\hastype{\Gamma}{\TickLam{\tickA}{\kappa}A t}{\latbind{\tickA}{\kappa} A}}
  \and
  \inferrule*
  {\hastype{\Gamma}{t}{\latbind{\tickA}{\kappa} A}\\\wfcxt{\Gamma,\tickB:\kappa,\Gamma'}}
  {\hastype{\Gamma,\tickB: \kappa,\Gamma'}{\TickApp\tickA\kappa A t{\tickB}}{A\toksubst{\tickB}{\tickA}}}
  \and
  \inferrule*
  {\hastype[]{\Gamma, \kappa : \clocktype}{t}{\latbind{\tickA}{\kappa} A}\\ \hastype{\Gamma}{\kappa'}\clocktype}
  {\hastype{\Gamma}{\TickcApp\tickA \kappa At{\kappa'}}{A\tickcsub\tickA\kappa{\kappa'}}} 
  \and
  \inferrule*
  {\hastype{\Gamma}{t}{\later^\kappa A \to A}}
  {\hastype{\Gamma}{\dfix_A^\kappa\,t}{\later^\kappa A}}
\end{mathpar}
%\textbf{Judgemental equality}
%  \begin{align*}
%    (\Lambda\kappa.t)[\kappa']&\jeq  t \subst\kappa{\kappa'} &
%    (\Lambda \kappa. t [\kappa]) &\jeq t & %&\text{if } \kappa \nin \fc{t} \\ % \label{eq:clockEta}\\
%   \tapp[\tickB]{(\tabs{\tickA}{\kappa} t)} &\jeq t\toksubst{\tickB}{\tickA} \\ % \label{eq:backNext}\\
%    \tabs{\tickA}{\kappa} (\tapp t) &\jeq t & % && \text{if } \tickA \nin \fv{t} \\% \label{eq:nextBack}\\
%    \tappc{(\dfix^\kappa \, t)} &\jeq t\,(\dfix^\kappa\,t) &
%  \end{align*}
\textbf{Tick constant substitution}
\begin{align*}
 (\ForallLam{\kappa''} At)\tickcsub\tickA\kappa{\kappa'} 
 & = \ForallLam{\kappa''} {A\tickcsubsm\tickA\kappa{\kappa'}}{(t\tickcsub\tickA\kappa{\kappa'})} \\ 
 % & \text{ if } \kappa \neq\kappa'' \\
 \ForallApp {\kappa''} At{\kappa} \tickcsub\tickA\kappa{\kappa'} 
 & =  \ForallApp{\kappa''} {A\tickcsubsm\tickA\kappa{\kappa'}}{t\tickcsub\tickA\kappa{\kappa'}}{\kappa'} \\
 %& \text{ if } \kappa\neq \kappa'' \\
 \ForallApp {\kappa''} At{\kappa'''} \tickcsub\tickA\kappa{\kappa'} 
 & =  \ForallApp{\kappa''} {A\tickcsubsm\tickA\kappa{\kappa'}}{t\tickcsub\tickA\kappa{\kappa'}}{\kappa'''} \\
 (\TickLam{\tickB}{\kappa}A t)\tickcsub\tickA\kappa{\kappa'} 
 & = \TickLam{\tickB}{\kappa'}{A\tickcsubsm\tickA\kappa{\kappa'}}{(t\tickcsub\tickA\kappa{\kappa'})} \\
 (\TickLam{\tickB}{\kappa''}A t)\tickcsub\tickA\kappa{\kappa'} 
 & = \TickLam{\tickB}{\kappa''}{A\tickcsubsm\tickA\kappa{\kappa'}}{(t\tickcsub\tickA\kappa{\kappa'})} \\
 \TickApp\tickB\kappa A t{\tickA}\tickcsub\tickA\kappa{\kappa'} 
 & =  \TickcApp\tickB\kappa At{\kappa'} \\
  \TickApp\tickB\kappa A t{\tickA''}\tickcsub\tickA\kappa{\kappa'} 
 & =  \TickApp\tickB{\kappa'} {A\tickcsubsm\tickA\kappa{\kappa'}}{t\tickcsub\tickA\kappa{\kappa'}}{\tickA''}\\
  \TickApp\tickB{\kappa''} A t{\tickA''}\tickcsub\tickA\kappa{\kappa'} 
 & =  \TickApp\tickB{\kappa''} {A\tickcsubsm\tickA\kappa{\kappa'}}{t\tickcsub\tickA\kappa{\kappa'}}{\tickA''} \\
 \TickcApp\tickB{\kappa''}At{\kappa}\tickcsub\tickA\kappa{\kappa'}
 & =  \TickcApp\tickB{\kappa''} {A\tickcsubsm\tickA\kappa{\kappa'}}{t\tickcsub\tickA\kappa{\kappa'}}{\kappa'} \\
 \TickcApp\tickB{\kappa''}At{\kappa'''}\tickcsub\tickA\kappa{\kappa'}
 & =  \TickcApp\tickB{\kappa''} {A\tickcsubsm\tickA\kappa{\kappa'}}{t\tickcsub\tickA\kappa{\kappa'}}{\kappa'''}
\end{align*}
\caption{Annotated typing rules for Clocked Type Theory.}
\label{fig:clott:typing:annotated}
\end{center}
\end{figure*}

The typing rules for annotated terms are presented in Figure~\ref{fig:clott:typing:annotated}. We 
omit the standard cases such as $\Pi$- and $\Sigma$-types, as well as universes. Operations 
on universes should be annotated with the context index of the universe 
at which they are applied, for example $\latbindcodeAnn{\Delta}\tickA\kappa A$ contains the clocks context $\Delta$.
The annotated
syntax is a straight-forward adaptation of the annotations used by Hofmann, except in the case of application
to $\tickc$. The first thing to note is that the substitution of $\kappa'$ for $\kappa$ in the conclusion of the 
typing rule for $\tickc$ has been replaced by an explicit substitution: $\TickcApp\tickA \kappa At{\kappa'}$
binds $\kappa$ in $t$ and applies it to $\kappa'$. This is not a completely faithful representation of the original
syntax of Clocked Type Theory: In the original syntax, one might have different terms $s$ and $t $ such that
$t\subst\kappa{\kappa'} = s\subst\kappa{\kappa'}$ in which case the terms $\tappc {t\subst\kappa{\kappa'}}$
and $\tappc {s\subst\kappa{\kappa'}}$ are syntactically equal, but $\TickcApp\tickA \kappa At{\kappa'}$
and $\TickcApp\tickA \kappa As{\kappa'}$ are not. This is unfortunate, but seems unavoidable. See 
Section~\ref{sec:conclusion} for a discussion.
%In Section~\ref{sec:welldefinedness} we will show that the 
%interpretations of the latter two terms are equal, and
%thus the choice of $s$ or $t$ in the elaboration of a term does not affect the interpretation.
%\rasmus{This has been dropped}

Note also that $\TickcApp\tickA \kappa At{\kappa'}$ binds both $\tickA$ and $\kappa$ in $A$
(as indicated by the two sets of square brackets), unlike, e.g., $\TickLam{\tickA}{\kappa}A t$ which only binds
$\tickA$. The typing in the conclusion uses a special simultaneous substitution of $\tickc$ for $\tickA$ 
and $\kappa'$ for $\kappa$. This is defined for terms in the figure; on types this simply distributes 
over the structure of types, except in universes, where it substitutes $\kappa'$ for $\kappa$. 
The rules defining the substitution are subject to the usual side conditions avoiding capture of bound variables, but
we omit these from the figure. In case multiple rules match, the top-most one should be applied. For example, the third
rule only triggers when $\kappa \neq \kappa'''$.

The rules for judgemental equality can be formulated in the annotated syntax essentially as they are in the unannotated syntax. 
For example, the $\beta$ rule for ticks in the case of application to $\tickc$ is
\begin{equation} \label{eq:annotated:beta}
  \TickcApp\tickA \kappa A{(\TickLam{\tickA}{\kappa}A t)}{\kappa'} = t\tickcsub\tickA\kappa{\kappa'}
\end{equation}

\begin{figure*}[tbp]
\begin{center}
\textbf{Contexts}
\begin{align*}
 \den{\wfcxt{-}} & = 1 &
 \den{\wfcxt{\Gamma, x : A}} & =  \compr{\den{\wfcxt{\Gamma}}}{\den{\istype{\Gamma}{A}}} \\
 \den{\wfcxt{\Gamma, \kappa : \clocktype}} & = \compr{\den{\wfcxt{\Gamma}}}{\clk} &
 \den{\wfcxt{\Gamma, \tickA : \kappa}} & =  \tearlier^{\den{\hasnotype\Gamma\kappa{}}}\den{\wfcxt{\Gamma}} 
\end{align*}
\textbf{Types}
\begin{align*}
% \den{\istype{\Gamma}{\forall\kappa . A}}& =  \forallbar i\den{\istype[\Delta, \kappa]{\Gamma}{A}} & 
 \den{\istype{\Gamma}{\latbind\tickA\kappa A}}& =  \tlater^{\den{\hasnotype\Gamma\kappa{}}}\den{\istype{\Gamma, \tickA : \kappa}{A}} &
 \den{\istype\Gamma{\univ{\Delta}}} & = \Usem{\den{\hasnotype\Gamma\Delta{}}} \\
  \den{\istype{\Gamma}{\elems{\Delta}{(t)}}} & = \Elsem{\den{\hasnotype\Gamma\Delta{}}}[\cpair{\id{}}{\den{\hasnotype\Gamma{t}{}}}]
\end{align*}
\textbf{Terms}
\begin{align*}
  \den{\hasnotype{\Gamma, x : A, \Gamma'}x{}} & = \q[\p_{\Gamma'}]  \\
  \den{\hasnotype{\Gamma, \kappa : \clocktype, \Gamma'}\kappa{}} & = \q[\p_{\Gamma'}]  \\
% \den{\hasnotype{\Gamma}{\ForallLam\kappa At}{\forall \kappa . A}} & =  \overline{\den{\hasnotype[\Delta, \kappa]{\Gamma}{t}{A}}} \\
% \den{\hasnotype{\Gamma}{\ForallApp \kappa At{\kappa'}}{A \subst{\kappa}{\kappa'}}} 
% & = \grbar{\kappa\mapsto\kappa'}(\overline{\den{\hasnotype{\Gamma}{t}{\forall \kappa . A}}}) \\
 \den{\hasnotype{\Gamma}{\TickLam{\tickA}{\kappa}A t}{\latbind{\tickA}{\kappa} A}} 
 & = \overline{\den{\hasnotype{\Gamma,\tickA:\kappa}{t}{A}}}  \\
 \den{\hasnotype{\Gamma,\tickA': \kappa,\Gamma'}{\TickApp\tickA\kappa A t{\tickA'}}{A\toksubst{\tickA'}{\tickA}}} 
 & = \overline{\den{\hasnotype{\Gamma}{t}{\latbind{\tickA}{\kappa} A}}}[\p_{\Gamma'}] \\
 \den{\hasnotype{\Gamma}{\TickcApp\tickA \kappa At{\kappa'}}{A\tickcsub\tickA\kappa{\kappa'}}}
 & = \overline{\den{\hasnotype{\Gamma}{t}{\latbind{\tickA}{\kappa} A}}}[d \circ \cpair{\id_{\den\Gamma}}{\den{\hasnotype\Gamma{\kappa'}{}}}] \\
  \den{\hasnotype{\Gamma}{\dfix_A^\kappa\,t}{\later^\kappa A}}  
  & = \dfixebar^{\den{\hasnotype\Gamma\kappa{}}}_{\den\Gamma, \den A}(\den{\hasnotype{\Gamma}{t}{}}) \\
  \den{\hasnotype{\Gamma}{\latbindcodeAnn{\Delta}\tickA\kappa A}{}} 
  & = \code{\tlater^{\den{\hasnotype\Gamma\kappa{}}}_{\den{\hasnotype\Gamma\Delta{}}}}\circ \overline{\den{\hasnotype{\Gamma, \tickA : \kappa}A{}}} \\
  \den{\hasnotype{\Gamma}{\univin{\Delta}{\Delta'}(A)}{}} & = 
  \Usemin{}{\den{\hasnotype\Gamma{\Delta}{}}}{\den{\hasnotype\Gamma{\Delta'}{}}}\circ \den{\hasnotype\Gamma A{}}
\end{align*}
\caption{Interpretation function. The notation $\den{\hasnotype\Gamma{\Delta}{}}$ means
$\{\den{\hasnotype\Gamma{\kappa_1}{}}, \dots \den{\hasnotype\Gamma{\kappa_n}{}}\}$ where 
$\Delta = \{\kappa_1, \dots, \kappa_n\}$. The morphism $d$ is the inverse of $\p_{\tearlier^\q}$
of Proposition~\ref{prop:p:inverse}. }
\label{fig:interp}
\end{center}
\end{figure*}

%\rasmus{I think the interpretation of a term in a context should be written as $\den{\genericjudg[]\Gamma t}$ rather than
%$\den{\hastype[]\Gamma tA}$ since the target type does not affect the interpretation of the judgement. I need to think a bit
%more about this before I make the change.} 

The most important cases of the partial function defining the interpretation of syntax are given in 
Figure~\ref{fig:interp}. Context, type, and term judgements
are interpreted in the CwF structure of $\grtotal$. The figure excludes the standard cases of $\Pi$- and $\Sigma$-types, 
extensional identity types, as well as universal quantification over clocks, which is modelled as a $\Pi$-type. 
The overlines in the interpretation of terms refer to the bijective correspondence of Theorem~\ref{thm:dradjoint:gr:total}
and the projection $\p_{\Gamma'} : \den{\wfcxt{\Gamma,\tickA': \kappa,\Gamma'}} \to \den{\wfcxt{\Gamma,\tickA': \kappa}}$
in the interpretation of tick application is defined by induction on $\Gamma'$ in the obvious way. 
%See also the proof of 
%Theorem~\ref{thm:soundness} below for more details.

\subsection{Substitution lemmas}
\label{sec:substitutions}

As is standard in models of dependent type theories \cite{Hofmann1997}, 
welldefinedness of the interpretation function must be proved by induction on the
structure of derivations, simultaneously with soundness and a substitution lemma, 
that we now describe. 

\begin{figure}[tbp]
\begin{center}
\textbf{Formation rules}
\begin{mathpar}
\inferrule*
 { }
 {[] : \Gamma \to \cdot}
\and
\inferrule*
 {\sigma : \Gamma \to \Gamma' \and \hastype{\Gamma}t{A\sigma}}
 {\subex\sigma xt : \Gamma \to \Gamma', x : A}
\and
\inferrule*
 {\sigma : \Gamma \to \Gamma' \and \hastype{\Gamma}{\kappa'}{\clocktype}}
 {\subex\sigma \kappa{\kappa'} : \Gamma \to \Gamma', \kappa : \clocktype}
\and
\inferrule*
 {\sigma : \Gamma \to \Gamma' \and}
 {\subex\sigma{\tickA}{\tickB} : \Gamma, \tickB : \kappa\sigma, \Gamma'' \to \Gamma', \tickA : \kappa}
\and
\inferrule*
 {\sigma : \Gamma \to \Gamma' \and \hastype\Gamma{\kappa'}\clocktype}
 {\subex\sigma{(\tickA : \kappa)}{(\tickc : \kappa')} : \Gamma \to \Gamma', \kappa : \clocktype, \tickA : \kappa} 
\end{mathpar}
\textbf{Interpretation}
\begin{align*}
 \den{[]} & = \, \uniquemap  & % & : \den\Gamma \to 1 \\
 \den{\subex\sigma xt} & = \cpair{\den\sigma}{\den t} \\
 \den{\subex\sigma \kappa{\kappa'}} & = \cpair{\den\sigma}{\den{\kappa'}} &
 \den{\subex\sigma{\tickA}{\tickB}} & = \tearlier^{\den{\kappa}}(\den{\sigma}) \circ \p_{\Gamma''} \\
 \den{\subex\sigma{(\tickA : \kappa)}{(\tickc : \kappa')}} & = d \circ \cpair{\den\sigma}{\den{\kappa'}}
\end{align*}
\caption{Rules for wellformedness of syntactic substitutions as well as their interpretation.}
\label{fig:substitutions}
\end{center}
\end{figure}

Figure~\ref{fig:substitutions} lists the rules for wellformedness of syntactic substitutions as well as the definition of the 
partial function interpreting (pre-)substitutions. 
We define the substitution of types and terms along substitutions $\sigma$ in the standard way using the clauses 
of Figure~\ref{fig:clott:typing:annotated} in the cases involving $\tickc$. 

\begin{lem}[Substitution] \label{lem:substitution}
 Let $\sigma : \Gamma \to \Gamma'$ be a wellformed substitution then $\den\sigma$ is a 
 welldefined morphism
 from $\den\Gamma$ to $\den{\Gamma'}$. If $\istype{\Gamma'} A$ then also
 $\istype{\Gamma}{A\sigma}$ and $\den{\istype\Gamma {A\sigma}} = \den{\istype{\Gamma'}A}[\den{\sigma}]$. If
 $\hastype{\Gamma'} tA$ then $\hastype{\Gamma}{t\sigma}{A\sigma}$ and $\den{\hasnotype{\Gamma}{t\sigma}{}} 
 = \den{\hasnotype{\Gamma'}t{}}[\den\sigma]$. 
\end{lem}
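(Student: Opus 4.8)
\begin{proofsketch}
The plan is to prove the substitution lemma as one component of the simultaneous induction announced above, carried out together with welldefinedness of the partial interpretation function and soundness of judgemental equality. Concretely, the substitution statement is proved by induction on the derivation of $\sigma$ and on the derivations of the type $A$ and the term $t$. The cases for the standard constructors --- $\Pi$-, $\Sigma$- and extensional identity types, universal quantification over clocks (modelled as a $\Pi$-type over $\clk$), and the substitution formers $[\,]$, $\subex\sigma xt$ and $\subex\sigma\kappa{\kappa'}$ --- are handled exactly as in Hofmann~\cite{Hofmann1997} using the CwF structure of $\grtotal$, and I would not spell them out; all the new content sits in the tick-related clauses and is driven by the naturality properties established in Theorem~\ref{thm:dradjoint:gr:total}, Proposition~\ref{prop:p:inverse} and Theorem~\ref{thm:tlater:code}.

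First I would check that $\den\sigma$ is a welldefined morphism $\den\Gamma \to \den{\Gamma'}$ for the two tick-related substitution formers. For tick weakening $\den{\subex\sigma\tickA\tickB} = \tearlier^{\den\kappa}(\den\sigma)\circ\p_{\Gamma''}$, the induction hypothesis $\den{\kappa\sigma} = \den\kappa[\den\sigma]$ identifies $\den{\Gamma,\tickB:\kappa\sigma}$ with $\tearlier^{\den\kappa[\den\sigma]}\den\Gamma$, which by item~(2) of Theorem~\ref{thm:dradjoint:gr:total} is precisely the domain of $\tearlier^{\den\kappa}(\den\sigma) : \tearlier^{\den\kappa[\den\sigma]}\den\Gamma \to \tearlier^{\den\kappa}\den{\Gamma'} = \den{\Gamma',\tickA:\kappa}$; precomposing with the context projection $\p_{\Gamma''}$ yields the required source $\den{\Gamma,\tickB:\kappa\sigma,\Gamma''}$. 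For the tick-constant substitution $\den{\subex\sigma{(\tickA:\kappa)}{(\tickc:\kappa')}} = d\circ\cpair{\den\sigma}{\den{\kappa'}}$, the pair is a morphism $\den\Gamma \to \compr{\den{\Gamma'}}{\clkfam}$, and $d$, being the inverse of $\p_{\tearlier^\q}$ supplied by Proposition~\ref{prop:p:inverse}, maps it into $\tearlier^\q(\compr{\den{\Gamma'}}{\clkfam}) = \den{\Gamma',\kappa:\clocktype,\tickA:\kappa}$; functoriality of $\tearlier^\clkel(-)$ and naturality of $\p_{\tearlier^\clkel}$ (items~(2) and~(3) of the same theorem) then give the interpretation its coherence in these cases.

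For type and term judgements the new cases are all the same computation. For $(\latbind\tickA\kappa A)\sigma$, syntactic substitution pushes $\sigma$ under the tick binder as the extended substitution $\subex\sigma\tickA\tickA$ with empty middle context, whose interpretation is $\tearlier^{\den\kappa}(\den\sigma)$; hence $\den{(\latbind\tickA\kappa A)\sigma} = \tlater^{\den{\kappa\sigma}}\den{A[\subex\sigma\tickA\tickA]} = \tlater^{\den\kappa[\den\sigma]}(\den A[\tearlier^{\den\kappa}(\den\sigma)]) = (\tlater^{\den\kappa}\den A)[\den\sigma]$, using the induction hypothesis for $A$ and item~(\ref{item:later:nat}) of Theorem~\ref{thm:dradjoint:gr:total}. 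The case $\den{\TickLam\tickA\kappa A t} = \overline{\den t}$ is the identical argument using item~(\ref{item:overline:nat}), $\overline{t[\tearlier^\clkel\sigma]} = \overline t[\sigma]$. For application to $\tickc$, $\den{\TickcApp\tickA\kappa A t{\kappa'}} = \overline{\den t}[d\circ\cpair{\id}{\den{\kappa'}}]$, one needs that $d$ commutes with substitution; since $d$ is the inverse of $\p_{\tearlier^\q}$ and the latter is natural by item~(3) of Theorem~\ref{thm:dradjoint:gr:total}, this holds, and what remains is a routine unfolding of the clauses defining $(-)\tickcsub\tickA\kappa{\kappa'}$ in Figure~\ref{fig:clott:typing:annotated} --- on types that substitution simply distributes over the type structure, replacing $\kappa$ by $\kappa'$ inside universes, so the type equation in the conclusion of the $\tickc$-rule falls out against the $\tlater$-reindexing equations. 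The universe codes $\latbindcodeAnn\Delta\tickA\kappa A$ and $\univin\Delta{\Delta'}(-)$ are handled by item~(\ref{item:tlater:code:2}) of Theorem~\ref{thm:tlater:code} and by Lemma~\ref{lem:univ:reindexing} respectively.

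The main obstacle is the interplay between the context projections $\p_{\Gamma'},\p_{\Gamma''}$ used to interpret tick weakening and the functorial action $\tearlier^\clkel(-)$. Because the tick-weakening substitution $\subex\sigma\tickA\tickB$ is itself built from such a projection, proving $\den{t\sigma} = \den t[\den\sigma]$ in that case, and in the case $\den{\TickApp\tickA\kappa A t{\tickA'}} = \overline{\den t}[\p_{\Gamma'}]$, comes down to re-associating iterated reindexings and sliding $\tearlier^\clkel(\den\sigma)$ past $\p_{\Gamma''}$; this calls for a small suite of weakening lemmas --- analogues of the standard weakening lemmas of~\cite{Hofmann1997} that additionally track how the projections $\p_{\tearlier^\clkel}$ and the bijection $\overline{(-)}$ behave under tick weakening --- which I would establish first, inside the same induction. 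A related bit of bookkeeping is needed for the $\tickc$-clause: the annotated term $\TickcApp\tickA\kappa A t{\kappa'}$ carries an \emph{explicit} substitution binding $\kappa$ in $t$, so substituting into it must be shown to agree semantically with first reindexing $\overline{\den t}$ and then precomposing with $d\circ\cpair{\den\sigma}{\den{\kappa'}}$; this is exactly where the explicit-substitution presentation of the $\tickc$-rule pays off, since it keeps the argument equational rather than depending on a substitution already carried out in the syntax. The remaining cases are routine diagram chases in the CwF $\grtotal$.
\end{proofsketch}
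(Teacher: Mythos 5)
Your plan follows the paper's proof at the structural level — simultaneous induction with soundness and with auxiliary lemmas, the standard cases handled as in Hofmann, the tick cases driven by the naturality clauses of Theorem~\ref{thm:dradjoint:gr:total}, the $\tickc$ cases by naturality of the isomorphism $d$ from Proposition~\ref{prop:p:inverse}, and the universe codes by Theorem~\ref{thm:tlater:code}.(\ref{item:tlater:code:2}) and Lemma~\ref{lem:univ:reindexing}. The welldefinedness check for the two tick-related substitution formers is also in the right place. So the high-level decomposition matches.

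Where you are too vague, however, is precisely where the paper does its real work. You defer the variable, clock, and tick-application cases to ``routine diagram chases'' and a ``small suite of weakening lemmas'' in the spirit of Hofmann, but the lemma that is actually needed — Lemma~\ref{lem:subst:comp:proj} — is not a standard weakening lemma. It states that for a substitution $\sigma : \Gamma \to \Gamma_0', \Gamma_1'$ there is a \emph{prefix} $\tau$ of $\sigma$ with $\p_{\Gamma_1'} \circ \den\sigma = \den\tau$, where ``prefix'' is a new syntactic notion, defined so that the prefix of a tick-constant substitution $\subex\sigma{(\tickA:\kappa)}{(\tickc:\kappa')}$ is $\subex\sigma\kappa{\kappa'}$ and the prefix of $\subex\sigma\tickA\tickB$ absorbs an entire weakening $\wksubst{\Gamma}{\tickB:\kappa\sigma,\Gamma''}$. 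This lemma is itself proved simultaneously with Lemma~\ref{lem:substitution} by induction on $\Gamma_1'$, and one cannot simply ``slide $\tearlier^\clkel(\den\sigma)$ past $\p_{\Gamma''}$'' as you suggest; the composite $\p_{\Gamma_1'}\circ\den\sigma$ has to be identified as the denotation of a syntactic prefix before anything further can be done.

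More importantly, your sketch misses a genuine branching phenomenon in the $\TickApp$ case. The tick variable $\tickA'$ in $\TickApp\tickA\kappa A t{\tickA'}$ may get mapped either to another tick variable \emph{or} to the tick constant $\tickc$ under $\sigma$, and in the latter case the syntactic substitution clauses of Figure~\ref{fig:clott:typing:annotated} turn the term into a $\TickcApp$. The paper's proof explicitly splits on whether the prefix $\tau$ is of the form $\subex{\tau'}\tickB{\tickB'}$ or $\subex{\tau'}{(\tickB:\kappa)}{(\tickc:\kappa')}$; the second branch requires $\Gamma_0'$ to end in $\kappa:\clocktype$, uses Lemma~\ref{lem:proj:comp:proj} and the fact that $d$ is the inverse of $\p_{\tearlier^\q}$, and lands on a $\TickcApp$ term, \emph{not} on another $\TickApp$. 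This is not a ``routine diagram chase''; it is exactly the place where the explicit-substitution presentation of the $\tickc$-rule, Proposition~\ref{prop:p:inverse}, and the prefix lemma all have to fit together. You should state the prefix lemma, prove it simultaneously with the main statement, and carry out the case split on the form of the prefix in both the variable case and the $\TickApp$ case; without that the argument does not close.
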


Before proving Lemma~\ref{lem:substitution} we need to establish a few facts about projections. In particular,
we need Lemma~\ref{lem:subst:comp:proj} below for the case of variable introduction. Lemma~\ref{lem:subst:comp:proj}
is trivial in the setting of standard type theory, but requires a little more work in a setting with
ticks and tick weakening as in \clott. 

\begin{lem} \label{lem:syntactic:proj}
 If $\Gamma, \Gamma'$ is a wellformed context, the obvious context projection 
 $\wksubst{\Gamma}{\Gamma'} : \Gamma, \Gamma' \to \Gamma$ is also wellformed, and  
 $\den{\wksubst{\Gamma}{\Gamma'}} = \p_{\Gamma'}$.
\end{lem}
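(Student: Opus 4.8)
The plan is to prove the statement by induction on the length of $\Gamma$, with $\Gamma'$ left arbitrary, fixing along the way which substitution built from the constructors of Figure~\ref{fig:substitutions} the ``obvious'' projection $\wksubst{\Gamma}{\Gamma'}$ actually is. For $\Gamma = \cdot$ we take $\wksubst{\cdot}{\Gamma'} = []$, whose interpretation is the unique map into the terminal object $1 = \den{\cdot}$, and this is exactly $\p_{\Gamma'}$. For $\Gamma = \Gamma_0, x : A$ we take $\wksubst{\Gamma_0, x : A}{\Gamma'} = \subex{\wksubst{\Gamma_0}{x : A, \Gamma'}}{x}{x}$; this is wellformed because, by the syntactic weakening lemma for substitution, $A[\wksubst{\Gamma_0}{x : A, \Gamma'}] = A$, so the variable rule supplies the premise $\hastype{\Gamma_0, x : A, \Gamma'}{x}{A[\wksubst{\Gamma_0}{x : A, \Gamma'}]}$. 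By definition its interpretation is $\cpair{\den{\wksubst{\Gamma_0}{x : A, \Gamma'}}}{\den{\hasnotype{\Gamma_0, x : A, \Gamma'}{x}{}}}$, which, using the induction hypothesis for $\Gamma_0$ with trailing context $x : A, \Gamma'$ together with the interpretation clause for variables (Figure~\ref{fig:interp}), equals $\cpair{\p_{x : A, \Gamma'}}{\q_{\den A}[\p_{\Gamma'}]}$. Invoking the evident decomposition $\p_{x : A, \Gamma'} = \p_{\den A} \circ \p_{\Gamma'}$ of context projections and the CwF identities $\cpair{\gamma}{t} \circ \delta = \cpair{\gamma \circ \delta}{t[\delta]}$ and $\cpair{\p_{\den A}}{\q_{\den A}} = \id$, this collapses to $\cpair{\p_{\den A}}{\q_{\den A}} \circ \p_{\Gamma'} = \p_{\Gamma'}$, as wanted. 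The clock case $\Gamma = \Gamma_0, \kappa : \clocktype$ is treated identically, with the constant family $\clk$ in place of $\den{A}$ and the clock-variable rule in place of the variable rule.

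The remaining case $\Gamma = \Gamma_0, \tickA : \kappa$ is the only nonroutine one. Here the obvious projection is the instance $\subex{\mathsf{id}_{\Gamma_0}}{\tickA}{\tickA}$ of the tick-weakening constructor in which the trailing context of that rule is taken to be $\Gamma'$; its domain is $\Gamma_0, \tickA : \kappa, \Gamma'$ (wellformed by hypothesis) and its codomain is $\Gamma_0, \tickA : \kappa$, and it is wellformed since $\kappa[\mathsf{id}_{\Gamma_0}] = \kappa$ and the identity substitution $\mathsf{id}_{\Gamma_0} = \wksubst{\Gamma_0}{\cdot}$ is available from the induction hypothesis, as $|\Gamma_0| < |\Gamma|$. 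By the interpretation clause for this constructor we have $\den{\wksubst{\Gamma_0, \tickA : \kappa}{\Gamma'}} = \tearlier^{\den{\hasnotype{\Gamma_0}\kappa{}}}(\den{\mathsf{id}_{\Gamma_0}}) \circ \p_{\Gamma'}$, where $\p_{\Gamma'} : \den{\Gamma_0, \tickA : \kappa, \Gamma'} \to \den{\Gamma_0, \tickA : \kappa}$. The instance $\Gamma' = \cdot$ of the statement, available by induction, gives $\den{\mathsf{id}_{\Gamma_0}} = \den{\wksubst{\Gamma_0}{\cdot}} = \p_{\cdot} = \id$, and then item (2) of Theorem~\ref{thm:dradjoint:gr:total} (functoriality of the internally indexed left adjoint) yields $\tearlier^{\den{\hasnotype{\Gamma_0}\kappa{}}}(\id) = \id_{\den{\Gamma_0, \tickA : \kappa}}$. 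Hence $\den{\wksubst{\Gamma_0, \tickA : \kappa}{\Gamma'}} = \p_{\Gamma'}$, completing the induction.

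The main obstacle is precisely this tick case: one must recognise the obvious projection past a tick as an instance of the tick-weakening substitution $\subex{-}{\tickA}{\tickB}$ carrying the identity substitution and an arbitrary trailing context, and then use functoriality of $\tearlier^{(-)}$ to collapse $\tearlier^{\den\kappa}(\id)$ to $\id$. A secondary bookkeeping point, which dictates that the induction be run on the length of $\Gamma$ rather than of $\Gamma'$, is that the tick case relies on the wellformedness and interpretation of $\mathsf{id}_{\Gamma_0}$; this is exactly the instance $\Gamma' = \cdot$ of the statement, so it is already available at that stage. Everything else reduces to standard CwF manipulations and the syntactic weakening lemmas for substitution.
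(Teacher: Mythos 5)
Your proof is correct and follows essentially the same route as the paper: induction on the length of $\Gamma$ with $\Gamma'$ arbitrary, applying the induction hypothesis to $\Gamma_0$ with the longer trailing context in the variable and clock cases, and reducing the tick case to the empty-trailing-context instance of the statement combined with functoriality of $\tearlier^{(-)}$ to collapse $\tearlier^{\den\kappa}(\id)\circ\p_{\Gamma'}$ to $\p_{\Gamma'}$. You are somewhat more explicit than the paper about the inputs being invoked (the syntactic weakening fact $A[\wksubst{\Gamma_0}{x:A,\Gamma'}]=A$ and item (2) of Theorem~\ref{thm:dradjoint:gr:total}), but these are exactly the facts the paper's proof implicitly relies on, so the argument is the same in substance.
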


\begin{proof}
 By induction on the height of $\Gamma$. If $\Gamma$ is of length $0$ then the statement is trivial.
 If $\Gamma = \Gamma_0, x : A$, then by induction the projection 
 $\Gamma_0, x : A, \Gamma' \to\Gamma_0$ is wellformed and interpreted as 
 $\p_{x: A, \Gamma'} = \p \circ \p_{\Gamma'}$. Now, the projection 
 $\Gamma_0, x : A, \Gamma' \to\Gamma_0, x : A$ is interpreted as 
 $\cpair{\p \circ \p_{\Gamma'}}{\q[\p_{\Gamma'}]} = \cpair{\p}{\q} \circ \p_{\Gamma'} = \p_{\Gamma'}$.
 The case of extension of $\Gamma$ with $\kappa : \clocktype$ is similar. 
 In the case of $\Gamma = \Gamma_0, \tickA : \kappa$, by induction, the syntactic projection 
 $\Gamma_0 \to \Gamma_0$ is well defined and interpreted as the identity. 
 Now, the projection $\Gamma_0, \tickA : \kappa, \Gamma' \to \Gamma_0, \tickA : \kappa$
 is by definition $\tearlier^\den{\kappa}(\id) \circ \p_{\Gamma'} = \p_{\Gamma'}$.
\end{proof}

If $\sigma : \Gamma \to \Gamma'$ and $\Gamma, \Gamma_0$ are well-formed we define the weakening
of $\sigma$ to be the substitution $\sigma\circ \wksubst{\Gamma}{\Gamma_0} : \Gamma, \Gamma_0 \to \Gamma'$
as follows
\begin{align*}
 []\circ \wksubst{\Gamma}{\Gamma_0} & = [] \\
 (\subex\sigma xt) \circ \wksubst{\Gamma}{\Gamma_0} 
 & = \subex{(\sigma \circ \wksubst{\Gamma}{\Gamma_0})}x{t\,\wksubst{\Gamma}{\Gamma_0}} \\
 (\subex\sigma \kappa{\kappa'}) \circ \wksubst{\Gamma}{\Gamma_0} 
 & = \subex{(\sigma \circ \wksubst{\Gamma}{\Gamma_0})}\kappa{\kappa'\wksubst{\Gamma}{\Gamma_0}} \\
 (\subex\sigma \tickA \tickB) \circ \wksubst{\Gamma}{\Gamma_0} & = \subex\sigma \tickA \tickB \\
 \subex\sigma{(\tickA : \kappa)}{(\tickc : \kappa')} \circ \wksubst{\Gamma}{\Gamma_0} 
 & = \subex{(\sigma\circ \wksubst{\Gamma}{\Gamma_0})}{(\tickA : \kappa)}{(\tickc : \kappa'\wksubst{\Gamma}{\Gamma_0})}
\end{align*}

\begin{lem} \label{lem:proj:comp:subst}
 If $\sigma : \Gamma \to \Gamma'$ and $\Gamma, \Gamma_0$ are wellformed, so is 
 $\sigma\circ \wksubst{\Gamma}{\Gamma_0}$ and 
 $\den{\sigma\circ \wksubst{\Gamma}{\Gamma_0}} = \den\sigma \circ \p_{\Gamma_0}$
\end{lem}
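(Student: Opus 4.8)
The plan is to argue by induction on the derivation of $\sigma : \Gamma \to \Gamma'$, that is, on which of the five formation rules of Figure~\ref{fig:substitutions} was applied last, proving well-formedness of $\sigma \circ \wksubst{\Gamma}{\Gamma_0}$ and the denotation equation simultaneously. Before starting the induction I would record two elementary facts. The first is that semantic context projections compose: for well-formed $\Gamma_1, \Gamma_2, \Gamma_3$ one has $\p_{\Gamma_2, \Gamma_3} = \p_{\Gamma_2} \circ \p_{\Gamma_3}$, a routine induction on $\Gamma_3$ (for a term or clock extension using the comprehension equation $\cpair{\gamma}{s} \circ \delta = \cpair{\gamma \circ \delta}{s[\delta]}$, exactly as in the proof of Lemma~\ref{lem:syntactic:proj}; for a tick extension using the naturality of $\p_{\tearlier^{(-)}}$ from Theorem~\ref{thm:dradjoint:gr:total}). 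The second is the weakening property of terms and types along $\wksubst{\Gamma}{\Gamma_0}$, namely $\den{t\,\wksubst{\Gamma}{\Gamma_0}} = \den{t}[\p_{\Gamma_0}]$ and the analogous statements for types and for clock terms $\kappa'$; by Lemma~\ref{lem:syntactic:proj} ($\den{\wksubst{\Gamma}{\Gamma_0}} = \p_{\Gamma_0}$) this is a structural fact about the substitution $\wksubst{\Gamma}{\Gamma_0}$, provable by a direct induction and available at this point of the simultaneous induction.

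The recursive cases are then routine. For $\sigma = []$ the weakening leaves $\sigma$ unchanged and $\den{[]} = \uniquemap$, so the equation holds by uniqueness of maps into the terminal object. For $\sigma = \subex{\sigma_0}{x}{t}$ the definition of weakening gives $\sigma \circ \wksubst{\Gamma}{\Gamma_0} = \subex{(\sigma_0 \circ \wksubst{\Gamma}{\Gamma_0})}{x}{t\,\wksubst{\Gamma}{\Gamma_0}}$, whose denotation is $\cpair{\den{\sigma_0 \circ \wksubst{\Gamma}{\Gamma_0}}}{\den{t\,\wksubst{\Gamma}{\Gamma_0}}}$; by the induction hypothesis for $\sigma_0$ and the weakening property for $t$ this equals $\cpair{\den{\sigma_0} \circ \p_{\Gamma_0}}{\den{t}[\p_{\Gamma_0}]} = \cpair{\den{\sigma_0}}{\den{t}} \circ \p_{\Gamma_0} = \den{\sigma} \circ \p_{\Gamma_0}$ by the comprehension equation. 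The cases $\sigma = \subex{\sigma_0}{\kappa}{\kappa'}$ and $\sigma = \subex{\sigma_0}{(\tickA:\kappa)}{(\tickc:\kappa')}$ have exactly the same shape, with the family $\clk$ in place of $A$ in the first and an extra post-composition with the fixed isomorphism $d$ of Proposition~\ref{prop:p:inverse} in the second, which simply commutes out of the argument of $d$. Well-formedness in all these cases follows from the induction hypothesis together with the standard weakening property of the typing judgements.

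The one case that needs a different argument is $\sigma = \subex{\sigma_0}{\tickA}{\tickB}$, where $\sigma_0 : \Gamma_1 \to \Gamma'$ and $\sigma : \Gamma_1, \tickB : \kappa\sigma_0, \Gamma'' \to \Gamma', \tickA : \kappa$: here the weakening is \emph{not} defined by recursion, but is simply $\subex{\sigma_0}{\tickA}{\tickB}$ itself, now read as a substitution with the longer tail $\Gamma'', \Gamma_0$, which is well-formed because the formation rule for tick-variable substitutions permits an arbitrary tail. Its denotation is $\tearlier^{\den{\kappa}}(\den{\sigma_0}) \circ \p_{\Gamma'', \Gamma_0}$, and by the projections-compose fact this equals $\tearlier^{\den{\kappa}}(\den{\sigma_0}) \circ \p_{\Gamma''} \circ \p_{\Gamma_0} = \den{\subex{\sigma_0}{\tickA}{\tickB}} \circ \p_{\Gamma_0}$, as required. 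I expect this tick-variable case to be the only real obstacle, precisely because the equation is obtained not from an induction hypothesis but from the compositionality of the semantic context projection, so one must correctly match the projection $\p_{\Gamma'', \Gamma_0}$ that appears after weakening with $\p_{\Gamma''} \circ \p_{\Gamma_0}$; the remaining subtlety is organisational, namely making sure that the term and type weakening facts used in the middle cases have been established before this lemma, which they have since, like Lemma~\ref{lem:syntactic:proj}, they only concern the structural substitution $\wksubst{\Gamma}{\Gamma_0}$.
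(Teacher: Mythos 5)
Your proof is correct and follows the same route the paper intends, namely induction on the formation rule for $\sigma$, where the one genuinely non-mechanical case is $\subex\sigma\tickA\tickB$ (the paper's proof is simply the one line ``An easy induction on $\sigma$ using Lemma~\ref{lem:substitution}''). The two ingredients you isolate — compositionality of the semantic projections $\p_{\Gamma'', \Gamma_0} = \p_{\Gamma''} \circ \p_{\Gamma_0}$ and the weakening equation $\den{t\,\wksubst{\Gamma}{\Gamma_0}} = \den t[\p_{\Gamma_0}]$ — are precisely what the paper's citation of Lemma~\ref{lem:substitution} (together with Lemma~\ref{lem:syntactic:proj}) is packing in; I would phrase the second as an instance of the shared simultaneous induction with Lemma~\ref{lem:substitution} rather than as an independently ``direct'' induction, since even weakening of terms needs the full mutual induction (conversion rule), but this is a presentational point and the argument goes through as you wrote it.
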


\begin{proof}
 An easy induction on $\sigma$ using Lemma~\ref{lem:substitution}. 
\end{proof}

The next lemma refers to the notion of prefix of a substitution, which is the reflexive-transitive closure of the following rules
\begin{align*}
  \sigma & \leq \subex\sigma xt & 
  \sigma & \leq \subex\sigma \kappa{\kappa'} \\
  \sigma\circ\wksubst{\Gamma}{\tickB : \kappa\sigma, \Gamma''} & \leq \subex\sigma{\tickA}{\tickB}  & 
  \subex\sigma \kappa{\kappa'} & \leq \subex\sigma{(\tickA : \kappa)}{(\tickc : \kappa')}
\end{align*}
In the case of $\subex\sigma{\tickA}{\tickB}$, the weakening $\wksubst{\Gamma}{\tickB : \kappa\sigma, \Gamma''}$
refers to the metavariables as in the formation rule for $\subex\sigma{\tickA}{\tickB}$. 

\begin{lem} \label{lem:subst:comp:proj}
 If $\sigma : \Gamma \to \Gamma_0', \Gamma_1'$ is a syntactic substitution, then there is a 
 prefix $\tau$ of $\sigma$ such that $\tau : \Gamma \to \Gamma_0'$ and 
 $\p_{\Gamma_1'} \circ \den\sigma = \den\tau$. 
\end{lem}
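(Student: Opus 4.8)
The plan is to argue by induction on the length of the suffix $\Gamma_1'$ — equivalently on the structure of the well-formed substitution $\sigma$ — with the claim universally quantified over all ways of writing the codomain of $\sigma$ as a concatenation $\Gamma_0', \Gamma_1'$. Two bookkeeping facts will be used throughout: that context projections compose, $\p_{\Gamma_1, \Gamma_2} = \p_{\Gamma_1} \circ \p_{\Gamma_2}$ (immediate from their inductive definition), and that the single-binding projection at a tick assumption $\tickA : \kappa$ is the map $\p_{\tearlier^\clkel}$ of Theorem~\ref{thm:dradjoint:gr:total}(3), with $\clkel$ interpreting $\kappa$. If $\Gamma_1'$ is empty the claim is trivial, with $\tau \defeq \sigma$ and $\p_{\Gamma_1'} = \id$. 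Otherwise I case on the shape of $\sigma$, which is fixed by the last binding (or last two bindings) of its codomain. The cases $\sigma = \subex{\sigma'}{x}{t}$ and $\sigma = \subex{\sigma'}{\kappa}{\kappa'}$ are routine: writing $\Gamma_1' = \Gamma_1'', x : B$ in the first, the induction hypothesis applied to the prefix $\sigma' : \Gamma \to \Gamma_0', \Gamma_1''$ gives $\tau \leq \sigma'$ with $\tau : \Gamma \to \Gamma_0'$ and $\p_{\Gamma_1''} \circ \den{\sigma'} = \den\tau$, and since $\p_{\Gamma_1'} = \p_{\Gamma_1''} \circ \p$ and $\den\sigma = \cpair{\den{\sigma'}}{\den t}$, the CwF equation $\p \circ \cpair{\den{\sigma'}}{\den t} = \den{\sigma'}$ gives $\p_{\Gamma_1'} \circ \den\sigma = \den\tau$ and $\tau \leq \sigma$ by transitivity; the clock case is identical.

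The hard part is the tick case $\sigma = \subex{\sigma'}{\tickA}{\tickB}$. Here $\Gamma_1' = \Gamma_1'', \tickA : \kappa$, the codomain of $\sigma'$ is $\Gamma_0', \Gamma_1''$, and, by the formation rule, $\Gamma = \Delta, \tickB : \kappa\sigma', \Delta''$ with $\sigma' : \Delta \to \Gamma_0', \Gamma_1''$. The move that makes the case work cleanly is to apply the induction hypothesis not to $\sigma'$ but to its weakening $\rho \defeq \sigma' \circ \wksubst{\Delta}{\tickB : \kappa\sigma', \Delta''} : \Gamma \to \Gamma_0', \Gamma_1''$: its suffix is strictly shorter, and it is already a prefix of $\sigma$ by the third clause defining the prefix relation, so no auxiliary lemma that weakening preserves prefixes is needed. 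By Lemma~\ref{lem:proj:comp:subst}, $\den\rho = \den{\sigma'} \circ \p_{\tickB : \kappa\sigma', \Delta''}$, and the induction hypothesis yields $\tau \leq \rho$ with $\tau : \Gamma \to \Gamma_0'$ and $\p_{\Gamma_1''} \circ \den\rho = \den\tau$, whence $\tau \leq \rho \leq \sigma$. What remains is to match $\p_{\Gamma_1'} \circ \den\sigma$ against $\den\tau$: writing $\clkel$ for the interpretation of $\kappa$ over $\den{\Gamma_0', \Gamma_1''}$, the single-binding projection at $\tickA : \kappa$ is $\p_{\tearlier^\clkel}$ and $\den\sigma = \tearlier^\clkel(\den{\sigma'}) \circ \p_{\Delta''}$, so composing projections and using the naturality in Theorem~\ref{thm:dradjoint:gr:total}(3),
\[
 \p_{\Gamma_1'} \circ \den\sigma = \p_{\Gamma_1''} \circ \p_{\tearlier^\clkel} \circ \tearlier^\clkel(\den{\sigma'}) \circ \p_{\Delta''} = \p_{\Gamma_1''} \circ \den{\sigma'} \circ \p_{\tearlier^{\clkel[\den{\sigma'}]}} \circ \p_{\Delta''}.
\]
Since $\clkel[\den{\sigma'}]$ interprets $\kappa\sigma'$ (by the substitution lemma for clock terms, available at this smaller instance), $\p_{\tearlier^{\clkel[\den{\sigma'}]}}$ is the single-binding projection at $\tickB : \kappa\sigma'$, and composing with $\p_{\Delta''}$ gives $\p_{\tickB : \kappa\sigma', \Delta''}$; hence $\p_{\Gamma_1'} \circ \den\sigma = \p_{\Gamma_1''} \circ \den{\sigma'} \circ \p_{\tickB : \kappa\sigma', \Delta''} = \p_{\Gamma_1''} \circ \den\rho = \den\tau$, as required.

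The remaining case $\sigma = \subex{\sigma'}{(\tickA : \kappa)}{(\tickc : \kappa')}$, with $\sigma' : \Gamma \to \Gamma''$ and codomain $\Gamma'', \kappa : \clocktype, \tickA : \kappa$, turns on Proposition~\ref{prop:p:inverse}: $\den\sigma = d \circ \cpair{\den{\sigma'}}{\den{\kappa'}}$ with $\p_{\tearlier^\q} \circ d = \id$. If $\Gamma_1'$ is exactly $\tickA : \kappa$, the single-binding projection at $\tickA : \kappa$ is $\p_{\tearlier^\q}$, so $\p_{\Gamma_1'} \circ \den\sigma = \cpair{\den{\sigma'}}{\den{\kappa'}} = \den{\subex{\sigma'}{\kappa}{\kappa'}}$, and $\subex{\sigma'}{\kappa}{\kappa'} \leq \sigma$ by the fourth clause, so $\tau \defeq \subex{\sigma'}{\kappa}{\kappa'}$ works. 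Otherwise $\Gamma_1' = \Gamma_1''', \kappa : \clocktype, \tickA : \kappa$, and $\p_{\Gamma_1'} \circ \den\sigma$ collapses, via $\p_{\tearlier^\q} \circ d = \id$ and $\p \circ \cpair{\den{\sigma'}}{\den{\kappa'}} = \den{\sigma'}$, to $\p_{\Gamma_1'''} \circ \den{\sigma'}$; the induction hypothesis applied to $\sigma' : \Gamma \to \Gamma_0', \Gamma_1'''$, together with $\sigma' \leq \subex{\sigma'}{\kappa}{\kappa'} \leq \sigma$, closes the case. I expect the tick case to be the only genuine obstacle: feeding the induction hypothesis the weakened substitution $\rho$ rather than $\sigma'$, and then getting the clock indices in the naturality equation to line up with the syntactic clock $\kappa\sigma'$ carried by $\tickB$, both need care, while everything else is bookkeeping.
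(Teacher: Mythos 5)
Your proof is correct; it applies the same tools as the paper but runs the induction in the opposite direction. The paper peels off the \emph{innermost} binding of $\Gamma_1'$ (writing $\Gamma_1' = x : A, \Gamma_1''$, applying the induction hypothesis with the split point between $\Gamma_0'$ and $\Gamma_1'$ moved one step outwards, and then inspecting the shape of the resulting prefix $\tau : \Gamma \to \Gamma_0', x : A$). You instead peel off the \emph{outermost} binding — the last component of $\sigma$ — so the case analysis is on the final formation rule of $\sigma$ and the induction hypothesis is applied to a structurally smaller substitution (or, in the tick case, a weakening of one). In the tick case the paper arrives at a prefix $\tau'$ whose domain is too short and weakens at the very end via $\tau' \circ \wksubst{\Gamma_0}{\tickB : \kappa[\sigma],\Gamma_1}$; you weaken at the start, feeding the induction hypothesis $\rho = \sigma' \circ \wksubst{\Delta}{\tickB : \kappa\sigma', \Delta''}$ so that the domain is already $\Gamma$. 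Both arguments hinge on the naturality in Theorem~\ref{thm:dradjoint:gr:total}(3), on Lemma~\ref{lem:proj:comp:subst}, and on $\p_{\tearlier^\q}\circ d = \id$ from Proposition~\ref{prop:p:inverse}. What your version buys is directness — you never need the paper's ``by inspection, $\tau$ must be of the form $\dots$'' step — at the modest cost that the well-foundedness measure is genuinely $|\Gamma_1'|$ rather than structural size of $\sigma$ (since $\rho$ is not a subterm of $\sigma$), a point you correctly flag yourself.
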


\begin{proof}
 By induction on the length of $\Gamma_1'$ simultaneously with the proof of 
 Lemma~\ref{lem:substitution}. The case of $\Gamma_1'$ empty is trivial. If 
 $\Gamma_1' = x : A, \Gamma_1''$ then by induction, there is a prefix $\tau$
 of $\sigma$ such that $\den \tau = \p_{\Gamma_1''} \circ\den \sigma$. By inspection,
 $\tau$ must be of the form $\subex{\tau'}xt$, and 
 \[
  \p_{\Gamma_1'} \circ\den \sigma = \p \circ \p_{\Gamma_1''} \circ\den \sigma 
  = \p\circ \cpair{\den {\tau'}}{\den t} = \den{\tau'}
 \]
 In the case that $\Gamma_1' = \tickA : \kappa, \Gamma_1''$, again by induction, 
 there is a prefix $\tau$
 of $\sigma$ such that $\den \tau = \p_{\Gamma_1''} \circ\den \sigma$. Now, there are
 two cases for $\tau$. One is that $\tau = \subex{\tau'}{\tickA}{\tickB}$. In this case
 $\Gamma$ must be of the form $\Gamma_0, \tickB : \kappa[\sigma], \Gamma_1$ and 
\begin{align*}
  \p_{\Gamma_1'} \circ\den \sigma & = \p_{\tearlier^{\den\kappa}} \circ \p_{\Gamma_1''} \circ\den \sigma \\
  & = \p_{\tearlier^{\den\kappa}} \circ \tearlier^{\den\kappa}(\den{\tau'}) \circ \p_{\Gamma_1} \\
  & = \den{\tau'} \circ\p_{\tearlier^{\den\kappa}} \circ \p_{\Gamma_1} \\
  & = \den{\tau'\circ \wksubst{\Gamma_0}{\tickB : \kappa[\sigma], \Gamma_1}}
\end{align*}
The other case for $\tau$ is $\tau = \subex{\tau'}{(\tickA : \kappa)}{(\tickc : \kappa')}$. In this case 
$\subex{\tau'}\kappa{\kappa'}$ is a prefix of $\sigma$ and 
\begin{align*}
 \p_{\Gamma_1'} \circ\den \sigma & = \p_{\tearlier^{\den\kappa}} \circ \p_{\Gamma_1''} \circ\den \sigma  \\
  & = \p_{\tearlier^{\den\kappa}} \circ d \circ \cpair{\den{\tau'}}{\den{\kappa'}} \\
  & = \cpair{\den{\tau'}}{\den{\kappa'}} \\
  & = \den{\subex{\tau'}\kappa{\kappa'}}
\end{align*}
since $d$ is the inverse of $\p_{\tearlier^{\den\kappa}}$.
\end{proof}

\begin{proofof}{Lemma~\ref{lem:substitution}}
 As in~\cite{Hofmann1997} this is proved by induction on the sizes of the terms, contexts, and 
 types involved, simultaneously with Theorem~\ref{thm:soundness} and the three lemmas above. 
 Most cases are standard, including the cases for universal quantification over clocks which is modelled
 here as a $\Pi$-type. We just explain the non-standard cases. 

We start with the case of variable introduction. Suppose $\Gamma' = \Gamma'_0, x : A, \Gamma_1'$, and 
$t = x$. By Lemma~\ref{lem:subst:comp:proj}, there must be a prefix 
$\subex\tau xu : \Gamma \to \Gamma'_0, x : A$ of $\sigma$ such that 
$\p_{\Gamma_1'}\circ \den\sigma = \den{\subex\tau xu}$. Now,
\begin{align*}
 \den{\hastype{\Gamma'_0, x : A, \Gamma_1'}xA}[\den\sigma] & = 
 \q[\p_{\Gamma_1'}\circ \den\sigma] = \q[\den{\subex\tau xu}] = \den{\hasnotype{\Gamma}{u}{A\sigma}} \, .
\end{align*}
The case of clock introduction is similar.  

The case of $\latbind\tickA\kappa A$ is proved as follows:
\begin{align*}
 \den{\hasnotype\Gamma{\latbind\tickA\kappa A}{}}[\den\sigma]
 & =  (\tlater^{\den\kappa}\den{\hasnotype{\Gamma, \tickA : \kappa}{A}{}})[\den\sigma] \\
 & =  \tlater^{\den\kappa[\den\sigma]}(\den{\hasnotype{\Gamma, \tickA : \kappa}{A}{}}[\tearlier^{\den\kappa}\den\sigma]) \\
 & =  \tlater^{\den{\kappa\sigma}}(\den{\hasnotype{\Gamma, \tickA : \kappa}{A}{}}[\tearlier^{\den\kappa}\den\sigma]) \\
 & =  \tlater^{\den{\kappa\sigma}}(\den{\hasnotype{\Gamma, \tickA : \kappa}{A}{}}[\den{\subex\sigma\tickA\tickA}]) \\
 & =  \tlater^{\den{\kappa\sigma}}\den{\hasnotype{\Gamma', \tickA : \kappa\sigma}{A\subex\sigma\tickA\tickA}{}} \\
 & =  \den{\hasnotype{\Gamma'}{\latbind\tickA{\kappa\sigma} (A\subex\sigma\tickA\tickA}{})} \\
 & =  \den{\hasnotype{\Gamma'}{(\latbind\tickA\kappa A)\sigma}{})} 
\end{align*}
using Theorem~\ref{thm:dradjoint:gr:total}.\ref{item:later:nat} in the second equality. The proof of the case of 
$\TickLam{\tickA}{\kappa}A t$ is very similar, using Theorem~\ref{thm:dradjoint:gr:total}.\ref{item:overline:nat}.
 
 Case of $\hastype{\Gamma_0',\tickB: \kappa,\Gamma'_1}{\TickApp\tickA\kappa A t{\tickB}}{A\toksubst{\tickB}{\tickA}}$: In this case the 
 typing assumption is $\hastype{\Gamma}{t}{\latbind{\tickA}{\kappa}}A$ and 
\begin{align*}
 \den{\TickApp\tickA\kappa A t{\tickB}}[\den\sigma] & = \overline{\den t}[\p_{\Gamma_1'}\circ \den \sigma] = \overline{\den t}\den\tau
\end{align*}
for some prefix $\tau : \Gamma \to \Gamma_0', \tickA: \kappa$ of $\sigma$ by Lemma~\ref{lem:subst:comp:proj}. Now, there
are two possible cases for $\tau$. The first and simplest case is $\tau = \subex{\tau'}\tickB{\tickB'}$, where 
$\Gamma = \Gamma_0, \tickB' : \kappa\tau, \Gamma_1$ and $\tau' : \Gamma_0 \to \Gamma_0'$. In this case 
\begin{align*}
 \overline{\den t}\den\tau & = \overline{\den t}[\tearlier^{\den\kappa}(\den{\tau'}) \circ \p_{\Gamma_1}] \\
 & =  \overline{\den t[\den{\tau'}]}[\p_{\Gamma_1}] \\ %& (\text{Theorem~\ref{thm:dradjoint:gr:total}.\ref{item:overline:nat}})\\
 & =  \overline{\den {t\tau'}}[\p_{\Gamma_1}] \\ %& (\text{Induction hypothesis})\\
 & =  \den{\TickApp\tickA{\kappa\tau'}{A\tau'} {t\tau'}{\tickB'}} \\
 & =  \den{(\TickApp\tickA{\kappa}{A}t\tickB)\sigma} 
\end{align*}
using Theorem~\ref{thm:dradjoint:gr:total}.\ref{item:overline:nat} for the second equality and the induction hypothesis for the third. 

The second case for $\tau$ is $\tau = \subex{\tau'}{(\tickB : \kappa)}{(\tickc : \kappa')}$. This requires 
$\Gamma_0' = \Gamma'', \kappa : \clocktype$ for some $\Gamma''$. In this case 
\begin{align*}
 \overline{\den t}\den\tau & = \overline{\den t}[d \circ \cpair{\den{\tau'}}{\den{\kappa'}}] \\
 & = \overline{\den t}[d \circ \cpair{\den{\tau'}\circ \p}{\q} \circ \cpair{\id}{\den{\kappa'}}] \\
 & = \overline{\den t}[\tearlier^\q\cpair{\den{\tau'}\circ \p}{\q} \circ d \circ \cpair{\id}{\den{\kappa'}}] \\
 & = \overline{\den t[\cpair{\den{\tau'}\circ \p}{\q}]}[d \circ \cpair{\id}{\den{\kappa'}}] 
\end{align*}
 using naturality of $d$ (which follows from $\p_{\tearlier^\q}$ being natural) and 
 Theorem~\ref{thm:dradjoint:gr:total}.\ref{item:overline:nat}. Now, by Lemma~\ref{lem:proj:comp:subst}
 $\den{\tau'}\circ \p = \den{\tau' \circ \wksubst{\Gamma''}{\kappa : \clocktype}}$, and so (not writing the weakening
 for simplicity) $\cpair{\den{\tau'}\circ \p}{\q} = \den{\subex{\tau'}{\kappa}{\kappa}}$. Thus
\begin{align*}
 \overline{\den t}\den\tau
 & = \overline{\den{t\subex{\tau'}{\kappa}{\kappa}}}[d \circ \cpair{\id}{\den{\kappa'}}] \\
 & = \den{\TickcApp\tickA \kappa {(A\subex{\subex{\tau'}{\kappa}{\kappa}}{\tickA}{\tickA})}{(t\subex{\tau'}{\kappa}{\kappa})}{\kappa'}} \\
 & = \den{(\TickApp\tickA \kappa {A}{t}\tickB)\subex{\tau'}{(\tickB : \kappa)}{(\tickc : \kappa')}} \\
 & = \den{(\TickApp\tickA \kappa {A}{t}\tickB)\sigma}
\end{align*}
  
 Case of $\TickcApp\tickA \kappa At{\kappa'}$: In this case, the assumptions on the typing rule state that 
 $\hastype[]{\Gamma, \kappa : \clocktype}{t}{\latbind{\tickA}{\kappa} A}$ and  $\hastype{\Gamma}{\kappa'}\clocktype$.
 The case is proved as follows
\begin{align*}
 \den{\TickcApp\tickA \kappa At{\kappa'}}[\den\sigma] & = \overline{\den t}[d \circ \cpair{\id}{\den{\kappa'}} \circ \den\sigma] \\
 & = \overline{\den t}[d \circ \cpair{\den\sigma\circ\p}{\q} \circ \cpair{\id}{\den{\kappa'}[\den\sigma]}]  \\
 & = \overline{\den t}[\tearlier^{\den\kappa}\cpair{\den\sigma\circ\p}{\q} \circ d \circ \cpair{\id}{\den{\kappa'\sigma}}]  \\
 & = \overline{\den t[\den{\subex{\sigma}\kappa\kappa}]}[d \circ \cpair{\id}{\den{\kappa'\sigma}}]  \\
 & = \overline{\den {t(\subex{\sigma}\kappa\kappa)}}[d \circ \cpair{\id}{\den{\kappa'\sigma}}]  \\
 & = \den{\TickcApp\tickA{\kappa} {A(\subex{\subex\sigma\kappa\kappa}\tickA\tickA)}{(t(\subex\sigma\kappa\kappa))}{\kappa'\sigma}} \\
 & =  \den{(\TickcApp\tickA \kappa At{\kappa'})\sigma}
\end{align*}
The case of $\dfix_A^\kappa\,t$ follows from Lemma~\ref{lem:fix:points}. The case of universes 
follows from Lemma~\ref{lem:univ:reindexing} and  the case of 
$\latbindcodeAnn{\Delta}\tickA\kappa A$ follows from 
Theorem~\ref{thm:tlater:code}.\ref{item:tlater:code:2}:
\begin{align*}
 \den{\latbindcodeAnn{\Delta}\tickA\kappa A} [\den \sigma]
 & = (\code{\tlater^{\den{\kappa}}_{\den{\Delta}}}\circ \overline{\den{A}}) [\den\sigma] \\
 & = \code{\tlater^{\den{\kappa}}_{\den{\Delta}}}[\den\sigma]\circ \overline{\den{A}}[\den\sigma]  \\
 & = \code{\tlater^{\den{\kappa}[\den\sigma]}_{\den{\Delta}[\den\sigma]}}\circ \overline{\den{A}[\den{\subex\sigma\tickA\tickA}]}  \\
 & = \code{\tlater^{\den{\kappa\sigma}}_{\den{\Delta\sigma}}}\circ \overline{\den{A\subex\sigma\tickA\tickA}} \\
 & = \den{\latbindcodeAnn{(\Delta\sigma)}\tickA{\kappa\sigma} (A\sigma)} \\
 & = \den{(\latbindcodeAnn{\Delta}\tickA\kappa A)\sigma}
\end{align*}
The cases of other codes on the universe follow from similar 
theorems found in~\cite{GDTTmodel}. 
\end{proofof}

\subsection{Soundness}

The main theorem of the paper is the following.
\begin{thm}\label{thm:soundness}
 If a judgement has a derivation, then the interpretation of it is well defined. If $\hastype{\Gamma}tA$ has a derivation
 then $\den{\hasnotype\Gamma tA}$ is an element of the family $\den{\istype\Gamma A}$. 
 Moreover, the interpretation is sound with respect to the
 equalities of Figure~\ref{fig:clott:typing} and Figure~\ref{fig:universes} and models the 
 axioms (\ref{eq:cirr}), (\ref{eq:tirr}) and (\ref{eq:pfix}). 
\end{thm}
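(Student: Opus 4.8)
The plan is to follow Hofmann~\cite{Hofmann1997}: we prove, by induction on derivations and simultaneously with the substitution lemma (Lemma~\ref{lem:substitution}) and the projection lemmas (Lemmas~\ref{lem:syntactic:proj}, \ref{lem:proj:comp:subst}, \ref{lem:subst:comp:proj}), that the partial interpretation function of Figure~\ref{fig:interp} is total on every derivable judgement, that it has the claimed type, and that the equalities of Figures~\ref{fig:clott:typing} and~\ref{fig:universes} are validated (for a judgemental equality this means the two sides receive the same denotation). Because $\grtotal$ models $\Pi$-, $\Sigma$- and extensional identity types, the cases for those constructors are exactly as in~\cite{Hofmann1997}, and universal quantification over clocks is treated as a $\Pi$-type over $\clk$. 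The only place where the induction hypothesis must be strengthened beyond the standard setting is the requirement, needed for the clock irrelevance axiom, that every derivable $\istype\Gamma A$ is interpreted as a family \emph{invariant under clock introduction} (Definition~\ref{def:inv:clock:intro}): this is preserved by reindexing and by $\Pi$-, $\Sigma$- and identity types (Lemma~\ref{lem:closure:iuci}), by $\tlater^\clkel$ (shown inside the proof of Theorem~\ref{thm:tlater:code}), and $\Usem\clkset$ and $\Elsem\clkset$ are themselves invariant by Lemma~\ref{lem:closure:iuci}; contexts need not be invariant.

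For the non-standard typing cases, well-definedness and typing reduce to checking that the semantic operations used in Figure~\ref{fig:interp} have the domains and codomains claimed, which the earlier results supply: context extension with a tick and the modal type $\latbind\tickA\kappa A$ use items~\ref{item:later:nat} and~\ref{item:overline:nat} of Theorem~\ref{thm:dradjoint:gr:total}, which also give that $\tlater^\clkel$ and $\overline{(-)}$ commute with the projections $\p_{\Gamma'}$ appearing in tick application; $\dfix_A^\kappa$ uses Lemma~\ref{lem:fix:points}; the universe code $\latbindcodeAnn{\Delta}\tickA\kappa A$ and the universe inclusions use Theorem~\ref{thm:tlater:code} and Lemma~\ref{lem:univ:reindexing} (the remaining universe codes are handled by the machinery of~\cite{GDTTmodel}); and application to the tick constant is interpreted via the substitution $d$, whose domain and codomain are correct by Proposition~\ref{prop:p:inverse} and the isomorphisms of Section~\ref{sec:modelling:ticks}. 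The conversion rule is immediate from soundness of equality.

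For the judgemental equalities: the tick $\beta$- and $\eta$-laws follow from $\overline{\overline{(-)}} = (-)$, the clause $\den{\TickApp\tickA\kappa A t{\tickA'}} = \overline{\den t}[\p_{\Gamma'}]$, and the substitution lemma applied to the renaming substitution $\subex{\id}{\tickA}{\tickB}$; the $\tickc$-$\beta$ rule~(\ref{eq:annotated:beta}) is the same computation with that renaming replaced by the semantic substitution $d\circ\cpair{\id}{\den{\kappa'}}$ interpreting $\tickcsub\tickA\kappa{\kappa'}$; the $\beta$/$\eta$ laws for $\forall$ and the standard constructors are routine; and the universe equations of Figure~\ref{fig:universes} mentioning $\latbindcode\tickA\kappa{(-)}$ follow from items~\ref{item:tlater:code:1} and~\ref{item:tlater:code:3} of Theorem~\ref{thm:tlater:code}, the remaining ones from~\cite{GDTTmodel}.

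For the axioms: identity reflection holds since identity types are extensional; (\ref{eq:pfix}) is modelled because, by the equation of Lemma~\ref{lem:fix:points}, $\overline{\den{\dfix^\kappa t}} = \cwfev{\den t}{\den{\dfix^\kappa t}}[\p_{\tearlier^\clkel}]$, so over $\tearlier^\clkel\den\Gamma$ the denotations of $\tapp[\tickA]{(\dfix^\kappa t)}$ and of $t(\dfix^\kappa t)$ coincide and the identity family (having at most one element) is inhabited by reflexivity; tick irrelevance (\ref{eq:tirr}) reduces similarly to the equality of $\den{\tapp[\tickA]t}$ and $\den{\tapp[\tickA']t}$, which, after transporting along the equivalence of Section~\ref{sec:modelling:ticks}, follows from the naturality of $\overline{(-)}$ together with $\p_{\tearlier} = \tearlier(\p_{\tearlier})$ (Lemma~\ref{lem:tirr}); and clock irrelevance (\ref{eq:cirr}) reduces, since $\forall\kappa$ is a $\Pi$-type over $\clk$ and $\kappa\notin\fc A$, to the claim that $\cwfev{\den t}{\den{\kappa'}}$ and $\cwfev{\den t}{\den{\kappa''}}$ agree as elements of the clock-invariant family $\den A$, which holds because $\den A \to \den A^{\clk}$ is an isomorphism~\cite{GDTTmodel}. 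The main obstacle is bookkeeping: carrying the clock-invariance strengthening through the whole simultaneous induction, and tracking the tick-weakening projections $\p_{\Gamma'}$ and the explicit clock substitutions surrounding $\tickc$ precisely enough that the naturality equations of Theorem~\ref{thm:dradjoint:gr:total} and the identity $\p_{\tearlier^\q}\circ d = \id$ apply on the nose.
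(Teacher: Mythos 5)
Your proposal is correct and follows essentially the same route as the paper: a simultaneous induction with Lemma~\ref{lem:substitution} and the projection lemmas, using Theorem~\ref{thm:dradjoint:gr:total} for the tick cases, Lemma~\ref{lem:fix:points} for fixed points, Proposition~\ref{prop:p:inverse} for $\tickc$, Theorem~\ref{thm:tlater:code} for $\latbindcode\tickA\kappa{(-)}$, Lemma~\ref{lem:tirr} for tick irrelevance, and clock invariance for clock irrelevance. The one small organizational difference is that the paper does not strengthen the main induction hypothesis with clock invariance, but instead establishes it afterwards as a standalone Lemma~\ref{lem:inv:clock:intro} by a separate induction on the structure of types; both arrangements work, and yours is otherwise identical to the paper's proof.
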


For the case of the clock irrelevance axiom we need the following lemma.

\begin{lem} \label{lem:inv:clock:intro}
 The interpretation of any type, if well-defined, is invariant under clock introduction in the sense of Definition~\ref{def:inv:clock:intro}.
\end{lem}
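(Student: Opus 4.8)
The plan is to argue by induction on the structure of the (annotated) pre-type $A$, assuming throughout that $\den{\istype{\Gamma}{A}}$ is defined; the interpretations of the immediate subexpressions occurring in the relevant clause of Figure~\ref{fig:interp} are then defined as well, so the induction hypothesis applies to them. The base and ``standard'' cases are routine: for $\nats$ the interpretation is the constant presheaf $\nats$, on which every clock-introducing inclusion acts as the identity, so invariance under clock introduction (Definition~\ref{def:inv:clock:intro}) is immediate; for $\Pi$-, $\Sigma$- and extensional identity types it follows from Lemma~\ref{lem:closure:iuci} together with the induction hypothesis on the component families. Universal quantification $\forall\kappa.A$ is modelled as a $\Pi$-type over $\clk$, and although $\clk$ is not itself invariant under clock introduction as a family, the corresponding result of~\cite{GDTTmodel} for $\Pi$-types over $\clk$ --- whose semantic content is exactly the clock irrelevance axiom --- shows that $\forall\kappa.A$ is invariant under clock introduction whenever the family interpreting $A$ over $\den{\wfcxt{\Gamma,\kappa:\clocktype}}$ is; the latter holds by the induction hypothesis.

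For the universe $\univ{\Delta}$ we have $\den{\istype{\Gamma}{\univ{\Delta}}} = \Usem{\den{\hasnotype\Gamma\Delta{}}}$, which by construction equals $\Usem{\Delta'}[\clkmap]$ for some surjection $\Delta' \to \den{\hasnotype\Gamma\Delta{}}$ and the induced map $\clkmap$. By Lemma~\ref{lem:closure:iuci} the object $\Usem{\Delta'}$ is invariant under clock introduction, and since invariance is closed under reindexing (also Lemma~\ref{lem:closure:iuci}) so is $\Usem{\den{\hasnotype\Gamma\Delta{}}}$. The case of $\elems{\Delta}(t)$ is analogous: $\den{\istype{\Gamma}{\elems{\Delta}{(t)}}}$ is a reindexing of $\Elsem{\den{\hasnotype\Gamma\Delta{}}}$, which is in turn a reindexing of $\Elsem{\Delta'}$, and $\Elsem{\Delta'}$ is invariant by Lemma~\ref{lem:closure:iuci}; two applications of closure under reindexing finish the case.

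The remaining and main case is the modal type $\latbind\tickA\kappa A$, whose interpretation is $\tlater^{\den{\hasnotype\Gamma\kappa{}}}\den{\istype{\Gamma,\tickA:\kappa}{A}}$; by the induction hypothesis $\den{\istype{\Gamma,\tickA:\kappa}{A}}$ is a family over $\tearlier^{\den{\hasnotype\Gamma\kappa{}}}\Gamma$ that is invariant under clock introduction (note that this makes sense even though that context need not be invariant). It therefore suffices to prove the sub-claim that $\tlater^\clkel B$ is invariant under clock introduction whenever $B$ is a family over $\tearlier^\clkel\Gamma$ that is. By~(\ref{eq:laterTy:eta}) we have $\tlater^\clkel B = (\tlaterTy[\clkel] B)[\eta]$, so by closure of invariance under reindexing (Lemma~\ref{lem:closure:iuci}) it is enough to check that $\tlaterTy[\clkel] B$ is invariant under clock introduction. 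This is exactly the computation performed at the beginning of the proof of Theorem~\ref{thm:tlater:code}: the action on $\tlaterTy[\clkel] B$ of a clock-introducing inclusion is, fibrewise, either the identity on a singleton (at time objects where the clock in question has no ticks left) or the action on $B$ of the clock-introducing inclusion obtained by decrementing that clock (where it has ticks left); the former is trivially an isomorphism, and the latter is one because $B$ is invariant under clock introduction. This completes the induction.

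The only case that requires genuine work is the modal type: it is the sole place where one cannot simply invoke a closure property but must unfold the definition of $\tlater^\clkel$ --- via its description as a reindexing of $\tlaterTy[\clkel]$ along $\eta$ --- and verify the two-case behaviour of clock-introducing inclusions. The $\forall\kappa$ case is delicate in principle, but is entirely covered by the cited result of~\cite{GDTTmodel}.
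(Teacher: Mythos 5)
Your proof is correct and follows essentially the same route as the paper's own (very terse) argument: structural induction, Lemma~\ref{lem:closure:iuci} for the closed-under-constructors cases and for the universes, and the computation at the start of the proof of Theorem~\ref{thm:tlater:code} for the modal case $\latbind\tickA\kappa A$, combined with the description $\tlater^\clkel B = (\tlaterTy[\clkel]B)[\eta]$ and closure under reindexing. The one place where you are more explicit than the paper is the $\forall\kappa.A$ case, which the paper silently folds into ``Lemma~\ref{lem:closure:iuci} in most cases'' even though the domain family $\clkfam$ is not itself invariant under clock introduction; your observation that this is exactly the clock-irrelevance content established in~\cite{GDTTmodel} is the right way to discharge it, and your remark that Definition~\ref{def:inv:clock:intro} deliberately makes sense for families over non-invariant contexts is precisely what makes the induction hypothesis usable over $\tearlier^\clkel\Gamma$ in the modal case.
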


\begin{proof}
 By an easy induction on the structure of types using Lemma~\ref{lem:closure:iuci} in most cases and arguments as in the proof
 of Lemma~\ref{thm:tlater:code} in the case of $\latbind\tickA \kappa A$.
\end{proof}

\begin{proofof}{Theorem~\ref{thm:soundness}}
 As mentioned above, the proof is by induction on typing judgements, and we just show the cases listed in 
 Figure~\ref{fig:clott:typing:annotated}. The cases of tick abstraction and application follow straightforwardly from the bijection of Theorem~\ref{thm:dradjoint:gr:total}.\ref{item:overline:nat}.
 In the case of application, Lemma~\ref{lem:substitution} and Lemma~\ref{lem:syntactic:proj} ensure that the element
 $\overline{\den{\hasnotype{\Gamma}{t}{\latbind{\tickA}{\kappa} A}}}[\p_{\Gamma'}]$, which a priori is an element of 
 $\den{\istype{\Gamma, \tickB: \kappa}{A\subst{\tickA}{\tickB}}}[\p_{\Gamma'}]$, is also an element of 
 $\den{\istype{\Gamma, \tickB: \kappa, \Gamma'}{A\subst{\tickA}{\tickB}}}$ as required. In the case of 
 application to $\tickc$, the expression
 $\overline{\den{\hasnotype{\Gamma}{t}{\latbind{\tickA}{\kappa} A}}}[d \circ \cpair{\id_{\den\Gamma}}{\den{\kappa'}}]$
 is by induction an element of $\den{\istype{\Gamma, \kappa : \clocktype, \tickA: \kappa}A}[d \circ \cpair{\id_{\den\Gamma}}{\den{\kappa'}}]$,
 which, since $d \circ \cpair{\id_{\den\Gamma}}{\den{\kappa'}} = \den{\subex{\id_\Gamma}{(\tickA, \kappa)}{(\tickc, \kappa')}}$, 
 by Lemma~\ref{lem:substitution} equals $\den{\istype{\Gamma}{A\tickcsub\tickA\kappa{\kappa'}}}$
  
 Most equalities are straightforward. For example, the $\eta$-rules for clock and tick abstraction follow from the bijective
 correspondence on elements in dependent right adjoints. The $\beta$-rules for these are similar, but involve the substitution
 lemma in the case of clocks and a simple check that the interpretation is invariant under renaming of ticks in the case of ticks.
 The case of the $\beta$-rule for application to $\tickc$, i.e., equation (\ref{eq:annotated:beta}), follows directly from  
 Lemma~\ref{lem:substitution}. 
 The judgemental equality for fixed point unfolding at $\tickc$ follows from the fixed point unfolding axiom (\ref{eq:pfix}), 
 since the model is extensional. To prove the latter it suffices to show that the interpretations of
 $\tabs\tickA\kappa{t(\dfix^\kappa t)}$ and $\dfix^\kappa t$ are equal, which follows from 
 Lemma~\ref{lem:fix:points}. For the interpretation of the tick irrelevance axiom (\ref{eq:tirr}) it suffices to show that 
 the interpretations of $\hastype{\Gamma, \tickA : \kappa, \tickA' : \kappa}{\tapp t}A$ and
 $\hastype{\Gamma, \tickA : \kappa, \tickA' : \kappa}{\tapp[\tickA'] t }A$ are equal. By definition 
 $\den{\tapp t} = \overline{\den t}[\p_{\tearlier^{\den{\kappa}}}]$ and by the substitution lemma 
 $\den{\tapp[\tickA'] t} = \overline{\den t[\p_{\tearlier^{\den{\kappa}}}]}$ which equals 
 $\overline{\den t}[\tearlier^{\den{\kappa}} (\p_{\tearlier^{\den{\kappa}}})]$, and so the equality
 follows from Lemma~\ref{lem:tirr}. The soundness of the clock irrelevance axiom (\ref{eq:cirr})
 follows from Lemma~\ref{lem:inv:clock:intro} as in~\cite{GDTTmodel}.
 
 The equalities in Figure~\ref{fig:universes} were proved sound in \cite{GDTTmodel}, 
 except the two involving $\latbindcode{\tickA}{\kappa}(-)$. For the first of these,
 note that by definition
\begin{align*}
 \den{\hasnotype{\Gamma, \tickA : \kappa}{\elems{\Delta}(A)}{}} & = 
 \Elsem{\den{\hasnotype{\Gamma, \tickA : \kappa}\Delta{}}}[\cpair{\id{}}{\den{\hasnotype{\Gamma, \tickA : \kappa}{A}{}}}] \\
 & \Elsem{\den{\hasnotype{\Gamma}\Delta{}}[\p_{\tearlier^{\den\kappa}}]}[\cpair{\id{}}{\den{\hasnotype{\Gamma, \tickA : \kappa}{A}{}}}]
\end{align*}
 and so by Theorem~\ref{thm:tlater:code}.\ref{item:tlater:code:1} we get
 \[
  \Elsem{\den{\hasnotype{\Gamma}\Delta{}}}[\cpair{\id}{\code{\tlater^{\den{\hasnotype{\Gamma}\kappa{}}}_{\den{\hasnotype{\Gamma}\Delta{}}}} \circ \overline{\den{\hasnotype{\Gamma, \tickA : \kappa}{A}{}}}}] 
 = \tlater^{\den{\hasnotype{\Gamma}\kappa{}}} 
 \den{\hasnotype{\Gamma, \tickA : \kappa}{\elems\Delta(A)}{}}
 \]
 From this we deduce
 \begin{align*}
  \den{\elems{\Delta}{(\latbindcodeAnn\Delta \tickA \kappa A)}} 
  & =  \den{\latbind\tickA\kappa \elems{\Delta}(A)} 
\end{align*}
by unfolding definitions.
 
 The second of these can be proved using  
 Theorem~\ref{thm:tlater:code}.\ref{item:tlater:code:3} as we now show. 
\begin{align*}
   \den{\univin{{\Delta}}{\Delta'}{(\latbindcodeAnn{\Delta} \tickA \kappa A)}} 
   & = \Usemin{}{\den\Delta}{\den{\Delta'}} \circ \code{\tlater^{\den{\kappa}}_{\den{\Delta}}}\circ \overline{\den A} \\
   & = \code{\tlater^{\den\kappa}_{\den{\Delta'}}}\circ 
   \overline{(\Usemin{}{(\den{\Delta}[\p_{\tlater^{\den\kappa}}])}{(\den{\Delta'}[\p_{\tlater^{\den\kappa}}])} \circ \den A)} \\
   & = \code{\tlater^{\den\kappa}_{\den{\Delta'}}}\circ 
   \overline{(\Usemin{}{\den{\Delta}}{\den{\Delta'}} \circ \den A)} \\
   & =\den{\latbindcodeAnn{\Delta'} \tickA \kappa{\univin{\Delta}{\Delta'}{(A)}}} 
\end{align*}
where the third equality uses $\den{\hasnotype\Gamma\Delta{}}[\p_{\tlater^{\den\kappa}}]
= \den{\hasnotype{\Gamma, \tickA : \kappa}\Delta{}}$ which follows from 
Lemma~\ref{lem:substitution}. 
\end{proofof}

\section{Conclusion and future work}
\label{sec:conclusion}

As mentioned in the introduction, the model presented here improves on the previous model of
\clott~\cite{conferenceversion} not only by fixing a mistake, but also by
greatly simplifying it. The simplification is made possible by using a single-context presentation 
of the syntax and an internal indexing of the dependent right adjoints over the clock object. 

The typing rule for application to $\tickc$ as presented in~\cite{bahr2017clocks} and in Section~\ref{sec:clott}
differs from the one of the elaborated syntax interpreted in the model, as presented in Section~\ref{sec:interp:syntax} by
replacing a substitution by a delayed one. This means that for terms $t,s$ such that 
$t \subst\kappa{\kappa'} = s \subst\kappa{\kappa'}$ the application of $t\subst\kappa{\kappa'}$ 
and $s\subst\kappa{\kappa'}$ to $\tickc$ in the original
syntax are literally the same, whereas they are different in the elaborate syntax, and could potentially
be different in the model. This means that the interpretation of a term $\tappc{t\subst\kappa{\kappa'}}$ involves a choice of $s$ or $t$ 
as above. We view this choice as part of the elaboration of terms, similarly to the choice of $\Pi$ type for 
application terms $t\, u$. 

We suspect that in most cases the choice mentioned above will not affect the interpretation of
a term, but
our attempts at proving that have failed. In particular, in the situation above, one can construct
a term $u$ such that $u\subst{\kappa, \kappa'}{\kappa_0,\kappa_1} = t$ and 
$u\subst{\kappa, \kappa'}{\kappa_1,\kappa_0} = s$ and use this to prove that the interpretations are equal. However,
it is unclear if $u$ is welltyped, and so the interpretation could fail to be welldefined.

The dependent right adjoints considered in this paper are all endo-adjunctions, i.e., the domain
and codomain of the left adjoint are the same. Multimodal Dependent Type Theory~\cite{gratzer2020multimodal} 
is a recent extension of the idea of Fitch-style modal types allowing also dependent adjunctions between 
different categories, and even 2-dimensional diagrams of these, referred to as \emph{mode theories}. 
This is a very general framework capturing many dependent type theories with modal operators, 
but since the parametrisation by mode theory is given externally, we suspect that the single context
presentation of \clott\ falls outside this framework. It would be interesting to see if there is a 
generalisation of Multimodal Dependent Type Theory that captures also single context \clott.

Our motivation for constructing this model is to study extensions of \clott. In particular, we would like to extend \clott\ with
path types as in~\cite{GCTT}. This requires an adaptation of the model to the cubical setting, 
using $\catT$-indexed
families of cubical sets~\cite{CTT} rather than just sets. % We leave this for future work. 
The resulting variant of \clott\ should be closer to the intensional type theory presented 
in~\cite{bahr2017clocks} than the extensional type theory modelled here. For example, fixed 
points should unfold only up to path equality. We also hope that formulating the clock irrelevance
axiom using path equality the universes can be modelled differently, simplifying the perhaps 
most technical part of the present model construction.  

\subsection*{Acknowledgements.} We thank Patrick Bahr for useful discussions. 

\bibliographystyle{alpha}
\bibliography{paper.bib}

%\newpage
%\input{dump}
\end{document}